\newcommand{\sss}{\scriptscriptstyle}
\DeclareMathOperator*{\argmax}{arg\,max}
\DeclareMathOperator*{\Ex}{\mathbb{E}}
\newcommand{\set}[1]{\{#1\}}
\newcommand{\high}[1]{#1^{\sss \mathrm{H}}}
\newcommand{\low}[1]{#1^{\sss \mathrm{L}}}
\newcommand{\vecb}{\mathbf{b}}
\newcommand{\vecx}{\mathbf{x}}
\newcommand{\vecp}{\mathbf{p}}
\declaretheorem[name={Example}]{example2}
\declaretheorem[name={Claim}]{claim2}
\title{Rethinking Pricing in Energy Markets:\\Pay-as-Bid vs Pay-as-Clear}
\author{Ioannis Caragiannis \and Zhile Jiang \and Stratis Skoulakis}
\institute{Department of Computer Science, Aarhus University\\
{\AA}bogade 34, 8200 Aarhus N, Denmark\\
\url{{iannis,zhile,stratis}@cs.au.dk}}
\begin{document}

\maketitle

\begin{abstract}
The design of energy markets is a subject of ongoing debate, particularly concerning the choice between the widely adopted Pay-as-Clear (PC) pricing mechanism and the alternative Pay-as-Bid (PB). These mechanisms determine how energy producers are compensated: under PC, all selected producers are paid the market-clearing price (i.e., the highest accepted bid), while under PB, each selected producer is paid their own submitted bid. The overarching objective is to meet the total demand for energy at minimal cost in the presence of strategic behavior. We present two key theoretical results. First, no mechanism can uniformly dominate PC or PB. This means that for any mechanism $\mathcal{M}$, there exists a market configuration and a mixed-strategy Nash equilibrium of PC (respectively for PB) that yields strictly lower total energy costs than under $\mathcal{M}$. Second, in terms of worst-case equilibrium outcomes, PB consistently outperforms PC: across all market instances, the highest possible equilibrium price under PB is strictly lower than that under PC. This suggests a structural robustness of PB to strategic manipulation. These theoretical insights are further supported by extensive simulations based on no-regret learning dynamics, which consistently yield lower average market prices in several energy market settings.
%
%
%
\end{abstract}

\section{Introduction}
In modern electricity markets, an Independent System Operator (ISO) coordinates energy production so that the overall production meets the energy demand~\cite{C03}. In particular, the ISO performs a procurement auction in which each electricity producer submits a bid specifying their production cost per kilowatt-hour (kWh) together with the maximum energy they can supply. The ISO then selects a set of producers that can collectively satisfy the overall demand at the lowest cost, subject to their supply constraints~\cite{C03,AMK24,MMK23}.

A key design choice in such procurement auctions is the pricing mechanism: \textit{how much producers are paid for the energy that they produce?} Two mechanisms have been mainly considered, Pay-as-Bid (PB)~\cite{VLM21} and Pay-as-Clear (PC)~\cite{C03}. In PB, the selected producers are paid according to their bid. In PC, they are paid a uniform price per kWh that is equal to the highest accepted bid. The choice between these two pricing mechanisms has been a large debate in energy economics and market design. Today, PC has vastly dominated over PB in both US and EU energy markets with the basic argument being the fact that PC promotes truthful bidding. For example, quoting from the main website of the European Commission on electricity market design~\cite{EU}: \textit{``This model (pay-as-clear) provides efficiency, transparency and incentives to keep costs as low as possible. [...] The alternative would not provide cheaper prices. In the pay-as-bid model, producers (including cheap renewables) would simply bid at the price they expect the market to clear, not at zero or at their generation costs. Overall, it is better for consumers to have a transparent model that reveals the true costs of energy and provides incentives for individuals to become active in generating their own electricity.''} 

Despite the aforementioned debate and several papers (see Section~\ref{sec:related}) discussing aspects of mechanisms PB and PC, there are very few related game-theoretic results. Our work aims to provide a game-theretic study, first motivated by the following natural questions: 
\begin{question}\label{q:1}
Does PC really lead to truthful bidding? Do PB and PC actually lead to similar energy prices? 
\end{question}
Since producers repeatedly adjust their bids based on the past bids of their competitors in energy markets, we use the notion of {\em Nash equilibrium} (NE) to capture their long-run behavior when interacting with pricing mechanisms~\cite{AMK24}.

\subsection{Our Contribution}

Taking a step away from PB or PC, the minimum property that a pricing mechanism $\mathcal{M}$ should satisfy is that it should not be \textit{strongly dominated} by any other mechanism. This means that there should not be any other mechanism leading to lower energy prices than $\mathcal{M}$ in all energy markets. Our first main contribution is that \ul{neither PC, nor PB, nor the well-known truthful Vickrey-Clarke-Groves~(VCG) mechanism can be strongly dominated}. This is formally stated as Theorem~\ref{t:1}. On the positive side, Theorem~\ref{t:1} establishes that there is no mechanism strongly dominating PC, which is an important property for the most widely deployed mechanism in practice. Interestingly, Theorem~\ref{t:1} establishes the same property for PB and VCG as well. Furthermore, our proof indicates that there always exist mixed NE of PB and PC that yield higher or lower energy prices. This raises the following key question:
\begin{question}\label{q:2}
Are there any formal advantages of Pay-as-Clear over Pay-as-Bid or vice versa? 
\end{question}

Despite the fact that PC is the mechanism that is actually used in practice, we provide an affirmative answer on the above question in favor of PB. In particular, our second main finding is that \ul{the worst-case NE of PB always leads to smaller energy prices than the worst-case NE of PC}. This is formally stated as Theorem~\ref{t:2} and is the most interesting contribution of the paper (both technically and conceptually). Informally, when comparing PB and PC with respect to their worst-case outcome, PB is always the best choice. 

Interestingly, the above is not true once comparing PC vs VCG. In Theorem~\ref{t:VCG}, we provide instances of energy markets with $n$ producers where VCG yields an average energy price that is $\Theta(\log n)$ times higher than the worst-case energy price of PC. Additionally, en route to establishing our results, we provide a detailed analysis of both PB and PC, revealing properties such as the non-truthfulness of PC and PB, the existence of pure NE for PC, the non-existence of pure NE for PB, and more.

In Section~\ref{s:exp}, we experimentally compare the two mechanisms PB and PC when the energy producers repeatedly use online learning to select their bids. Our experimental results suggest that PB consistently leads to significantly lower average energy prices. This behavior also appears in energy markets where the best-case mixed NE of PC is lower than the worst-case mixed NE of PB.

\subsection{Related Work}\label{sec:related}
Regarding debates around PB and PC, the literature is extensive, even though rigorous comparisons of related game-theoretic models is limited. Cramton~\cite{C03} provides an influential overview of electricity market design, highlighting trade-offs between efficiency, transparency, and incentive alignment in PC pricing. Kahn et al.~\cite{KCPT01} critically evaluate the proposed shift from PC to PB in the California Power Exchange around 2000, arguing that PB would not yield lower consumer prices. Akbari-Dibavar et al.~\cite{AM20} offer a comparative analysis of PC (referred to as ``uniform pricing'' in their work) and PB, focusing on their implications for market performance and bidding behavior. Guerci and Rastegar~\cite{G13} simulate realistic electricity markets to compare PC and PB, showing how each performs under different market conditions. Son et al.~\cite{SB04} analyze short-term electricity auction games, concluding that strategic bidding affects pricing outcomes differently under PC and PB rules. Skoulidas et al.~\cite{S02} empirically study how PC and PB affect power pool dynamics. Ren and Galiana present a two-part study, modeling strategic producer behavior under PB and PC~\cite{RG04}, and evaluating their equilibrium outcomes~\cite{RG02b}. Nazemi and Mashayekhi~\cite{NM15} study Iran’s restructured electricity market, drawing attention to strategic challenges under different pricing rules. David and Wen~\cite{DF02} investigate the exercise of market power in electricity supply and the regulatory implications of pricing. Bajpai and Singh~\cite{BS04} survey key issues in bidding in electricity markets, providing a broad context for strategic manipulation. Fabra et al.~\cite{FNH02} develop models for electricity auctions to compare outcomes under PC and PB pricing, focusing on producer incentives and efficiency. Finally, Maurer and Barroso~\cite{MB11} discuss best practices in electricity auction design, covering pricing mechanisms.

Mostly related to our work is the paper by Fabra et al.~\cite{FHH06}, who compare PB and PC in the case of two producers. Fabra et al.~\cite{FHH06} show that, in this special case, PC admits a unique pure Nash equilibrium while PB leads lower energy prices. Our Theorem~\ref{t:2} essentially generalizes this result for many producers.

There is also a line of research comparing PB and PC in the Bayesian symmetric case. Federico et al.~\cite{GD03} show that for infinitely many producers with i.i.d. marginal costs and linearly decreasing energy demand, PB leads to lower prices than PC. H\"{a}st\"{o} et al.~\cite{HH14} consider the case where all producers admit the same cost-production curve while the total energy demand is stochastic. They show that both PC and PB admit a unique closed-form supply function equilibrium with the average energy price of PB being lower than that of PC. Ocker et al.~\cite{OEB18} compare PB and PC in the context of a Secondary Balancing Market~(SBM). In SBM, each producer has a two-dimensional bid composed by an energy bid and power bid. Ocker at al.~\cite{OEB18} consider the PB/PB and the PB/PC pricing rules where the energy price is always determined by PB while the power price is respectively determined by PB and PC. They show that under i.i.d. assumptions, there exists a unique Bayes-Nash equilibrium, which leads to lower energy prices for PC. On the experimental front, Viehmann
et al. \cite{VLM21} and Liu et al.~\cite{LYSZP12} compare PB with PC when producers select their bids according to a Q-learning algorithm and show that PC leads to higher prices.

\section{Preliminaries}
An \textit{energy market} consists of $n$ energy {\em producers} (or {\em agents}). Each producer $i\in [n]$ can supply up to $s_i\in (0,1]$ units of energy and has an integer {\em marginal cost} of $c_i\in \{0,1, ..., M\}$ per unit of energy produced.  We use $[M]$ as an abbreviation for the set of integers $\{0,1, ..., M\}$. The supply of each producer is {\em publicly known} while the marginal cost is {\em private information} of the producer.  

A mechanism $\mathcal{M}$ takes as input a bid $b_i$ submitted from each producer $i\in [n]$, interpreted as her marginal cost. The mechanism $\mathcal{M}$ takes as input the {\em bid profile} $\vecb = (b_1, ..., b_n)$ and decides the amount of energy $x_i(\vecb) \in [0,1]$ that will be bought from producer $i \in [n]$ as well as the corresponding price per unit $p_i(\vecb) \in [M]$. The mechanism $\mathcal{M}$ ensures that
\begin{itemize}
    \item the total amount of energy that will be bought from all producers is equal to $1$, i.e., $\sum_{i\in [n]}{x_i(\vecb)}=1$, so that the amount to be bought from producer $i\in [n]$ does not exceed her maximum supply, i.e., $0\leq x_i(\vecb)\leq s_i$ ({\em feasibility constraints}), and
    \item for every agent $i\in [n]$ who sells a non-zero amount of energy, the corresponding price is not lower than the agent's reported marginal cost, i.e., $p_i(\vecb)\geq b_i$ ({\em individual rationality}).
\end{itemize}
We define the {\em average unit price} (or, simply, unit price) of mechanism $\mathcal{M}$ as $p^{\sss \mathrm{unit}}_{\mathcal{M}}(\vecb):= \sum_{i \in [n]}x_i(\vecb) \cdot p_i(\vecb)$; this is equal to the total amount of money spent to cover the whole energy demand.

We use the term {\em allocation} to refer to the vector $\mathbf{x}(\vecb)=(x_1(\vecb), ..., x_n(\vecb))$. We consider mechanisms that compute {\em cost-minimizing} allocations. For their definition, we use the order relation $i\succ_\vecb j$ to denote that either $b_i<b_j$ or $b_i=b_j$ and $i<j$.

\begin{definition}[market clearing price and cost-minimizing allocation]\label{def:allocation-and-clearing-price} Given a bidding profile $\vecb$, the {\em pivotal agent} $\tau(\vecb)$ is the first agent in the ascending order $\prec_\vecb$ covering the total energy demand, i.e., $s_{\tau(\vecb)}+\sum_{j\prec_\vecb \tau(\vecb)}{s_j}\geq 1$ and $s_{i}+\sum_{j\prec_\vecb i}{s_j}< 1$ for every $i\prec_\vecb \tau(\vecb)$. The {\em market clearing price} $q(\vecb):=b_{\tau(\vecb)}$ is the bid of the pivotal agent. The {\em cost-minimizing allocation} $\widehat{\mathbf{x}}(\cdot)$ is then defined as
    \begin{align*}
        \widehat{x}_i(\vecb)=
        \begin{cases}
        s_i, &i \prec_\vecb \tau(\vecb)\\
        1-\sum_{j \prec_\vecb \tau(\vecb)  } s_j, & \tau(\vecb)=i\\
        0, & \tau(\vecb) \prec_\vecb i
        \end{cases}.
    \end{align*}
\end{definition}

Mechanisms Pay-as-Bid~(PB), Pay-as-Clear~(PC), and Vickrey-Clark-Groves~(VCG) use the cost-minimizing allocation $\widehat{\vecx}(\vecb)$ but different pricing rules $\mathbf{p}(\vecb)$. PB uses the simplest pricing mechanism where the per unit price paid to producer $i\in [n]$ equals her bid, i.e., $p^{\sss \mathrm{(PB)}}_i(\vecb)=b_i$. PC pays to every agent the market clearing price $q(\vecb)$ per unit of energy, i.e., $p^{\sss \mathrm{(PC)}}_i(\vecb)=q(\vecb)$. For VCG, the pricing function for each producer $i \in [n]$ is defined as
    \begin{align*}
        p^{\sss \mathrm{(VCG)}}_i(\vecb)=\begin{cases}
            \frac{\sum_{j \in [n]/\{i\}} \widehat{x}_j(\vecb_{-i})\cdot b_j - \sum_{j \in [n]/\{i\}}  \widehat{x}_j(\vecb)\cdot b_j}{\widehat{x}_i(\vecb)},&\text{if }\widehat{x}_i(\vecb)\neq0,\\
            0,&\text{otherwise}.
        \end{cases}
    \end{align*}
Hence, VCG pays each agent an amount equal to the decrease in social cost caused by the presence of the agent. Excellent introductions to VCG can be found in~\cite{R16} and~\cite{MWG95}.

\begin{example2}
\label{example:allocation}
Consider the energy market with $n = 4$ producers with supplies $s_1=1/3,s_2=1/2,s_3=1/4, s_4=2/3$ and marginal costs $c_1=0,c_2=1,c_3=2, c_4=3$. If every producer bids her marginal cost, then $\widehat{\mathbf{x}}(\cdot)$ will first buy $1/3$ from agent $1$ and $1/2$ from agent $2$ (matching their maximum supplies), and the remaining $1/6$ from agent $3$. In this case, PB pays $0$ per unit to agent $1$, $1$ per unit to agent $2$, $2$ per unit to agent $3$, and $3$ per unit to agent $4$. PC pays every agent the market clearing price, which is $2$ per unit. And VCG pays $11/4$ per unit to agent $1$, since the cost to other agents is $1\cdot1/2+2\cdot1/6$ when agent $1$ stays in the market and increases to $1\cdot1/2+2\cdot1/4+3\cdot1/4$ when agent $1$ leaves. By similar calculations, the price per energy unit is $17/6$, $1$, and $0$ to agent $2$, $3$, and $4$, respectively.
\end{example2}


\subsection{From Mechanism to Games}
Any mechanism $\mathcal{M}$ induces a {\em finite-action game}, where each producer $i \in [n]$ behaves strategically and selects her bid $b_i$ among the bid values in $[M]$ so as to maximize her {\em individual revenue} $U_i^\mathcal{M}(\vecb) :=  (p_i(\vecb) - c_i )\cdot x_i(\vecb)$.

\begin{definition}[mixed Nash equilibrium]
\label{def:mne}
Given a mechanism $\mathcal{M}$, a mixed Nash equilibrium is a collection of independent probability distributions $\mathbf{\sigma}:= (\sigma_1,\ldots,\sigma_n)$ over the possible bid values in $[M]$, such that for each agent $i \in [n]$,
\begin{align*}
    \Ex_{\vecb \sim \sigma}[U_i^\mathcal{M}(\vecb)] \geq \max_{b'_i\in [M]} \Ex_{\vecb_{-i} \sim \sigma_{-i}}[U_i^\mathcal{M}(b'_i,\vecb_{-i})].
\end{align*}
Given an instance $\mathcal{I}:=\{(s_i,c_i)\}_{i \in [n]}$ of an energy market, we denote by $\mathrm{MNE}(\mathcal{M},\mathcal{I})$ the set of all mixed Nash equilibria of mechanism $\mathcal{M}$ for instance $\mathcal{I}$.
\end{definition}
Since we consider finite games, any mechanism $\mathcal{M}$ admits at least one mixed Nash equilibrium. 

\begin{definition}[pure Nash equilibrium]
\label{def:pne}
Given a mechanism $\mathcal{M}$, a pure Nash equilibrium is a bidding profile $\vecb$ such that for each agent $i \in [n]$,
\begin{align*}
    U_i^\mathcal{M}(\vecb) \geq \max_{b'_i\in [M]} U_i^\mathcal{M}(b'_i,\vecb_{-i}).
\end{align*}
\end{definition}
It is well-known that VCG has the truthful bid profile $(c_1, c_2, ..., c_n)$ as pure Nash equilibrium.

We conclude this section by defining two different ways of comparing mechanisms in terms of the unit price at their mixed Nash equilibria.
\begin{definition}[dominance relations between mechanisms]\label{d:domination}
A mechanism $\mathcal{M}$ {\em strongly dominates} mechanism $\mathcal{M}'$ if and only if
\[ \max_{\sigma \in \mathrm{MNE}(\mathcal{M},\mathcal{I}) } p^{\sss\mathrm{unit}}_{\mathcal{M}}(\sigma) \leq  \min_{\sigma \in \mathrm{MNE}(\mathcal{M}',\mathcal{I}) } p^{\sss\mathrm{unit}}_{\mathcal{M}'}(\sigma) ~~\text{for every instance }\mathcal{I}.  \]
A mechanism $\mathcal{M}$ {\em weakly dominates} mechanism $\mathcal{M}'$ if and only if
\[ \max_{\sigma \in \mathrm{MNE}(\mathcal{M},\mathcal{I}) } p^{\sss\mathrm{unit}}_{\mathcal{M}}(\sigma) \leq  \max_{\sigma \in \mathrm{MNE}(\mathcal{M}',\mathcal{I}) } p^{\sss\mathrm{unit}}_{\mathcal{M}'}(\sigma) ~~\text{for every instance }\mathcal{I}.  \]
\end{definition}

\noindent Less formally, when mechanism $\mathcal{M}$ strongly dominates mechanism $\mathcal{M}'$, then in every instance, the unit price at any mixed NE of $\mathcal{M}$ is at most as high as the unit price of any mixed NE of $\mathcal{M}'$. In case of weak dominance, there exists a mixed NE of $\mathcal{M}'$ that has at least as high unit price as any mixed NE of $\mathcal{M}$.

\section{Overview of Technical Results}
Our first main result establishes that PC is not strictly dominated by any other mechanism. However, we show that this also holds for PB and VCG.

\begin{restatable}{theorem}{Tnodominating}
\label{t:1}
There is no mechanism strictly dominating Pay-as-Clear, Pay-as-Bid, or VCG.
\end{restatable}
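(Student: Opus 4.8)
The plan is to prove each of the three non-domination statements separately, in each case exhibiting a single instance $\mathcal{I}$ on which the target mechanism's worst mixed NE is strictly cheaper than \emph{every} mixed NE of any competing mechanism $\mathcal{M}$. The unifying idea: on a carefully chosen instance, every mechanism $\mathcal{M}$ is forced (by feasibility and individual rationality) to buy a fixed positive amount of energy from some producer at a unit price at least some lower bound $L_{\mathcal{M}}$, and we pick the instance so that the target mechanism (PC, PB, or VCG) has \emph{all} of its mixed NE strictly below that floor. Concretely, I expect the relevant extremal instances to be very small --- two producers, one with cost $0$ and one with a high cost --- tuned so that the target mechanism has a unique, well-understood equilibrium whose price is provably minimal.

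For PC: take $n=2$, $c_1 = 0$, $c_2 = M$, with supplies $s_1, s_2$ chosen so that producer $1$ alone cannot cover demand ($s_1 < 1$) but together they can. Then in \emph{any} mechanism $\mathcal{M}$, producer $2$ must sell $x_2 = 1 - s_1 > 0$ units, and individual rationality forces $p_2 \ge b_2 \ge c_2$ only after we show producer $2$ has a profitable deviation to bid high --- more carefully, I would argue that in any mixed NE of $\mathcal{M}$, producer $2$'s expected payment per unit is bounded below by a positive constant, because producer $2$ can always guarantee revenue by bidding its true cost $M$ when it is pivotal. Meanwhile PC on this instance has a pure NE (existence of pure NE for PC is asserted in the paper) and I would show its cheapest/worst equilibrium price is strictly below that bound; in fact, for PC with a single cheap producer one can compute the equilibrium price explicitly. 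The analogous instance works for PB: the paper states PB has \emph{no} pure NE, and for two producers the mixed NE of PB (following Fabra et al.~\cite{FHH06}) is known in closed form with expected price strictly below what any mechanism can achieve on an instance where the expensive producer is indispensable. For VCG one uses the truthful pure NE: pick an instance where the VCG prices are low (e.g., many cheap producers so that removing any one barely changes social cost) but where any competing mechanism has an equilibrium that is forced to be more expensive --- here the natural comparator instance is one with, say, several producers of cost $0$ and tight supplies, so VCG pays essentially $0$ while, say, PB or PC have equilibria bid up to positive prices; but since we need domination failure against \emph{all} $\mathcal{M}$, the instance must instead be one where VCG's unique NE price undercuts the IR/feasibility floor that binds every mechanism, which again points to an instance with an indispensable cheap-ish producer.

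The key structural lemma I would isolate and prove first is: \emph{for every mechanism $\mathcal{M}$ and every instance, and every producer $i$ who is indispensable (demand cannot be met without buying a positive amount $\delta_i > 0$ from $i$), in every mixed NE of $\mathcal{M}$ the expected revenue of $i$ is at least $\delta_i \cdot (c_i - \text{something})$}, or more simply that $i$ can guarantee itself a positive payoff by bidding truthfully, hence $\max_{\sigma} p^{\mathrm{unit}}_{\mathcal{M}}(\sigma) \ge$ some explicit positive quantity depending only on $\mathcal{I}$, not on $\mathcal{M}$. Armed with this universal lower bound, non-domination reduces to the (instance-specific, mechanism-specific) computation that PC, PB, and VCG each admit instances where \emph{all} their equilibria fall strictly below the bound. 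Note the quantifier structure of strong domination is favorable to us: to refute ``$\mathcal{M}$ strongly dominates PC'' we only need one instance where $\max_\sigma p^{\mathrm{unit}}_{\mathcal{M}} > \min_\sigma p^{\mathrm{unit}}_{\mathrm{PC}}$, i.e., where $\mathcal{M}$ has \emph{some} expensive equilibrium while PC has \emph{some} cheap one --- so it suffices to find, for PC, an instance with a cheap equilibrium (the pure NE at low prices), and separately observe that on the \emph{same} instance \emph{any} $\mathcal{M}$ is forced by the indispensable-producer argument to have an equilibrium at least as expensive.

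The main obstacle, I expect, is the VCG case and, more subtly, making the ``every mechanism is forced to be expensive'' direction fully rigorous: a mechanism is only constrained by feasibility and individual rationality at the \emph{realized} bid profile, so it could in principle pay the indispensable producer exactly its bid. The real leverage is strategic: the indispensable producer, knowing it will be bought from in positive quantity regardless, has a best response that bids high, so \emph{no} mixed NE can have that producer bidding near its true low cost when it is genuinely indispensable and its marginal payment is its own bid (as in PB) or the clearing price it can unilaterally raise (as in PC) --- but for a \emph{general} $\mathcal{M}$ the pricing rule is arbitrary, so I would need the universal bound to come purely from individual rationality plus the fact that an indispensable producer bidding $M$ forces $p_i \ge M$ on that deviation, giving deviation payoff $\delta_i(M - c_i)$, hence equilibrium payoff $\ge \delta_i(M-c_i)$, hence total payment $\ge \delta_i \cdot M \ge$ a positive constant. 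Getting the arithmetic so that this constant strictly exceeds the target mechanism's best equilibrium price on the same instance is the crux, and is where the explicit two-producer computations of \cite{FHH06} and of the PC/VCG equilibria will be needed.
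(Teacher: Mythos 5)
There is a genuine gap, and it is structural. Your plan is to find an instance with an indispensable producer and derive a price floor $L>0$ that binds \emph{every} mechanism via feasibility plus individual rationality plus the deviation-to-$M$ argument, and then to show the target mechanism (PC, PB, or VCG) has an equilibrium strictly below $L$. But that floor is a lower bound on \emph{all} mixed NE of \emph{all} mechanisms satisfying the model's constraints --- and PC, PB, and VCG are such mechanisms. So on any instance where your lemma gives $\min_{\sigma\in\mathrm{MNE}(\mathcal{M},\mathcal{I})}p^{\sss\mathrm{unit}}_{\mathcal{M}}(\sigma)\ge L$ for every $\mathcal{M}$, it also gives $\min_{\sigma\in\mathrm{MNE}(\mathrm{PC},\mathcal{I})}p^{\sss\mathrm{unit}}_{\mathrm{PC}}(\sigma)\ge L$, and the separation $\max_{\mathrm{MNE}(\mathcal{M})}>\min_{\mathrm{MNE}(\mathrm{PC})}$ you need never follows. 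Your concrete instance illustrates the problem: with $c_1=0$, $c_2=M$, $s_1<1$, producer $2$ is forced to sell $1-s_1$ at price $\ge M$ in every equilibrium of PC (which then pays \emph{everyone} the clearing price $M$) and of PB (where producer $1$'s best response to $b_2=M$ is itself $M$ under the index tie-break), so the target mechanisms are just as expensive as the floor, not cheaper.

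The paper goes the opposite way: it picks an instance with \emph{no} indispensable producer --- two producers with cost $0$ and supply $1$ each --- so that $(0,0)$ is a pure NE of PC, PB, and VCG with unit price $0$, and then proves the genuinely hard statement that \emph{every} mechanism admits at least one mixed NE with strictly positive unit price on this instance. That existence claim cannot come from a uniform floor (there is none here: the all-zero profile has zero payment); it is obtained by perturbing the costs to $\epsilon_k>0$, taking a mixed NE $\sigma^k$ of each perturbed game, passing to a limit $\sigma^\star$, and --- in the delicate case where one agent's limiting strategy collapses onto the bid $0$ --- conditioning the other agent's strategy on the bids that yield positive payment, showing the conditioned profile is an approximate NE with payment bounded below by a mechanism-dependent constant, and taking limits again. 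This limit-of-perturbed-equilibria construction is the entire content of the theorem and is absent from your proposal; without it (or some substitute argument that every mechanism has \emph{some} expensive equilibrium on an instance where the target has a cheap one), the approach cannot be completed.
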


Despite the fact that none among PB, PC, and VCG can be strictly dominated by any other mechanism, there are still interesting comparisons among them. Our second main result establishes that PB weakly dominates PC. 

\begin{theorem}\label{t:2}
Pay-as-Bid weakly dominates Pay-as-Clear. At the same time, Pay-as-Clear does not weakly dominate Pay-as-Bid.
\end{theorem}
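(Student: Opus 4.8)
The plan is to treat the two assertions separately, with essentially all the work going into ``PB weakly dominates PC''.

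\medskip\noindent\emph{PC does not weakly dominate PB.} Here it is enough to exhibit one instance $\mathcal I$ with $\max_{\sigma\in\mathrm{MNE}(\mathrm{PC},\mathcal I)}p^{\sss\mathrm{unit}}_{\mathrm{PC}}(\sigma)>\max_{\sigma\in\mathrm{MNE}(\mathrm{PB},\mathcal I)}p^{\sss\mathrm{unit}}_{\mathrm{PB}}(\sigma)$. I would take $n=2$, $s_1=s_2=2/3$, $c_1=0$ and $c_2=c$ with $M$ large and $2\le c\le M-3$; then both producers appear in every cost-minimizing allocation, so $\widehat x_1(\vecb),\widehat x_2(\vecb)\ge 1/3>0$ for all $\vecb$. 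First I would check that $\vecb=(0,M)$ is a pure NE of PC: producer $1$ is the big supplier, receives the clearing price $M$ for $\widehat x_1=2/3$ units and cannot improve, while producer $2$ is the pivotal, receives $M$ for $1/3$ units and cannot undercut producer~$1$'s bid of $0$; hence the left-hand side above equals $M$. On the PB side, since both $\widehat x_i$ are always positive, $p^{\sss\mathrm{unit}}_{\mathrm{PB}}(\vecb)=M$ forces $b_1=b_2=M$, so a PB equilibrium can attain price $M$ only if $(M,M)$ is itself a pure NE of PB; but it is not, because at $(M,M)$ the producer currently serving as the pivotal strictly prefers to drop its bid by one unit and thereby become the big supplier (one checks the gain is $(M-2-c)/3>0$). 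As the set of mixed NE is compact and the price is continuous on it, the right-hand side is therefore strictly below $M$.

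\medskip\noindent\emph{PB weakly dominates PC.} The first, pointwise, observation is that every selected producer bids at most the clearing price, so for every profile $\vecb$,
\[
p^{\sss\mathrm{unit}}_{\mathrm{PB}}(\vecb)=\sum_i b_i\,\widehat x_i(\vecb)\le q(\vecb)\sum_i\widehat x_i(\vecb)=q(\vecb)=p^{\sss\mathrm{unit}}_{\mathrm{PC}}(\vecb).
\]
Hence for any mixed NE $\sigma$ of PB, $\Ex_{\vecb\sim\sigma}[p^{\sss\mathrm{unit}}_{\mathrm{PB}}(\vecb)]\le\Ex_{\vecb\sim\sigma}[q(\vecb)]$, and it suffices to produce, for every instance, a PC equilibrium whose clearing price is at least the worst-case expected PB price. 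I would first pin down the worst-case equilibrium clearing price $W_{\mathrm C}(\mathcal I)$ of PC by exhibiting an explicit extremal pure NE and arguing no PC equilibrium beats it. When no producer is dispensable — i.e.\ for every $i$ the remaining producers alone cannot meet demand — the profile in which one producer bids $M$ and everyone else bids $0$ is a PC equilibrium (the $M$-bidder is trapped as the pivotal and cannot be undercut, and the $0$-bidders are trapped as infra-marginal suppliers), so $W_{\mathrm C}(\mathcal I)=M$ and there is nothing to prove since PB prices never exceed $M$. In general the extremal PC equilibrium has the same shape: a low-bidding ``anchor'' producer serves as the main supplier (so it cannot be undercut), the pivotal producer bids as high as incentives allow, and the rest are excluded; the binding constraint is that no excluded producer can profitably undercut, either by sliding in as an infra-marginal producer collecting the high clearing price or by becoming the new pivotal at one unit less. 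This yields a closed form for $W_{\mathrm C}(\mathcal I)$.

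\medskip The remaining step — and the main obstacle — is to prove that every mixed NE $\sigma$ of PB satisfies $\Ex_{\vecb\sim\sigma}[p^{\sss\mathrm{unit}}_{\mathrm{PB}}(\vecb)]\le W_{\mathrm C}(\mathcal I)$. The difficulty is exactly that PB need not admit a pure NE, so one cannot reason about a single profile and must instead control an arbitrary product distribution. I would argue by contradiction: assuming the expected PB price exceeds $W_{\mathrm C}(\mathcal I)$ (in particular $W_{\mathrm C}(\mathcal I)<M$), I would single out a producer $i$ — morally, one that is ``excluded or near-pivotal with large enough cost'', mirroring the producer that caps the PC price — and a deterministic undercut bid for which, by linearity of expectation over $\vecb_{-i}\sim\sigma_{-i}$ and the monotonicity of $\widehat x_i$ in $b_i$, the expected gain is strictly positive, contradicting the equilibrium condition. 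Showing that such a deviation always exists and is strictly improving on average, across all realizations of the opponents' randomness, is the technical heart of the argument; I expect it to reuse, in expectation, precisely the undercutting logic that pins down $W_{\mathrm C}(\mathcal I)$ on the PC side.
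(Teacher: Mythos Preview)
Your two-producer instance for ``PC does not weakly dominate PB'' is correct and in fact more direct than the paper's route (which extracts the strict gap from the general structure result Theorem~\ref{thm:UB-PB-better}); the compactness observation neatly closes the argument.

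The plan for ``PB weakly dominates PC'' has a genuine gap. The paper does follow your high-level template---pin down a threshold, exhibit a PC pure NE at that level, show every PB mixed NE stays at or below it---but the threshold is a specific, non-obvious instance parameter that you never identify: $\max_{i\preceq_\mathbf{c}\tau(\mathbf{c})}\high{b}_i$, where $\high{b}_i$ is agent $i$'s highest best response when the others bid (approximately) their true costs. Without naming this quantity, your step~2 remains a description rather than a closed form, and your step~3 points at the wrong producer. You propose that an ``excluded or near-pivotal with large enough cost'' agent undercuts; the paper instead looks at the \emph{allocated} low-cost agents $i\preceq_\mathbf{c}\tau(\mathbf{c})$ and singles out the one whose equilibrium bid $t$ is highest in the support. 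The two ingredients you are missing are: (i) every agent can guarantee expected utility at least $\max_{b'_i}U_i(b'_i,\mathbf{c}_{-i})$ in \emph{any} PB equilibrium (Lemma~\ref{lem:least-utility-PB}), because $\widehat x_i$ is monotone in the other bids and all agents bid at least their cost; and (ii) for that top-bidding low-cost agent, when she bids $t$ the remaining low-cost agents all bid below $t$, so her allocation---and hence her utility at $t$---is the same as if they had bid their true costs (up to a $\pm 1$ perturbation handled by the $\mathbf{d}_{-i}\in\{0,1\}^{n-1}$ in the definition of $\high{b}_i$). Combining (i) and (ii) forces $t$ itself to be a best response against near-truthful opponents, hence $t\le\high{b}_i$. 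Your undercutting-by-an-outsider intuition does not deliver this: what caps the PB price is not that an excluded high-cost producer could enter, but that an already-allocated low-cost producer could not be bidding so high while still best-responding to opponents who, from her vantage point as the pivotal, look truthful.
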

Theorem~\ref{t:2} establishes that in every energy market, the unit price of any mixed NE of PB is always lower than the unit price of the worst-case mixed NE of PC. At the same time, Theorem~\ref{t:2} excludes the opposite direction.

\begin{remark}
We remark that in all non-degenerate energy markets, the unit price of the worst-case NE of PC is actually {\em strictly smaller} than the unit price the worst-case NE of PB.
\end{remark}


\noindent Theorem~\ref{t:2} is directly implied by Theorems~\ref{t:3} and~\ref{t:PB}, which characterize the unit prices of mixed NE for PC and PB, respectively. Both Theorem \ref{t:3} and Theorem \ref{t:PB} are based on instance-dependent parameters and are presented in the next section.

Interestingly enough, VCG does not weakly dominate PC and can in fact lead to a considerably higher unit price. This is formally established in Theorem~\ref{t:VCG}.

\begin{restatable}{theorem}{TVCG}
\label{t:VCG}
VCG does not weakly dominate Pay-as-Clear. There is an instance with $n$ agents where the unit price of VCG is $\Theta(\log n)$ times the unit price of the worst mixed NE of Pay-as-Clear. 
\end{restatable}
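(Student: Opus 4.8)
The plan is to exhibit one explicit family of instances and evaluate both sides directly. For $n\ge 2$ let $\mathcal I_n$ have an \emph{incumbent} agent (given the smallest index, so that it precedes every other agent at equal bids) of cost $0$ and supply $1$, together with $n$ \emph{fringe} agents, the $i$-th having cost $i$ and supply $s_i=c/i^2$, where $c=\bigl(\sum_{j=1}^n 1/j^2\bigr)^{-1}$ makes $\sum_{i=1}^n s_i=1$. The total supply is $2$, so removing any agent leaves a feasible market and all VCG prices are defined; moreover $s_1=c\ge 6/\pi^2>\tfrac12$ for every $n$, which we use below.

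For VCG I would evaluate the unit price at the truthful (dominant-strategy) profile, which the excerpt records is a Nash equilibrium of the VCG game. There the cost-minimizing allocation buys all the demand from the incumbent, so $\mathrm{OPT}=0$ and the incumbent has $\widehat x=1$ while every fringe agent has $\widehat x=0$. Removing the incumbent forces the fringe to meet the demand, and since $\sum_i s_i=1$ this uses every fringe agent in full; hence the incumbent's per-unit VCG price is $\mathrm{OPT}_{-\text{inc}}-\mathrm{OPT}=\sum_{i=1}^n i\cdot s_i=c\sum_{i=1}^n 1/i=\Theta(\log n)$, so the VCG unit price on $\mathcal I_n$ is $1\cdot\Theta(\log n)=\Theta(\log n)$.

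For Pay-as-Clear I would show that every mixed Nash equilibrium of PC on $\mathcal I_n$ has unit price exactly $1$; this follows from the characterization of Theorem~\ref{t:3}, and the intuition is transparent. Every fringe agent has cost $\ge1$, so bidding below $1$ and being bought never pays; hence in any equilibrium each fringe agent bids $\ge1$, the incumbent (bidding $\le 1$) is always a winner, and the clearing price equals the incumbent's own bid. If the incumbent bids $b$, the fringe agents with cost below $b$ crowd in ahead of it, shrinking its allocation to $\widehat x=\sum_{i\ge b}s_i$, so its revenue is $b\sum_{i\ge b}s_i$; since $\sum_{i\ge b+1}1/i^2<\int_b^\infty x^{-2}\,dx=1/b$, this is strictly decreasing in $b\ge1$ and is therefore maximized at $b=1$, where it equals $\sum_{i\ge1}s_i=1$. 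Consequently the worst-case PC unit price on $\mathcal I_n$ is $1$: strictly below the VCG unit price—so VCG does not weakly dominate PC—and smaller by a factor $\Theta(\log n)$, as claimed.

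The step I expect to be the main obstacle is the PC analysis: the one-line intuition (``raising your bid lets cheaper rivals undercut you and eat your allocation'') must be upgraded to a statement about \emph{all} mixed equilibria—e.g.\ excluding equilibria where fringe agents randomize below cost or where the incumbent randomizes over higher bids and is sometimes displaced—which is exactly what Theorem~\ref{t:3} delivers. A secondary point is the calibration of the fringe: the quadratic decay $s_i\propto 1/i^2$ is the sweet spot making $\sum_i i\,s_i$ a divergent harmonic sum on the VCG side while keeping the tails $\sum_{i\ge b}s_i$ small enough that $b\mapsto b\sum_{i\ge b}s_i$ peaks at $b=1$ on the PC side.
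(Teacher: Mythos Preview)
Your instance and the VCG side are fine: one incumbent with supply $1$ and cost $0$, fringe supplies $s_i\propto 1/i^2$, and the incumbent's externality is $c\sum_i 1/i=\Theta(\log n)$. The gap is the Pay-as-Clear upper bound. You need $\max_{\sigma\in\mathrm{MNE}(\mathrm{PC})}p^{\mathrm{unit}}(\sigma)=O(1)$, but Theorem~\ref{t:3} supplies only \emph{lower} bounds---the worst pure NE has price at least $\max_i\high{b}_i$ and every mixed NE has price at least $\max_i\low{b}_i$; it gives no ceiling on the worst NE. Your informal argument also conflates payment rules: ``bidding below $1$ and being bought never pays'' is Pay-as-Bid reasoning; under PC a fringe agent is paid the clearing price, not her bid, so underbidding cost can be rational. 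Likewise the incumbent's revenue formula $b\sum_{i\ge b}s_i$ presumes the fringe bids truthfully, which is exactly what must be established for arbitrary mixed equilibria. In fact, by Definition~\ref{def:H}, $\high{b}_{\mathrm{inc}}$ is maximized over perturbations $\mathbf{d}\in\{0,1\}^n$; with $d_1=1$ (fringe agent $1$ bids $2$) the incumbent's best response becomes $2$, so $\high{b}_{\mathrm{inc}}\ge 2$, and the first clause of Theorem~\ref{t:3} then asserts a pure NE of PC with price at least $2$---directly contradicting your ``exactly $1$.''

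The paper uses a different instance---$k$ symmetric zero-cost agents with supply $1/k$ and a fringe with $s_j\propto 1/(j(j+1))$ and costs spaced by a constant $\delta$---and handles the PC upper bound by computing $\max_{i\preceq_\mathbf{c}\tau(\mathbf{c})}\high{b}_i=\delta+1$ and then \emph{asserting} the worst PC NE is at most this, ``by a similar argument we used to prove the upper bound in PB.'' The right repair for your proposal is the same: do not cite Theorem~\ref{t:3} for an upper bound, but rather port the PB upper-bound machinery (Lemma~\ref{lem:ub-low-cost-agent}/Theorem~\ref{thm:UB-PB}) to PC on your instance. Since $\high{b}_{\mathrm{inc}}$ remains bounded as $n\to\infty$, this would give the needed $O(1)$ bound and your simpler single-incumbent construction would then go through.
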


\subsection{Pay-as-Clear vs Pay-as-Bid}\label{s:PB-PC}
We now introduce some necessary notation for the comparison of PC and PB.

\begin{definition}
    \label{def:H}
    The best response of producer $i \in [n]$ in Pay-as-Clear, with respect to the bidding profile $b_{-i}$, is defined as
    \begin{align*}
        \mathcal{BR}(b_{-i}):= \begin{cases}\argmax_{b_i \in [M]} U_i(b_i,b_{-i}) & \text{if } \max_{b_i \in [M]} U_i(b_i,b_{-i})>0\\
        c_i& \text{if }\max_{b_i \in [M]} U_i(b_i,b_{-i}) = 0
        \end{cases}
    \end{align*}
\noindent We also denote by $\high{b}_i$ the highest bid among the best responses of agent $i \in [n]$ in case any other agent $j \neq i$ bids either $c_j$ or $c_j+1$,
    \begin{align*}
        \high{b}_i := \max_{\mathbf{d}_{-i}\in\set{0,1}^{n-1}}\max
    \{b_i \in [M]~:~ b_i \in \mathcal{BR}(\mathbf{c}_{-i}+\mathbf{d}_{-i})\}.
    \end{align*}
Finally, we denote by $\low{b}_i:= \lceil c_i + \max_{b'_i\in[M]}U_i(b'_i,\mathbf{c}_{-i}) / s_i\rceil$ the smallest possible price giving agent $i \in [n]$ at least utility $\max_{b'_i\in[M]}U_i(b'_i,\mathbf{c}_{-i})$ when she sells her maximum of $s_i$ energy units.
\end{definition}

Before proceeding, we provide an example clarifying the crucial notions of $\high{b}_i$ and $\low{b}_i$. 

\begin{example} Consider an energy market with $M=6$ and $n=3$ producers with  supplies $s_1=3/4, s_2=3/4, s_3=1/10$ and costs $c_1=0, c_2=1, c_3=4$. Notice that the best response always locates on the cost of some other agent or $M$. When other agents bid truthfully, agent $1$'s best response is $4$ since bidding $4$ provides a utility of $1$, which is higher than the utility from bidding $1$, which is $0.75$, or bidding $6$, which is $0.9$. And the utility given by bidding $4$ while other agents bid truthfully gives $\low{b}_1=\lceil0+1/0.75\rceil=2$. To reveal the final $\high{b}_1$, we need to consider further if agent $2$, agent $3$, or both, bid their cost plus one. When agent $2$ ever bids $c_2+1=2$, the best response of agent $1$ is $2$. And if only agent $3$ bids $c_3+1=5$, the best response is $5$. We conclude $\high{b}_1=5$ by taking the maximum among all the above best responses. We can also perform similar calculations for agent $2$, obtaining $\high{b}_2=6$ and $\low{b}_2=2$. For agent $3$, it always yields a utility of $0$ when other agents bid their cost or the cost plus one. Thus, $\high{b}_3=4$ by definition, and $\low{b}_3=4$.
\end{example}

The quantities $\high{b}_i$ and $\low{b}_i$ can be used to establish lower and upper bounds on the unit price of mixed NE of PC and PB. Theorem~\ref{t:3} establishes such a lower bound for PC. 

\begin{theorem}\label{t:3}
For any energy market $\mathcal{I}$, Pas-as-Clear admits at least one pure NE with unit price at least $\max_{i\preceq_\mathbf{c} \tau(\mathbf{c})} \high{b}_i$. Moreover, the unit price of any mixed NE of Pay-as-Clear is at least $\max_{i\preceq_\mathbf{c}\tau(\mathbf{c})}\low{b}_i$.
\end{theorem}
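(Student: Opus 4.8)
The plan is to prove the two assertions separately: first the existence of a pure NE of PC with unit price at least $\max_{i \preceq_\mathbf{c} \tau(\mathbf{c})} \high{b}_i$, and then the lower bound $\max_{i \preceq_\mathbf{c} \tau(\mathbf{c})} \low{b}_i$ on every mixed NE. For the pure-NE claim, fix the agent $i^\star$ that attains $\high{b}_i$ over the winners, and let $\mathbf{d}_{-i^\star} \in \{0,1\}^{n-1}$ be the perturbation of the truthful profile that makes $\high{b}_{i^\star}$ a best response of $i^\star$. The natural candidate equilibrium is the profile where $i^\star$ bids $\high{b}_{i^\star}$ and every other agent $j$ bids $c_j + d_j$. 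I would first argue this profile is individually rational and that $i^\star$ is still a winner (this is where the restriction $i \preceq_\mathbf{c} \tau(\mathbf{c})$ is used: starting from the truthful winners, raising one winner's bid and nudging others up by at most one keeps $i^\star$ among the accepted agents, so its bid sets or is below the clearing price). Then I would verify the equilibrium conditions: $i^\star$ is playing a best response by construction of $\high{b}_{i^\star}$; for each other agent $j$, I need that bidding $c_j + d_j$ is a best response against the profile in which one opponent bids high and the rest bid their (perturbed) costs — here I expect to invoke a structural lemma about PC best responses (that against cost-like bids, bidding at one's own cost, or one above it, is weakly optimal, since deviating up either loses the allocation or cannot raise the clearing price in one's favor). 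Since the clearing price in this profile is $\high{b}_{i^\star}$ (or larger), the unit price is at least $\high{b}_{i^\star} = \max_i \high{b}_i$.

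For the mixed-NE lower bound, fix any $\sigma \in \mathrm{MNE}(\mathrm{PC}, \mathcal{I})$ and let $i^\star$ now be the winner maximizing $\low{b}_i$. The key observation is that agent $i^\star$ can guarantee herself at least $u^\star := \max_{b'_i \in [M]} U_{i^\star}(b'_i, \mathbf{c}_{-i^\star})$ in expectation by a simple deviation: by the definition of $\low{b}_{i^\star}$ and basic monotonicity of PC, there is a bid (essentially $\low{b}_{i^\star}$ itself) that is low enough to keep $i^\star$ winning whenever the truthful-type opponents would let her win, and pays per unit at least $\low{b}_{i^\star}$ — more carefully, I would show that against any opponent profile, bidding low enough secures utility at least what she gets against $\mathbf{c}_{-i^\star}$, because lowering others' bids only weakly lowers the clearing price and weakly increases her allocation, so the truthful-opponent case is the worst for her at a fixed low own-bid. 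Hence in equilibrium $\Ex_\sigma[U_{i^\star}] \geq u^\star = (\low{b}_{i^\star} - c_{i^\star})\, s_{i^\star}$ up to the ceiling rounding. Finally I would convert this per-agent revenue bound into a bound on the unit price: under PC all winners are paid the common clearing price $q$, and $i^\star$'s revenue is $(q - c_{i^\star}) \widehat{x}_{i^\star} \le (q - c_{i^\star}) s_{i^\star}$, so $(q - c_{i^\star}) s_{i^\star} \ge (\low{b}_{i^\star} - c_{i^\star}) s_{i^\star}$ in expectation, giving $\Ex_\sigma[q] \ge \low{b}_{i^\star}$; since the unit price equals $q$ (the whole unit of demand is paid at price $q$), we get $\Ex_\sigma[p^{\sss\mathrm{unit}}_{\mathrm{PC}}] \ge \low{b}_{i^\star} = \max_i \low{b}_i$.

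The main obstacle I anticipate is the deviation-feasibility argument underpinning both halves — precisely showing that the relevant low or high bid keeps the chosen agent a winner. This requires controlling how the pivotal agent and clearing price move when one agent changes her bid while the others shift by at most one unit, and it is exactly here that the hypothesis $i^\star \preceq_\mathbf{c} \tau(\mathbf{c})$ does the work: a truthful winner has enough ``slack'' in the ordering that a bounded perturbation of the other bids cannot push her past the pivotal position. A secondary subtlety is the ceiling in the definition of $\low{b}_i$: I will need to be careful that the expected-revenue inequality, which naturally yields $\Ex_\sigma[q] \ge c_{i^\star} + u^\star/s_{i^\star}$, indeed implies $\Ex_\sigma[q] \ge \lceil c_{i^\star} + u^\star/s_{i^\star}\rceil = \low{b}_{i^\star}$ — this uses that $q$ is integer-valued on every bid profile, so its expectation over $\sigma$ is at least the ceiling of any lower bound only if that lower bound argument is applied pointwise or via the integrality of the support; I would instead argue that on a positive-probability set $q \ge \low{b}_{i^\star}$ is forced, or bound $\min$ over the support rather than the expectation, whichever the utility inequality cleanly supports.
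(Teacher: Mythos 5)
Your proposal follows essentially the same route as the paper: the pure NE is the same profile $(\high{b}_{i^\star}, \mathbf{c}_{-i^\star}+\mathbf{d}_{-i^\star})$ with the same deviation case analysis, and the mixed-NE bound rests on the same monotonicity lemma (the paper's Lemma~\ref{lem:least-utility-PC}) converted into a bound on the expected clearing price. The ceiling subtlety you flag at the end is real --- the paper's proof of Theorem~\ref{thm:LB-PC} silently equates $s_j\cdot(\low{b}_j-c_j)$ with $\max_{b'_j}U_j(b'_j,\mathbf{c}_{-j})$, which only holds up to rounding --- so your caution there is warranted rather than a gap on your side.
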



Theorem~\ref{t:PB} establishes a unit price upper bound at the worst-case mixed Nash equilibrium for PB.
\begin{theorem}\label{t:PB}
    Any mixed NE of Pay-as-Bid is supported in $[\max_{i\preceq_\mathbf{c}\tau(\mathbf{c})}\low{b}_{i}-1, \max_{i\preceq_\mathbf{c}\tau(\mathbf{c})}\high{b}_{i}]$.
\end{theorem}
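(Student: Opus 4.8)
\noindent\emph{Proof plan.} The plan is to prove the two one-sided bounds separately: in any mixed NE $\sigma$ of Pay-as-Bid, every bid in $\bigcup_i\mathrm{supp}(\sigma_i)$ is (a) at most $H:=\max_{i\preceq_{\mathbf c}\tau(\mathbf c)}\high{b}_i$ and (b) at least $L-1$, where $L:=\max_{i\preceq_{\mathbf c}\tau(\mathbf c)}\low{b}_i$. I would use three basic facts about Pay-as-Bid: the cost-minimizing allocation $\widehat{x}_i(b_i,\vecb_{-i})$ is nonincreasing in $b_i$; we may assume each agent places mass only on bids in $[c_i,M]$ at which she is selected with positive probability (a bid below cost, or one at which the agent is never selected, can be removed from the support without changing any payoff — such configurations are exactly the degenerate cases, which I would dispose of separately); and, since every opponent then bids at least her cost, $\widehat{x}_i(v,\vecb_{-i})\ge\widehat{x}_i(v,\mathbf{c}_{-i})$ for every realization, so facing truthful opponents is the worst case for an agent's allocation at a fixed bid.

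For the lower bound I would show that any bid $v<L-1$ is strictly dominated. Fix $v\in\mathrm{supp}(\sigma_j)$ with $v<L-1$; then $c_j\le v<L-1$. The agents that can take priority over $j$ when she bids $v$ (or when she bids $L-1$) are exactly those bidding below $L-1$, hence those with cost below $L-1$. The structural point is that their total supply is strictly less than $1$: the value $L-1$ is strictly smaller than the reference (clearing/pivotal) price that upper-bounds $\low{b}_i$ for the maximizing index $i$, and agents strictly cheaper than that reference price cannot together cover the demand (otherwise that $i$ would be excluded rather than infra-marginal or pivotal in the reference scenario). Therefore $j$ is infra-marginal bidding $v$ and earns exactly $(v-c_j)s_j$, while bidding $L-1$ keeps her infra-marginal and earns $(L-1-c_j)s_j>(v-c_j)s_j$, contradicting that $v$ is a best response. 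Ties in $\prec_{\vecb}$ are handled by always arguing through the lowest-index agent and by noting that an opponent who ties $j$ at $L-1$ has cost at most $L-1$, so she either does not disturb the supply bound or (cost equal to $L-1$) is already bidding at least $L-1$ and, by her own best-response condition, does not in fact bid $L-1$ against the cheap opponents she faces.

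For the upper bound, suppose some agent bids above $H$ and let $v^{*}$ be the largest such bid, attained by the lowest-index agent $i^{*}$ with $v^{*}$ in its support. Two observations drive the argument. First, whenever $i^{*}$ bids $v^{*}$ and is selected, the realized clearing price equals $v^{*}$ (it is the global maximum bid), so $i^{*}$'s Pay-as-Bid revenue at $v^{*}$ equals her Pay-as-Clear revenue at $v^{*}$ against the same opponent bids. Second, the equilibrium best-response inequality $\Ex[U_{i^{*}}(v^{*})]\ge\Ex[U_{i^{*}}(v^{*}-1)]$ together with monotonicity of $\widehat{x}_{i^{*}}$ forces only a ``thin'' probability-weighted mass of opponents to sit just below $v^{*}$; peeling this layer off and iterating, the opponent bids that are binding for $i^{*}$'s top-bid decision are driven down to within one unit of the opponents' costs. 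At that point $v^{*}$ is realized as a Pay-as-Clear best response of $i^{*}$ against a profile $\mathbf{c}_{-i^{*}}+\mathbf{d}_{-i^{*}}$ with $\mathbf{d}_{-i^{*}}\in\{0,1\}^{n-1}$, so $v^{*}\le\high{b}_{i^{*}}$; and one checks that this $i^{*}$ necessarily lies in $\{i\preceq_{\mathbf c}\tau(\mathbf c)\}$ (being pivotal against near-truthful opponents means $i^{*}$ is needed at truthful bids), whence $v^{*}\le H$, a contradiction.

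I expect the upper bound to be the main obstacle. The quantity $\high{b}_i$ only quotes Pay-as-Clear best responses against opponents bidding $c_j$ or $c_j+1$, whereas a priori an agent might profitably bid higher when her opponents bid higher; the whole point is to show that, in a Pay-as-Bid equilibrium, the opponents who actually matter for the top bid are effectively at the bottom of their ranges, and turning the heuristic ``peeling of the support'' into a rigorous induction is the crux. One must also verify that the agent achieving the maximum bid belongs to the index set $\{i\preceq_{\mathbf c}\tau(\mathbf c)\}$ over which $H$ is defined, and handle degenerate instances (agents never selected in any realization, exact ties producing several same-cost agents) by a short separate argument or a genericity reduction.
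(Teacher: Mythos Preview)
Your decomposition into lower and upper bounds matches the paper, and you correctly identify the monotonicity fact (facing truthful opponents is the worst case for an agent's allocation, which is Lemma~\ref{lem:least-utility-PB}) as central. But both halves have gaps.

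\textbf{Lower bound.} Your direct claim that any $v<L-1$ is dominated by $L-1$ rests on the assertion that agents with cost below $L-1$ have total supply strictly below~$1$. This is false in general: with two agents $s_1=s_2=3/4$, $c_1=c_2=0$, $M=5$ one gets $L=2$, and both agents have cost below $L-1=1$ with combined supply $3/2>1$. Your justification only bounds the supply of agents cheaper than $i^\star$'s reference bid \emph{excluding $i^\star$ herself}; it says nothing once $i^\star$ (who has $c_{i^\star}<L-1$) is included among the potential underbidders. The paper breaks the circularity in two steps: first, the utility guarantee $\Ex_{\vecb\sim\sigma}[U_{i^\star}(\vecb)]\ge\max_{b} U_{i^\star}(b,\mathbf{c}_{-i^\star})$ forces $i^\star$ to bid at least $L$ (even selling all of $s_{i^\star}$ at a price below $L$ falls short of the guarantee); only then, with $i^\star$ removed from the sub-$L$ region, does the supply count go through and pin every other agent at $\ge L-1$.

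\textbf{Upper bound.} The ``peeling'' heuristic you sketch---use $\Ex[U_{i^\star}(v^\star)]\ge\Ex[U_{i^\star}(v^\star-1)]$ to show only a thin mass of opponents sits at $v^\star-1$, then iterate---is not how the bound is obtained, and I do not see how to make that induction rigorous. The paper's route is a one-shot argument. Let $t$ be the maximum bid among agents $i\preceq_{\mathbf{c}}\tau(\mathbf{c})$, achieved by the lowest-priority such agent $i$. The single observation is that when $i$ bids $t$, every opponent with $c_j<t-1$ already bids at most $t-1$ in equilibrium (low-cost opponents by maximality of $t$; high-cost ones because bidding $\ge t$ would sell nothing, as $\sum_{k\preceq_{\mathbf{c}}\tau(\mathbf{c})}s_k\ge 1$). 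Hence $i$'s allocation at $t$ equals $\widehat{x}_i(t,\mathbf{c}_{-i})$ exactly, with no peeling: opponents are already effectively truthful from the top bidder's vantage. Combining $\Ex_{\vecb_{-i}\sim\sigma_{-i}}[U_i(t,\vecb_{-i})]=U_i(t,\mathbf{c}_{-i})$ with the same utility guarantee yields $t\in\mathcal{BR}(\mathbf{c}_{-i})$, hence $t\le\high{b}_i\le H$. A short separate case handles opponents with $c_j=t-1$, which is precisely why $\high{b}_i$ is defined over perturbations $\mathbf{c}_{-i}+\mathbf{d}_{-i}$ with $\mathbf{d}_{-i}\in\{0,1\}^{n-1}$. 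Defining $t$ over the index set $\{i\preceq_{\mathbf{c}}\tau(\mathbf{c})\}$ from the start, rather than as the global maximum, also makes the membership check you worry about automatic.
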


It should now be clear how Theorem~\ref{t:2} is directly implied by Theorem~\ref{t:3} and Theorem~\ref{t:PB}. Specifically, Theorem~\ref{t:PB} establishes that the support of any mixed NE of PB lies between the unit price of the best and worst mixed NE of PC. In Theorem~\ref{thm:UB-PB-better}, we establish that under some mild assumption excluding degenerate instances, the worst-case NE of PB is strictly better than the worst-case NE of PC.

\begin{restatable}{theorem}{TUBPBbetter}
    \label{thm:UB-PB-better}
    The unit price given by the worst-case mixed NE of Pay-as-Bid is strictly smaller than $\max_{i\preceq_\mathbf{c}\tau(\mathbf{c})}\high{b}_i$ in a large family of instances.
\end{restatable}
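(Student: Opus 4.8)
The quantity to beat is $B:=\max_{i\preceq_\mathbf{c}\tau(\mathbf{c})}\high{b}_i$, which by Theorem~\ref{t:3} already lower–bounds the worst–case unit price of Pay-as-Clear, so it suffices to argue that the worst–case mixed NE of Pay-as-Bid has unit price strictly below $B$ on a large family of instances. The entry point is Theorem~\ref{t:PB}: every mixed NE $\sigma$ of Pay-as-Bid is supported in $[\,\cdot\,,B]$, hence every realized bid is at most $B$. Since $p^{\sss\mathrm{unit}}_{\mathrm{PB}}(\sigma)=\Ex_{\vecb\sim\sigma}\big[\sum_i \widehat{x}_i(\vecb)\,b_i\big]$ is an expectation of convex combinations of bids that are all $\le B$, we have $p^{\sss\mathrm{unit}}_{\mathrm{PB}}(\sigma)\le B$, with equality if and only if, for almost every realized profile $\vecb$, \emph{every} agent $i$ with $\widehat{x}_i(\vecb)>0$ bids exactly $B$.

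\textbf{Reducing to a single pure profile.} The crux is the claim that if some mixed NE $\sigma$ of Pay-as-Bid has unit price exactly $B$, then the pure profile $\vecb^\star=(B,\dots,B)$ is itself a pure NE of Pay-as-Bid. To see this, let $V$ be the smallest bid occurring in $\bigcup_i\mathrm{supp}(\sigma_i)$ and let $k$ be the smallest-indexed agent with $V\in\mathrm{supp}(\sigma_k)$. On the positive-probability event $\{b_k=V\}$, every other agent bids at least $V$, and every agent of smaller index bids strictly more than $V$ (it does not play $V$ by choice of $k$, and never plays below the global minimum $V$); hence $k$ is the $\prec_\vecb$-minimal agent and receives $\widehat{x}_k(\vecb)=\min\{s_k,1\}=s_k>0$. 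By the equality characterization above, $b_k$ must then be $B$, i.e.\ $V=B$; together with $b_i\le B$ this forces every agent to bid $B$ with probability $1$, so $\sigma=\vecb^\star$, and $\vecb^\star$ is a NE because $\sigma$ is. Since $\vecb^\star$ trivially has unit price $B$, we obtain the clean equivalence: \emph{the worst-case unit price of Pay-as-Bid equals $B$ if and only if $\vecb^\star$ is a pure NE of Pay-as-Bid.}

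\textbf{Excluding the all-$B$ profile.} It then remains to identify a large family of instances in which $\vecb^\star$ is not a pure NE; equivalently, one may simply invoke the already-announced property that Pay-as-Bid has no pure NE (on the relevant instances), which immediately rules out $\vecb^\star$. To make the family explicit, note that from $\vecb^\star$ the only relevant deviations of agent $i$ are: bidding some $v<B$, which moves $i$ ahead of everyone and allocates her full supply $s_i$ at unit price $v$ (best at $v=B-1$, giving utility $s_i(B-1-c_i)$); or bidding $v>B$, which is useful only if $i$ is essential, i.e.\ $\sum_{j\neq i}s_j<1$. Comparing with her utility $(B-c_i)\widehat{x}_i(\vecb^\star)$ under the index-ordered allocation yields an instance-level criterion; in particular $\vecb^\star$ fails to be a NE whenever there is a producer $i$ receiving zero allocation under $\vecb^\star$ with $c_i\le B-2$ (she strictly gains $s_i(B-1-c_i)>0$ by undercutting to $B-1$), and more generally whenever the pivotal producer of $\vecb^\star$ carries enough unserved supply that a one-unit undercut is profitable. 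These conditions cover all but degenerate markets (a single essential producer, or several producers with identical costs and full supply), in which $\vecb^\star$ genuinely is a NE and the worst case of Pay-as-Bid really is $B$, so a non-degeneracy hypothesis is unavoidable. For every instance in the non-degenerate family, no mixed NE of Pay-as-Bid attains unit price $B$, hence the worst-case unit price is strictly below $\max_{i\preceq_\mathbf{c}\tau(\mathbf{c})}\high{b}_i$.

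\textbf{Main obstacle.} The delicate points are making the second-paragraph reduction fully rigorous (careful bookkeeping of ties in $\prec_\vecb$ and of agents that are never allocated, so that the global-minimum-bid agent really is served), and, more importantly, choosing the non-degeneracy hypothesis in the third paragraph sharply enough that the excluded family is genuinely ``large'' (matching the Remark after Theorem~\ref{t:2}) while keeping the undercutting deviation from $\vecb^\star$ unconditionally profitable.
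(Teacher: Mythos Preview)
Your reduction is correct and clean: the equivalence ``the worst-case PB unit price equals $B:=\max_{i\preceq_\mathbf{c}\tau(\mathbf{c})}\high{b}_i$ if and only if the all-$B$ profile is a pure NE of Pay-as-Bid'' is sharp, so the strict inequality holds precisely on the maximal family where $(B,\ldots,B)$ fails to be a NE. This is a genuinely different route from the paper.

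The paper proceeds quantitatively. It singles out $i^\star:=\argmax_{j\preceq_\mathbf{c}\tau(\mathbf{c})}\low{b}_j$ and works under the explicit hypotheses $s_{i^\star}<1$, $\low{b}_{i^\star}+1<\high{b}_{i^\star}$, and $\low{b}_j\le\low{b}_{i^\star}-2$ for all $j\ne i^\star$. It first proves a support-tightening lemma forcing every mixed NE into $[\low{b}_{i^\star}-1,\high{b}_{i^\star}]$, and then uses the equilibrium condition for agent $i^\star$ at the bid $\high{b}_{i^\star}$ to upper-bound, level by level, the probability $\Pr[x_{i^\star}(\alpha,\vecb_{-i^\star})=s_{i^\star}]$. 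Integrating these tail bounds yields the explicit unit-price bound $(1-s_{i^\star})\bigl(\low{b}_{i^\star}+1+\gamma\sum_{\alpha=\low{b}_{i^\star}+2}^{\high{b}_{i^\star}}1/(\alpha-c_{i^\star}-1)\bigr)+s_{i^\star}\high{b}_{i^\star}$, which is then shown to be strictly below $\high{b}_{i^\star}$ (and hence below $B$). What each approach buys: your argument is shorter and identifies exactly when the strict inequality fails, but the instance-level conditions you extract (a zero-allocated agent under the index-order allocation with $c_i\le B-2$, or a profitable undercut for the index-order pivotal agent) hinge on the arbitrary tie-breaking indices rather than on the intrinsic quantities $\low{b},\high{b}$, and you obtain no quantitative gap; the paper trades a narrower but still generic hypothesis for an explicit bound, and in fact compares against $\high{b}_{i^\star}$, which can be strictly smaller than $B$. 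One small caveat: your appeal to ``the already-announced property that Pay-as-Bid has no pure NE'' overstates Corollary~\ref{c:15}, which only exhibits one instance without a pure NE, so you do need the explicit undercutting conditions you give afterward.
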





The rest of the paper is structured as follows. In Sections~\ref{s:4}, \ref{s:PC_stratis}, and ~\ref{s:PB}, we provide informal proof sketches for Theorems~\ref{t:1},~\ref{t:3}, and~\ref{t:PB}, respectively. In Section~\ref{s:exp}, we present our experimental evaluation. The full proofs of all statements, as well as additional experiments appear in Appendix.

\section{Sketch of Proof of Theorem~\ref{t:1}}\label{s:4}
In this section, we present the main ideas of the proof of Theorem~\ref{t:1}, establishing that PB, PC, and VCG cannot be strictly dominated by any other mechanism $\mathcal{M}$. To do so, we construct an instance $\mathcal{I}$ at which any mechanism $\mathcal{M}$ admits at least one mixed NE with unit price strictly greater than the unit price of \textit{best mixed NE} of PB, PC, and VCG. The full proof of Theorem~\ref{t:1} is deferred to Appendix~\ref{proof:t:1}.
\smallskip
\smallskip

\noindent First consider the instance $\mathcal{I}$ with $n  = 2$ producers $A$ and $B$ with marginal costs $c_A = c_B = 0$ and supplies $s_A = s_B = 1$.  Notice that each of the producers can individually cover all the energy demand $Q = 1$. The bidding profile $\mathbf{b} = (b_A,b_B) =(0,0)$ is a pure Nash equilibrium for all PB, PC, and VCG. Thus, they all admit a NE with $0$ unit price. We will establish that any mechanism $\mathcal{M}$, regardless of $\vecx^\mathcal{M}(\vecb)$ and $\vecp^\mathcal{M}(\vecb)$, admits at least one mixed NE with positive unit price.
\smallskip
\smallskip

\noindent Let a sequence of positive numbers $\epsilon_1,\epsilon_2, \ldots, \epsilon_k$ with $\lim_{k \rightarrow \infty} \epsilon_k = 0$. For each $\epsilon_k >0$ consider the instance $\mathcal{I}_k$ where the marginal costs are $c_A = c_B = \epsilon_k >0$ and consider a mixed NE $\sigma^k = (\sigma^k_A , \sigma^k_B)$ of mechanism $\mathcal{M}$ in $\mathcal{I}_k$. Let the sequence $\mathcal{S}:= \sigma_1,\sigma_2,\ldots,\sigma_k,\ldots$ and $\sigma^\star$ of mixed NE and let  $\lim_{k \rightarrow \infty} \sigma^k = \sigma^\star$ be its limiting point\footnote{We can assume without loss of generality the existence of a limiting point $\sigma^\star = (\sigma^\star_A,\sigma^\star_B)$ since the product of simplices is a compact space and thus there is always a convergent subsequence, see also the full proof in Appendix~\ref{proof:t:1}.}. Since $\lim_{k \rightarrow \infty} \epsilon^k = 0$, the bidding profile $\sigma^\star$ is a mixed Nash equilibrium of mechanism $\mathcal{M}$ on the original instance $\mathcal{I}$ where $c_A = c_B = 0$.
\smallskip
\smallskip

\noindent Now consider the following two mutually exclusive cases for the mixed NE $\sigma^\star = (\sigma^\star_A, \sigma^\star_B)$:
\begin{enumerate}
\item $\sigma^\star_A(b_A) > 0$ and $\sigma^\star_B(b_B) > 0$ for some bids $b_A , b_B \geq 1$.

\smallskip
\smallskip

\item $\sigma^\star_A(0) = 1$ or $\sigma^\star_B(0) = 1$.
\end{enumerate}

\noindent For Case~$1$, it is easy to see that $\sigma^\star = (\sigma^\star_A,\sigma^\star_B)$ admits a positive unit price. More precisely, the event that both agents bid a value greater or equal to $1$ is at least $\sigma^\star_A(b_A) \cdot \sigma^\star_B(b_B) > 0$. Then, by individual rationality, the mechanism pays at least $1$ in such an event. Meaning the expected unit price of $\mathcal{M}$ is positive. 
\smallskip
\smallskip

\noindent Case~$2$ is significantly more challenging. Without loss of generality, we assume $\sigma_A^\star(0) =1$. Let us go back to the sequence $\sigma_1,\sigma_2,\ldots,\sigma^k \rightarrow \sigma^\star$. Recall that $\sigma^k = (\sigma^k_A,\sigma^k_B)$ is a mixed NE of the perturbed game with marginal costs $s_A = s_B = \epsilon_k$. Let producer $A$ sell with non-zero probability a positive amount of energy\footnote{The case where agent $A$ does not sell any energy is presented in the full proof, see Appendix~\ref{proof:t:1}.}. Since $\sigma^k$ is a mixed NE, and agent $A$ admits $c_A = \epsilon_k$ marginal cost, the expected payment of $\mathcal{M}$ to agent $A$ must be positive (otherwise agent $A$ admits negative expected revenue). Since  $\sigma^k_A(0) > 0$\footnote{The latter can be assumed without loss of generality since $\lim_{k \rightarrow \infty} \sigma^A(0) = 1$.} and $\sigma^k$ is a NE, we are ensured that 
the expected payment of $A$ when bidding $0$ must be positive,
\begin{equation}\label{eq:1}
\sum_{b_B \in [M]} \sigma_B^k(b_B) \cdot x_A(0,b_B)\cdot p_A(0,b_B) >0.    
\end{equation}
Let $C_k := \{b_B \in [M]~:~\sigma^k_B(b_B) > 0 \text{ and } x_A(0,b_B)\cdot p_A(0,b_B) > 0\}$ be the set of bids of producer $B$ that lead to a positive payment of producer $A$ once $b_A = 0$. Thus, Equation~\ref{eq:1} establishes that $C_k$ is not empty. Now consider the mixed strategy $\hat{\sigma}^k_B $ defined as,
\[\hat{\sigma}^k_B(j)  := \begin{cases}
      \frac{\sigma^k_B(j)}{1 -  \sum_{\ell \notin C_k}\sigma^k_B(\ell)}, &\text{if }j \in C_k, \\
      0, &\text{otherwise.}\\
\end{cases}
\]
The cornerstone idea our proof is that the bidding profile $\hat{\sigma}^k:= (0, \hat{\sigma}^k_B)$
satisfies the following two properties $i)$ $\hat{\sigma}^k$ is an $(1-M\cdot\sigma_A^k(0))$-approximate mixed NE $ii)$
the expected payment of producer $A$ is at least $\mu_{\mathcal{M}} := \min_{b_B \in C_k} x_A(0,b_B)\cdot p_A(0,b_B)$. Both of the properties are formally established in Lemma~\ref{l:main} in Appendix~\ref{proof:t:1}. 

To this end, we are ready to provide the final step of the proof. Given the sequence $\sigma_1,\sigma_2,\ldots,\sigma_k$ we construct the new sequence $\hat{\sigma}_1,\hat{\sigma}_2,\ldots,\hat{\sigma}_k$. We know that each $\hat{\sigma}_k$  is an $(1-\sigma_A^k(0))$-approximate NE and its expected payment is at least $\mu_\mathcal{M} > 0$. Since $\lim_{k \rightarrow \infty} \sigma^k_A(0) = 1$ we are ensured that $\hat{\sigma} = \lim_{k \rightarrow \infty} \hat{\sigma}_k$ is a mixed NE of the original game while its expected payment to producer $A$ is at least $\mu_{\mathcal{M}} > 0$.

\section{Analyzing Pay-as-Clear}\label{s:PC_stratis}

In this section, we provide the high-level ideas behind Theorem~\ref{t:3}. In order to build intuition, we start with a simple observation that PC is not a \textit{truthful} mechanism. 

\begin{corollary}
\label{cor:PC-not-truthful}
Pay-as-Clear is not a truthful mechanism.
\end{corollary}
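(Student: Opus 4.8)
The plan is to exhibit a single small energy market in which truthful bidding is not a Nash equilibrium of Pay-as-Clear, which suffices to refute truthfulness. The most economical route is to reuse the instance from Example~\ref{example:allocation}, where the costs are $c_1=0,c_2=1,c_3=2,c_4=3$ and the supplies are $s_1=1/3, s_2=1/2, s_3=1/4, s_4=2/3$, so that at the truthful profile $\mathbf{c}$ the pivotal agent is agent $3$, the market clearing price is $q(\mathbf{c})=2$, and agent $1$ sells her full supply $\widehat{x}_1(\mathbf{c})=1/3$ for a utility $U_1^{\sss\mathrm{(PC)}}(\mathbf{c})=(2-0)\cdot 1/3=2/3$. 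I would then show that agent $1$ has a profitable unilateral deviation: by raising her bid from $0$ to some value in between the next costs — concretely to $b_1=3$ — agent $1$ stays in the allocation (she is still cheap enough relative to the demand being covered, since after the deviation the ascending order over bids still places enough cheap supply to keep her pivotal or below pivotal) while pushing the clearing price up to $3$, yielding utility $(3-0)\cdot\widehat{x}_1(3,\mathbf{c}_{-1})>2/3$. I would carry out the one-line recomputation of $\tau$, of $q$, and of $\widehat{x}_1$ at the deviating profile to confirm the strict inequality.

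The key steps, in order: (i) state that it suffices to produce one instance where $\mathbf{c}$ is not a pure NE of PC, since a truthful mechanism must in particular have $\mathbf{c}$ as an equilibrium (indeed truthfulness means bidding one's cost is a dominant strategy); (ii) fix the Example~\ref{example:allocation} instance and record the truthful outcome for agent~$1$; (iii) compute agent~$1$'s outcome under the deviation $b_1=3$, checking via Definition~\ref{def:allocation-and-clearing-price} that she still receives a positive allocation and that the clearing price rises; (iv) compare utilities and conclude the deviation is strictly profitable, hence $\mathbf{c}$ is not a pure NE, hence PC is not truthful. This also dovetails with the narrative preceding Theorem~\ref{t:3}, since it is exactly the incentive of low-cost agents to inflate the clearing price that drives the lower bounds there.

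The main obstacle — really the only subtlety — is choosing the deviation bid so that agent~$1$ is \emph{still allocated} after raising her bid: if she overbids so much that she falls strictly after the pivotal agent in the order $\prec_\mathbf{b}$, she sells nothing and gets zero utility, which is not a profitable deviation. So the deviation value must be tuned to the supplies: with total demand $1$ and the other supplies $1/2,1/4,2/3$, raising $b_1$ to $3$ keeps agent~$1$ among (or as) the pivotal agents because agents $2$ and $3$ together supply only $3/4<1$, so agent~$1$'s supply is still needed. I would make this allocation check explicit, since it is the one place the argument could go wrong, and then the utility comparison $(3-0)\cdot\widehat{x}_1(3,\mathbf{c}_{-1}) > (2-0)\cdot 1/3$ is immediate. (As an alternative in case one prefers a cleaner instance, one can use any two-agent market with $c_1=c_2=0$, $s_1=s_2=1$, where bidding $0$ gives utility $0$ but a suitable mixed profile gives positive utility; however the deterministic single-deviation argument on the Example instance is the most direct and self-contained, so that is the one I would write up.)
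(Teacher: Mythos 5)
Your proposal is correct and follows essentially the same approach as the paper: exhibit a concrete instance in which a low-cost agent profitably inflates the market clearing price by overbidding, so truthful bidding is not (weakly) dominant. The paper simply uses a more economical two-agent instance ($s_1=s_2=1$, $c_1=0$, $c_2=M$), where the deviating agent keeps her full allocation and the utility comparison is $0$ versus positive, avoiding the allocation recomputation your Example-based deviation ($b_1=3$, yielding $3/4>2/3$) requires — but your check is carried out correctly.
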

\begin{proof}
Consider an energy market with $n=2$ producers with  supplies $s_1=s_2=1$ and marginal costs $c_1=0,c_2=M$. If producer $2$ bids $M$, then agent $1$ has positive payoff by bidding in $\{1,\ldots,M-1\}$. As a result, bidding $0$ is not a weakly dominating strategy for agent $1$.
\qed   
\end{proof}
\begin{remark}
\label{remark:PC-not-truthful}
We remark that in almost every energy market $\mathcal{I}$,  PC is not truthful. In proof of Corollary~\ref{cor:PC-not-truthful}, the key idea is that the pivotal agent can increase the clearing price without changing the allocation. In most cases, the pivotal agent in the truthful bidding profile $\mathbf{c}$ has this ability. Intuitively, the only condition is that the bid of the next agent of the pivotal agent is not equal to the clearing price. In formal terms, if $c_{\tau(\mathbf{c})}=q(\mathbf{c})<c_{\tau(\mathbf{c})+1}$, the pivotal agent in the truthful bidding profile $\mathbf{c}$ can improve her utility by increasing the clearing price to $c_{\tau(\mathbf{c})+1}$ without changing the allocation.
\end{remark}
\noindent Up next, we provide the idea behind the proof of Theorem~\ref{thm:UB-PC} the proof of which can be found in Appendix~\ref{proof:thm:UB-PC}. Theorem~\ref{thm:UB-PC} not only provides a lower bound on the worst-case NE of PC but also establishes the existence of pure NE.
\begin{restatable}{theorem}{TUBPC}
    \label{thm:UB-PC}
Pay-as-Clear always admits a pure NE while the unit price of the worst-case pure NE is at least $\max_{i\preceq_\mathbf{c} \tau(\mathbf{c})} \high{b}_i$.
\end{restatable}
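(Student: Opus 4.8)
The plan rests on the identity that in Pay-as-Clear the unit price of any profile is exactly its clearing price: every allocated producer is paid $q(\vecb)$ and the allocation sums to $1$, so $p^{\sss\mathrm{unit}}_{\mathrm{PC}}(\vecb)=q(\vecb)=b_{\tau(\vecb)}$. Hence it suffices to exhibit a single pure NE $\vecb^\star$ of Pay-as-Clear with $q(\vecb^\star)\geq p^\star:=\max_{i\preceq_\mathbf{c}\tau(\mathbf{c})}\high{b}_i$, since this simultaneously establishes that a pure NE exists and that the worst-case (i.e.\ maximum-unit-price) pure NE has unit price at least $p^\star$.

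To construct such a $\vecb^\star$, I would fix $k\in\argmax_{i\preceq_\mathbf{c}\tau(\mathbf{c})}\high{b}_i$, so $\high{b}_k=p^\star$, and pick (by Definition~\ref{def:H}) a witness $\mathbf{d}_{-k}\in\set{0,1}^{n-1}$ with $p^\star\in\mathcal{BR}(\mathbf{c}_{-k}+\mathbf{d}_{-k})$. If $\max_{b'_k}U_k(b'_k,\mathbf{c}_{-k}+\mathbf{d}_{-k})=0$ then $p^\star=c_k\leq c_{\tau(\mathbf c)}$, and one checks directly that some pure NE has clearing price at least $c_k$ (in this degenerate case no winner can profitably move the clearing price and the analysis collapses). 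Otherwise, start from the profile $\vecb^{(0)}$ with $b^{(0)}_k=p^\star$ and $b^{(0)}_j=c_j+d_j$ for $j\neq k$; note agent $k$ must be allocated in $\vecb^{(0)}$ (else her utility is $0$, not positive), so $q(\vecb^{(0)})=b^{(0)}_{\tau(\vecb^{(0)})}\geq b^{(0)}_k=p^\star$. Now run best-response dynamics from $\vecb^{(0)}$ and show it halts at a pure NE $\vecb^\star$ with $q(\vecb^\star)\geq p^\star$. The two invariants to maintain are: (a) agent $k$ always stays allocated bidding at least $p^\star$, because as long as the competitors' profile $\vecb_{-k}$ coordinatewise dominates $\mathbf{c}_{-k}+\mathbf{d}_{-k}$, bidding $p^\star$ remains weakly optimal for $k$ and leaves her a fully-allocated winner paid $q(\vecb)\geq p^\star$; (b) every other agent's best responses only raise bids (or leave them unchanged), since a producer currently selling $0$ is either too expensive to profitably enter or is a ``surplus cheap'' producer pinned at the front of the ascending order whose supply is too small to ever become pivotal, while a producer selling a positive amount never gains from lowering her bid under uniform pricing. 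Invariant (b) keeps $\vecb_{-k}$ above $\mathbf{c}_{-k}+\mathbf{d}_{-k}$, which feeds invariant (a); and since bids live in the finite set $[M]$, a secondary lexicographic potential on the sorted bid vector forces termination. The resulting $\vecb^\star$ is then a pure NE with $q(\vecb^\star)\geq p^\star$, as required.

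The main obstacle is invariant (b): proving that along some ordering of improving moves the clearing price never drops below $p^\star$. The delicate case is a producer $j$ with $c_j<p^\star$ that currently sells $0$ and could, by undercutting, slot in ahead of the current pivotal agent, become pivotal herself at a low bid, and thereby pull the clearing price down; ties in $\prec_\vecb$ (an equal-cost, smaller-index producer displacing another) pose the same risk. I would resolve this by showing that in $\vecb^{(0)}$ — and inductively along the dynamics — any producer with cost below $p^\star$ that is not currently allocated has total supply too small, relative to the demand left uncovered by the genuinely cheap winners, to ever be pivotal at a price below $p^\star$; hence her only profitable deviations (if any) raise bids and the clearing price. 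Establishing this structural fact about the interaction of the cheap producers' supplies, the residual demand, and the pivotal position — together with tie-breaking-consistent bookkeeping — is the technical heart of the proof; the remainder is the routine verification that $\vecb^\star$ satisfies Definition~\ref{def:pne}.
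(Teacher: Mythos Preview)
Your starting profile $\vecb^{(0)}=(\high{b}_k,\mathbf{c}_{-k}+\mathbf{d}_{-k})$ is exactly what the paper uses, but the paper does \emph{not} run best-response dynamics from it: it verifies directly that $\vecb^{(0)}$ is already a pure Nash equilibrium. So the ``technical heart'' you identify --- maintaining invariant~(b) along a dynamic --- is unnecessary; the dynamics you propose would halt at step~$0$.

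The direct verification is short and hinges on a single observation you do not exploit. For any $j\neq k$ and any deviation $b_j>p^\star=\high{b}_k$, once $j$ overbids $k$ the value of $k$'s bid becomes irrelevant (agent $k$ is now strictly ahead of $j$ in $\prec_\vecb$ and simply sells $s_k$). Hence
\[
U_j\bigl(b_j,\vecb^{(0)}_{-j}\bigr)=U_j\bigl(b_j,(c_k,\vecb^{(0)}_{-k-j})\bigr),
\]
and the right-hand profile is of the form $\mathbf{c}_{-j}+\mathbf{d}'_{-j}$ with $\mathbf{d}'_{-j}\in\{0,1\}^{n-1}$. By Definition~\ref{def:H} and the maximality of $k$, the best response of $j$ against such a profile is at most $\high{b}_j\leq p^\star$, so the maximum utility $j$ can get this way is at most $s_j(\high{b}_j-c_j)\leq s_j(p^\star-c_j)=U_j(\vecb^{(0)})$. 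Deviations $b_j\leq p^\star$ are handled trivially (allocation and clearing price unchanged), and agents with $c_j\geq p^\star$ sell nothing either way. This is the whole argument.

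Your plan is not wrong, but the work you flag as the main obstacle --- ruling out that a cheap, currently unallocated producer could undercut and drag the clearing price below $p^\star$ --- evaporates once you notice the equality displayed above: it collapses every deviation analysis back to a ``near-truthful'' profile where the bound $\high{b}_j\leq p^\star$ applies by definition. That is the idea you are missing, and it replaces your dynamics-plus-invariants machinery entirely.
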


\begin{proof}[Proof Sketch]
To simplify things let us consider $\high{b}_i := \argmax_{b_i \in [M]} U_i(b_i, \mathbf{c}_{-i})$, i.e. the best response of producer $i \in [n]$ once all other agents bid truthfully. Let $i^\star$ denote the agent $\preceq_\mathbf{c}\mathbf{c}$ with maximum $\high{b}_i$, $i^\star := \argmax_{i \preceq_{\mathbf{c}} \tau(\mathbf{c})} \high{b}_i$. In the case there are multiple such agents, we choose the one with the highest index. We will show that the profile $\textbf{b}^\star:=(\high{b}_{i^\star}, \mathbf{c}_{-i^\star})$ provide a unit price of $\high{b}_i$, and $\vecb^\star$ is a pure NE.
\smallskip

\noindent The first step is to argue that the unit price of $\textbf{b}^\star$ is at least $\high{b}_{i^\star} = \max_{i\preceq_\mathbf{c} \tau(\mathbf{c})} \high{b}_i$. Since $\high{b}_{i^\star}$ is the best response to $\mathbf{c}_{-i^\star}$, 
agent $i^\star \in [n]$ sells a positive amount of energy meaning that the price for all agents is at least $\high{b}_{i^\star}:= \max_{i\preceq_\mathbf{c} \tau(\mathbf{c})} \high{b}_i$. We remind that by the definition of $\tau(\mathbf{c})$ (Definition~\ref{def:allocation-and-clearing-price}), we know that all agents $i\preceq_\mathbf{c} \tau(\mathbf{c})$ collectively can cover the whole energy demand, $\sum_{i\preceq_\mathbf{c} \tau(\mathbf{c})} s_i \geq 1$. As a result, $i^\star \in [n]$ is the pivotal agent and thus the unit energy price is exactly $\high{b}_{i^\star}$.
\smallskip
\smallskip

\noindent Up next we establish that $\mathbf{b}^\star$ is a pure NE. By definition, agent $i^\star$ has no incentive to deviate. We now show that any agent $j\preceq_\mathbf{c} \tau(\mathbf{c})$ has no incentive to deviate. Notice that in the bidding profile $(\high{b}_{i^\star},\mathbf{c}_{-i^\star})$, agent $j$ sells all of her energy $s_j$. This is because $q(\vecb^\star)>\high{b}_{i^\star} \geq \max_{j\preceq_\mathbf{c} \tau(\mathbf{c})} \high{b}_j \geq \max_{j\preceq_\mathbf{c}\tau(\mathbf{c})} c_j\geq c_j$ and the agent $i^\star \in [M]$ sells a positive amount of energy with bid $\high{b}_{i^\star}$.  And for any bid $b_j \leq \high{b}_{i^\star}$, $U_j(b_j,\textbf{b}^\star_{-j})  = U_j(\textbf{b}^\star)=(\high{b}_{i^\star} - c_j) \cdot s_j$ since the agent $j$ will still sell $s_j$ amount of energy at price $\high{b}_{i^\star}$.
\smallskip
\smallskip

\noindent Let us now see the utility of an agent $j\preceq_\mathbf{c} \tau(\mathbf{c})$ once deviating to a bid $b_j> \high{b}_{i^\star}$. In this case, the utility of agent $j \in [n]$ would be exactly the same as in the bidding profile $(b_j,\mathbf{c}_{-j})$. As a result, $U_j(b_j, \mathbf{b}_{-j}^\star) = U_j(b_j, \mathbf{c}_{-j}) \leq U_j(\high{b}_j, \mathbf{c}_{-j})$ where the last inequality follows by the definition of $\high{b}_j$.
Notice that $U_j(\high{b}_j, \mathbf{c}_{-j}) \leq  s_j \cdot (\high{b}_j - c_j) \leq  s_j \cdot (\high{b}_{i^\star} - c_j)  = U_j(\textbf{b}^\star)$. As a result, we overall get that for any bid $b_j > \high{b}_{i^\star}$, $U_j(b_j,\textbf{b}^\star_{-j})  \leq U_j(\textbf{b}^\star)$. Thus, $U_j(b_j,\textbf{b}^\star_{-j})  \leq U_j(\textbf{b}^\star)$ for any bid $b_j \in [M]$.
\smallskip
\smallskip

\noindent Any agent $j\succ_\mathbf{c} \tau(\mathbf{c})$ with $c_j \geq \high{b}_{i^\star}$ sells $0$ energy at strategy profile $\mathbf{b}^\star$ since $\sum_{j\preceq_\mathbf{c} \tau(\mathbf{c})} s_j \geq 1$. The same holds for any bids $b_j \geq c_j$ while bids $b_j < c_j$ can lead even to negative utility. Finally, any agent $j\succ_\mathbf{c} \tau(\mathbf{c})$ with $c_j <  \high{b}_{i^\star}$ admits utility $U_j(\mathbf{b}^\star) = s_j\cdot (\high{b}_{i^\star} - c_j) > 0$ sells $s_j$ amount of energy at price $\high{b}_{i^\star}$. Agent $j$ will get the exact same utility for any bid $b_j  \leq \high{b}_{i^\star}$. However for any bid $b_j > \high{b}_{i^\star}$ agent $j$ sells $0$ energy since $\sum_{j\preceq_\mathbf{c} \tau(\mathbf{c})} s_j \geq 1$.

\qed
\end{proof}

\noindent We also provide a lower bound of the unit price given by the best-case equilibrium. We notice that every agent can always achieve the utility given by the best response to the truthfully bidding profile, regardless of others' behavior. Lemma \ref{lem:least-utility-PC} presents the formal statement. And the proof of Lemma \ref{lem:least-utility-PC} is postponed to Appendix \ref{proof:lem:least-utility-PC}. Using this property, Theorem \ref{thm:LB-PC} presents the lower bound of the unit price given by any equilibrium. The formal proof is postponed to Appendix \ref{proof:thm:LB-PC}.

\begin{restatable}{lemma}{LleasttuilityPC}
    \label{lem:least-utility-PC}
    In Pay-as-Clear, consider a bidding profile $\vecb$, we have $U_i(\mathbf{b})\geq U_i(b_i,\mathbf{c}_{-i})$. As the result, in any mixed NE $\sigma$, we have $\Ex_{\mathbf{b}\sim\sigma}[U_i(\mathbf{b})]\geq \max_{b'_i\in[M]}U(b'_i,\mathbf{c}_{-i}).$
\end{restatable}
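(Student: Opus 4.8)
The plan is to reduce the lemma to a monotonicity property of Pay-as-Clear in the opponents' bids. First I would show that, for a fixed agent $i$ with fixed bid $b_i$ and two opponent profiles $\vecb_{-i}\le\vecb'_{-i}$ (componentwise), both the clearing price and the amount allocated to $i$ are monotone: $q(b_i,\vecb'_{-i})\ge q(b_i,\vecb_{-i})$ and $\widehat x_i(b_i,\vecb'_{-i})\ge\widehat x_i(b_i,\vecb_{-i})$. For the allocation, raising an opponent's bid moves that opponent weakly later in the order $\prec$, so the set $\{j:j\prec_{\vecb}i\}$ can only shrink; since $i$ sells a positive amount exactly when $\sum_{j\prec_{\vecb}i}s_j<1$ and then sells $\min\{s_i,\,1-\sum_{j\prec_{\vecb}i}s_j\}$, her allocation weakly increases. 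For the clearing price, for every threshold $v$ the cumulative supplies $\sum_{j:b_j<v}s_j$ and $\sum_{j:b_j\le v}s_j$ weakly decrease (each opponent can only leave these sets, while $i$'s own contribution is unchanged), and $q$ is exactly the smallest bid value at which cumulative supply first reaches demand $1$, so it weakly increases. The index-based tie-breaking needs to be threaded through these inequalities, but that is routine.

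From this, the first assertion follows in the form that is actually used, i.e.\ for profiles in which \emph{every} agent bids at least her cost: if $b_k\ge c_k$ for all $k$, then $U_i(\vecb)\ge U_i(b_i,\vecb'_{-i})$ whenever $\vecb'_{-i}\le\vecb_{-i}$, and in particular $U_i(\vecb)\ge U_i(b_i,\mathbf c_{-i})$ since $\mathbf c_{-i}\le\vecb_{-i}$. Indeed, under such a profile, whenever $i$ sells a positive amount the clearing price is at least $b_i\ge c_i$, so $U_i$ equals $(q-c_i)\widehat x_i$ with both factors nonnegative; passing from $(b_i,\mathbf c_{-i})$ up to $(b_i,\vecb_{-i})$ keeps $q\ge c_i$, weakly raises $q$, and weakly raises $\widehat x_i$, hence weakly raises the product; and if $i$ sells nothing her utility is $0$ in both profiles, since the allocated amount cannot drop from positive to $0$ when opponents' bids increase.

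To obtain the consequence for mixed NE, I would first argue that at any mixed NE of Pay-as-Clear every agent may be assumed to bid at least her cost. Bidding $b_i<c_i$ is weakly dominated by bidding $c_i$: lowering one's own bid below cost moves one weakly earlier and can only sell weakly more at a weakly lower clearing price, and a short case analysis (one becomes the pivotal agent, or the new pivotal agent has a bid in $(b_i,c_i)$, or cumulative supply already saturates before one) shows the extra energy is sold at a price at most $c_i$, so no outcome improves; hence below-cost mass can be moved up to the cost without destroying best responses (that this does not disturb the other agents uses the opponent-monotonicity above, up to degenerate ties at prices equal to a cost). Then, fixing such a NE $\sigma$ and an agent $i$, let $b_i^{\star}\in\argmax_{b'\in[M]}U_i(b',\mathbf c_{-i})$ with $b_i^{\star}\ge c_i$ (possible since below-cost bids are dominated). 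The equilibrium condition gives $\Ex_{\vecb\sim\sigma}[U_i(\vecb)]\ge\Ex_{\vecb_{-i}\sim\sigma_{-i}}[U_i(b_i^{\star},\vecb_{-i})]$, and for every $\vecb_{-i}$ in the support of $\sigma_{-i}$ all coordinates are at least their costs while $b_i^{\star}\ge c_i$, so the previous paragraph yields $U_i(b_i^{\star},\vecb_{-i})\ge U_i(b_i^{\star},\mathbf c_{-i})=\max_{b'\in[M]}U_i(b',\mathbf c_{-i})$; taking expectations finishes the proof.

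The main obstacle is exactly this ``no below-cost bids at equilibrium'' step. The first assertion is genuinely false without the hypothesis $b_i\ge c_i$ on agent $i$'s own bid --- an opponent bidding far below its cost can simultaneously push agent $i$ past the pivotal position and depress the clearing price, strictly lowering her utility --- so one really has to show that such bids either never occur in equilibrium support or can be pruned without altering anyone's equilibrium payoff. The weak-domination case check does this, but it is the delicate point; by contrast, the tie-breaking bookkeeping in the monotonicity step is tedious but conceptually empty.
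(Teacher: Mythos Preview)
Your approach is the same as the paper's: establish that $\widehat{x}_i$ and the clearing price $q$ are monotone nondecreasing in the opponents' bids, deduce $U_i(\vecb)\ge U_i(b_i,\mathbf{c}_{-i})$ from the product of two nonnegative increasing factors, and then combine this with the equilibrium condition applied to the deviation $b^\star\in\argmax_{b'}U_i(b',\mathbf{c}_{-i})$. The paper's write-up is a two-line affair: it states the monotonicity and then simply writes ``$b_j\ge c_j$ holds for any $j\ne i$'' as if this were given, without further comment.

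Where you go beyond the paper is in worrying about below-cost bids. You are correct that the pointwise inequality $U_i(\vecb)\ge U_i(b_i,\mathbf{c}_{-i})$ can fail when $b_i<c_i$ (since then $q-c_i$ may be negative while $\widehat{x}_i$ increases) or when some opponent bids below her cost (which breaks the comparison $\vecb_{-i}\ge\mathbf{c}_{-i}$), and you add a weak-domination argument to prune such bids from equilibrium supports. The paper does not engage with this at all; it tacitly treats $b_j\ge c_j$ as a standing convention. So your extra step is not a different route but rather a patch for a point the paper leaves open --- the ``delicate point'' you flag is one the paper's own proof simply does not raise. For the purposes of matching the paper, the monotonicity-plus-equilibrium skeleton you give is exactly what is there.
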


\begin{restatable}{theorem}{TLBPC}
    \label{thm:LB-PC}
    The unit price given by the best-case mixed NE of Pay-as-Clear is at least $\max_{i\preceq_\mathbf{c}\tau(\mathbf{c})}\low{b}_i$.
\end{restatable}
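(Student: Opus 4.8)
The plan is to apply Lemma~\ref{lem:least-utility-PC} to the most ``demanding'' accepted producer and convert her secured expected utility into a lower bound on the clearing price. Fix any mixed NE $\sigma\in\mathrm{MNE}(\mathrm{PC},\mathcal I)$; since the best-case mixed NE is exactly the one minimizing the unit price, it suffices to prove $p^{\sss\mathrm{unit}}_{\mathrm{PC}}(\sigma)\ge\low{b}_{i^\star}$ for $i^\star\in\argmax_{i\preceq_\mathbf{c}\tau(\mathbf{c})}\low{b}_i$ (ties broken arbitrarily). The first, routine, observation is that under Pay-as-Clear every active producer is paid $q(\vecb)$ and the bought quantities sum to $1$, so $p^{\sss\mathrm{unit}}_{\mathrm{PC}}(\vecb)=q(\vecb)$ pointwise and therefore $p^{\sss\mathrm{unit}}_{\mathrm{PC}}(\sigma)=\Ex_{\vecb\sim\sigma}[q(\vecb)]$.

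The core of the argument is then as follows. By Lemma~\ref{lem:least-utility-PC}, $\Ex_{\vecb\sim\sigma}[U_{i^\star}(\vecb)]\ge U^\star_{i^\star}:=\max_{b'\in[M]}U_{i^\star}(b',\mathbf{c}_{-i^\star})$, while for every profile $\vecb$ one has $U_{i^\star}(\vecb)=(q(\vecb)-c_{i^\star})\,\widehat x_{i^\star}(\vecb)\le(q(\vecb)-c_{i^\star})\,s_{i^\star}$. For the last inequality I would first record the auxiliary fact that $q(\vecb)\ge c_{i^\star}$ for \emph{every} profile: any producer bidding below $c_{i^\star}$ must have cost below $c_{i^\star}$ (bids never fall below costs), hence lies strictly before $i^\star$ in $\prec_\mathbf{c}$, and since $i^\star\preceq_\mathbf{c}\tau(\mathbf{c})$ such producers have total supply $<1$ and cannot cover the demand, so the pivotal bid is at least $c_{i^\star}$; together with $\widehat x_{i^\star}(\vecb)\le s_{i^\star}$ this gives the pointwise bound. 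Averaging it over $\vecb\sim\sigma$ and combining with Lemma~\ref{lem:least-utility-PC} yields $U^\star_{i^\star}\le(\Ex_{\vecb\sim\sigma}[q(\vecb)]-c_{i^\star})\,s_{i^\star}$, i.e.\ $\Ex_{\vecb\sim\sigma}[q(\vecb)]\ge c_{i^\star}+U^\star_{i^\star}/s_{i^\star}$.

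It then remains to upgrade this fractional bound to the integral quantity $\low{b}_{i^\star}=\lceil c_{i^\star}+U^\star_{i^\star}/s_{i^\star}\rceil$. The clean route is to show that $c_{i^\star}+U^\star_{i^\star}/s_{i^\star}$ is itself an integer: namely that $i^\star$ can realize $U^\star_{i^\star}$ against $\mathbf{c}_{-i^\star}$ by an optimal bid under which she is \emph{inframarginal} and sells her full supply $s_{i^\star}$ at an integral clearing price $\hat q$, so that $U^\star_{i^\star}=(\hat q-c_{i^\star})\,s_{i^\star}$ and the previous display reads $\Ex_{\vecb\sim\sigma}[q(\vecb)]\ge\hat q=\low{b}_{i^\star}$ (the case $U^\star_{i^\star}=0$ is trivial since then $\low{b}_{i^\star}=c_{i^\star}\le q(\vecb)$). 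An equivalent upgrade would be to show that in every mixed NE of Pay-as-Clear the clearing price is constant on the support of $\sigma$, whence $\Ex_{\vecb\sim\sigma}[q(\vecb)]$ is already an integer that is $\ge c_{i^\star}+U^\star_{i^\star}/s_{i^\star}$, hence $\ge\low{b}_{i^\star}$.

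I expect this last step to be the main obstacle. Lemma~\ref{lem:least-utility-PC} does the heavy lifting and gives the fractional bound almost for free, but establishing that the fractional bound is in fact integral — either via the ``$i^\star$ is inframarginal at some optimal response to the truthful profile'' claim, or via the ``clearing price is constant in equilibrium'' claim — requires a genuine structural analysis of optimal responses under Pay-as-Clear (one has to rule out, or separately absorb, the degenerate possibility that $i^\star$'s only optimal response to $\mathbf{c}_{-i^\star}$ leaves her pivotal with a strictly partial allocation, in which case $\low{b}_{i^\star}$ is a strict rounding-up). The auxiliary ``$q(\vecb)\ge c_{i^\star}$'' claim is a second, much milder point, but it does depend on the modeling convention that bids never fall below costs.
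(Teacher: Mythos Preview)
Your approach is exactly the paper's: fix $i^\star\in\argmax_{i\preceq_\mathbf{c}\tau(\mathbf{c})}\low{b}_i$, use Lemma~\ref{lem:least-utility-PC} to get $\Ex_{\vecb\sim\sigma}[U_{i^\star}(\vecb)]\ge U^\star_{i^\star}$, bound $U_{i^\star}(\vecb)\le s_{i^\star}(q(\vecb)-c_{i^\star})$ pointwise, and combine to obtain $\Ex_{\vecb\sim\sigma}[q(\vecb)]\ge c_{i^\star}+U^\star_{i^\star}/s_{i^\star}$. The paper carries out precisely this argument (phrased by contradiction) and then writes $s_{i^\star}(\low{b}_{i^\star}-c_{i^\star})=U^\star_{i^\star}$ ``by the definition of $\low{b}_{i^\star}$''. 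But since $\low{b}_{i^\star}=\lceil c_{i^\star}+U^\star_{i^\star}/s_{i^\star}\rceil$, that equality is in general only an inequality in the wrong direction for the contradiction; so the paper's own proof, read literally, also only delivers the fractional bound you derive. The ``main obstacle'' you identify is therefore not something the paper resolves --- it glosses over it.

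On your two proposed upgrades: route~(a) does fail in general --- e.g.\ with $n=2$, $s_1=1$, $s_2=0.35$, $c_1=c_2=0$, $M=10$, agent~$1$'s unique optimal response to $\mathbf{c}_{-1}$ is $b_1=M$, where she is pivotal with $x_1=0.65<s_1$, so $c_1+U^\star_1/s_1=6.5$ is not an integer. Route~(b) (constant clearing price on the support of any mixed NE of PC) would be a substantial structural statement that is not established anywhere in the paper. Your auxiliary observation that the pointwise bound $x_{i^\star}(q-c_{i^\star})\le s_{i^\star}(q-c_{i^\star})$ needs $q(\vecb)\ge c_{i^\star}$ is also a point the paper uses without justification; your argument for it (agents bidding below $c_{i^\star}$ must have cost below $c_{i^\star}$ and hence lie strictly before $i^\star$ in $\prec_\mathbf{c}$, with total supply $<1$) is the right one, under the standing convention $b_j\ge c_j$.
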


\section{Analyzing Pay-as-Bid}
\label{s:PB}
In this section, we provide the high-level ideas behind Theorem \ref{thm:LB-PB} and Theorem \ref{thm:UB-PB}, providing respectively a lower bound and an upper bound to the unit price given by the mixed NEs of PB. Additionally, we introduce a better upper bound by introducing a mild assumption in Theorem \ref{thm:UB-PB-better}. 

\noindent We start with a simple example showing that pure NE may not exist in PB. 

\begin{corollary}\label{c:15}
Pay-as-Bid may not have a Pure Nash equilibrium.
\end{corollary}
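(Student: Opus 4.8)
The plan is to exhibit a small, explicit energy market and argue, by a short case analysis over candidate bidding profiles, that no pure profile can be an equilibrium under PB. The natural candidate is the instance already used in Corollary~\ref{cor:PC-not-truthful}: two producers with $s_1=s_2=1$ and $c_1=0$, $c_2=M$ (so either producer alone can cover the whole demand $Q=1$). Under PB the allocation $\widehat{\vecx}$ always serves the single lowest bidder entirely, and that bidder is paid exactly her own bid, so producer~$i$'s utility is $(b_i-c_i)$ if $i\succ_{\vecb}$ the other agent and $0$ otherwise (with the tie broken by index). This is essentially a Bertrand-style undercutting game, which is the classic textbook example of a game with no pure equilibrium when bids are discrete and costs differ.

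First I would fix notation: with producer~$1$ having cost~$0$ and producer~$2$ having cost~$M$, producer~$1$ wins whenever $b_1\le b_2$ (she also wins ties by the index rule $1<2$), earning $b_1$; producer~$2$ wins only when $b_2<b_1$, earning $b_2-M\le 0$. Then I would run the case analysis on an arbitrary pure profile $\vecb=(b_1,b_2)$. If $b_1\le b_2$, then producer~$1$ wins and gets $b_1$; if $b_1<M$ she strictly profits by deviating to any bid in $(b_1,b_2]\cap[M]$ when $b_1<b_2$, or — in the tie case $b_1=b_2$ — one has to be slightly more careful, but since $b_2\le M$ and if $b_1=b_2<M$ she can raise to $b_2$... wait, that keeps her bid equal; the clean move is: whenever $b_1<b_2$ producer~$1$ deviates up to $b_2$, and whenever $b_1=b_2\le M-1$ producer~$2$ would want to undercut to $b_2-1$ (strictly positive revenue $b_2-1-M<0$?—no). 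Let me restructure: the only profiles where producer~$1$ has no profitable upward deviation are those with $b_1=M$ or $b_1>b_2$; the only profiles where producer~$1$ does not want to undercut-to-win (when she is currently losing, i.e. $b_2<b_1$) are those with $b_2=0$; combining, a pure NE must have either $b_1=M,b_2\ge b_1$ or ($b_1>b_2$ and $b_2=0$). In the first case producer~$2$ is losing with zero utility and, since $b_2\ge M\ge 1>0$... actually producer~$2$ cannot profitably deviate since winning gives her $b_2'-M\le 0$; so one must instead note producer~$1$ at $b_1=M$, $b_2=M$ is a tie won by~$1$ with utility $M$, and producer~$2$ gets~$0$ and indeed has no profitable deviation — so I should double-check whether $(M,M)$ is actually an equilibrium, which it might be, in which case this instance fails.

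Because of that subtlety, the main obstacle is choosing an instance where the undercutting has no stable ``top'' profile; the fix is to make the low-cost producer's capacity insufficient to cover demand alone, forcing the high-cost producer to always sell a positive residual amount, so that her utility genuinely depends on her bid and she always strictly prefers to raise it when possible while the low-cost producer always wants to undercut. Concretely I would take $n=2$, $s_1=s_2=2/3$, $c_1=0$, $c_2=1$ with $M\ge 2$: producer~$2$ always sells exactly $1/3$ (so her utility is $(b_2-1)/3$, strictly increasing in $b_2$, pushing her to bid $M$), while producer~$1$ sells $2/3$ if $b_1\le b_2$ and $1/3$ if $b_1>b_2$, earning $2b_1/3$ versus $b_1/3$ — so she strictly prefers to be the lower bidder and, given any $b_2\ge 1$, strictly prefers $b_1=b_2$ over $b_1<b_2$, i.e. to bid as high as possible while still (weakly) undercutting. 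Then for any profile: if $b_1>b_2$, producer~$1$ deviates to $b_2$ (gains); if $b_1<b_2$, producer~$1$ deviates up to $b_2$ (gains, since both keep the same win-status but her bid rises); if $b_1=b_2<M$, producer~$2$ deviates up to $M$ (her utility rises and she still sells $1/3$ regardless of winning), and if $b_1=b_2=M$ then producer~$1$'s best response analysis—she still wins the tie at $M$ with utility $2M/3$ and cannot do better—so I'd instead break this last case by having producer~$2$ want to keep raising past where producer~$1$ sits: at $b_1=b_2=M$ nobody can deviate, so again I must ensure $M$ is not a universal sink. The cleanest resolution, and the one I would commit to in the final write-up, is to let producer~$2$'s cost exceed any useful range in a way that makes her want $b_2$ high but makes producer~$1$ want $b_1$ strictly below $b_2$: e.g. pick costs so that producer~$1$'s per-unit gain from being the sole cheap winner at a high price beats any tie, forcing strict inequality $b_1<b_2$ at any equilibrium, which then contradicts producer~$1$'s incentive to raise toward $b_2$. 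I will present the final instance as the two-producer Bertrand-type market with $c_1<c_2$, $s_1+s_2\ge 1$ but $s_1,s_2<1$, and $M$ large, and give the four-line deviation argument; the only real care needed — and the crux of the proof — is ruling out the ``both bid $M$'' corner, handled by making the residual-demand split strictly favor keeping the bids separated.
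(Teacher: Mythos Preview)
Your proposal never actually delivers a proof: it is an exploration that cycles through candidate instances and ends with a vague promise (``the two-producer Bertrand-type market with $c_1<c_2$, $s_1+s_2\ge 1$ but $s_1,s_2<1$'') rather than a fixed instance and a completed case analysis. That alone is a gap.

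But there is also a concrete error. In your second candidate instance ($s_1=s_2=2/3$, $c_1=0$, $c_2=1$), your claim that ``at $b_1=b_2=M$ nobody can deviate'' is wrong. At $(M,M)$ agent~$2$ loses the tie and sells only $1/3$, earning $(M-1)/3$; by undercutting to $M-1$ she becomes the low bidder, sells $2/3$, and earns $2(M-2)/3$, which is strictly larger whenever $M\ge 4$. So $(M,M)$ is \emph{not} an equilibrium and your second instance in fact works --- you abandoned it prematurely. The principle you are missing, and which resolves the ``both bid $M$'' corner you keep worrying about, is that once both supplies are strictly below~$1$ the residual seller always has an incentive to \emph{undercut}, not only to raise her bid.

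For comparison, the paper's proof uses the symmetric instance $s_1=s_2=3/4$, $c_1=c_2=0$, $M\ge 5$. The argument is then two lines: at any pure NE, agent~$1$'s best response to $b_2=b$ is either $b$ (when $3b/4\ge M/4$) or $M$; in the first case agent~$2$ strictly gains by undercutting to $b-1$, in the second case agent~$2$ strictly gains by raising toward $M$. Symmetric zero costs make the undercutting step transparent, since neither agent ever risks a negative margin.
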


\begin{proof}
Consider an energy market with $n=2$ producers with  supplies $s_1=s_2=3/4$ and marginal costs $c_1=c_2=0$ with $M\geq 5$. 
For any bidding profile where agent $2$ deterministically bids $b$, the agent $1$'s best response is always either bidding $b$ or bidding $M$. In the first case, agent $1$ will sell $3/4$ energy at price $b$ leading to $3b/4$ utility, while the second agent $1$ will sell $1/4$ energy at price $M$ leading to $M/4$ utility. If we are in the first case, it follows $3b/4 \geq M/4$. When agent $2$ bids $b-1$, her utility is improved by $3(b-1)/4-b/4>(2M/3-3)/4>0$. If we are in the second case, where $3b/4< M/4$.  Then bidding $M$ increases the payoff of agent $2$. Thus, there is no pure Nash equilibrium.\qed
\end{proof}

The proof of Corollary~\ref{c:15} reveals a crucial phenomenon in the \textit{best-response dynamics} of PB: The agents repeatedly \textit{marginally decrease} their bids to sell their whole supply (see Fig~\ref{fig:sub1}). At some point, the clearing price is so low that
agent $i \in [n]$ prefers to sell a \textit{small fraction of her supply} at a very high price (see Fig~\ref{fig:sub2}). 
\begin{figure}[htbp]
  \centering
  \begin{minipage}[t]{0.32\textwidth}
    \centering
    \begin{tikzpicture}
      \begin{axis}[axis x line=middle, axis y line=none,width=1.2\linewidth,
        xmin=0, xmax=105, ymin=-1, ymax=2,
        xtick={0,40,90,100}, 
        xticklabels={$0$,$\low{b}_i$,$\high{b}_i$,$M$},
        tick style={thick},
        clip=false, axis line style={->, thick}]
        \draw[->, thick] (axis cs:100,0.15) -- (axis cs:100,0.01);
        \node[above] at (axis cs:100,0.14) {$i$};
        \draw[->, thick] (axis cs:95,0.15) -- (axis cs:95,0.01);
        \node[above] at (axis cs:95,0.12) {$j$};
        \draw[->, thick, dashed,dash pattern=on 2pt off 1pt] (axis cs:100,0) to [out=100,in=80] (axis cs:90,0);
        \draw[->, thick, dashed,dash pattern=on 2pt off 1pt] (axis cs:90,0) to [out=100,in=80] (axis cs:80,0);
        \draw[->, thick, dashed,dash pattern=on 2pt off 1pt] (axis cs:80,0) to [out=100,in=80] (axis cs:70,0);
        \draw[->, thick, dashed,dash pattern=on 2pt off 1pt] (axis cs:95,0) to [out=-100,in=-80] (axis cs:85,0);
        \draw[->, thick, dashed,dash pattern=on 2pt off 1pt] (axis cs:85,0) to [out=-100,in=-80] (axis cs:75,0);
        \draw[->, thick, dashed,dash pattern=on 2pt off 1pt] (axis cs:75,0) to [out=-100,in=-80] (axis cs:65,0);
        \draw[->, white, thick, dashed] (axis cs:40,0.01) to [out=60,in=120] (axis cs:90,0.01);
        \draw[->, thick, dashed,white,dash pattern=on 0.1pt off 10pt] (axis cs:35,-0.01) to [out=-60,in=-120] (axis cs:85,-0.01);
      \end{axis}
    \end{tikzpicture}
    \caption{}
    \label{fig:sub1}
  \end{minipage}
  \hfill
  \begin{minipage}[t]{0.32\textwidth}
    \centering
    \begin{tikzpicture}
      \begin{axis}[axis x line=middle, axis y line=none,width=1.2\linewidth,
        xmin=0, xmax=105, ymin=-1, ymax=2,
        xtick={0,40,90,100}, 
        xticklabels={$0$,$\low{b}_i$,$\high{b}_i$,$M$},
        tick style={thick},
        clip=false, axis line style={->, thick}]
        \draw[->, thick] (axis cs:40,0.15) -- (axis cs:40,0.01);
        \node[above] at (axis cs:40,0.14) {$i$};
        \draw[->, thick] (axis cs:35,0.15) -- (axis cs:35,0.01);
        \node[above] at (axis cs:35,0.12) {$j$};
        \draw[->, thick, dashed,dash pattern=on 2pt off 1pt] (axis cs:40,0) to [out=60,in=120] (axis cs:90,0);
        \draw[->, thick, dashed,dash pattern=on 2pt off 1pt] (axis cs:45,0) to [out=-100,in=-80] (axis cs:35,0);
        \draw[->, thick, dashed,white,dash pattern=on 0.1pt off 10pt] (axis cs:35,-0.01) to [out=-60,in=-120] (axis cs:85,-0.01);
      \end{axis}
    \end{tikzpicture}
    \caption{}
    \label{fig:sub2}
  \end{minipage}
   \hfill
    \begin{minipage}[t]{0.32\textwidth}
    \centering
    \begin{tikzpicture}
      \begin{axis}[axis x line=middle, axis y line=none, width=1.2\linewidth,
        xmin=0, xmax=105, ymin=-1, ymax=2,
        xtick={0,40,90,100}, 
        xticklabels={$0$,$\low{b}_i$,$\high{b}_i$,$M$},
        tick style={thick},
        clip=false, axis line style={->, thick}]
        \draw[->, thick] (axis cs:90,0.15) -- (axis cs:90,0.01);
        \node[above] at (axis cs:90,0.14) {$i$};
        \draw[->, thick] (axis cs:35,0.15) -- (axis cs:35,0.01);
        \node[above] at (axis cs:35,0.12) {$j$};
        \draw[->, thick, dashed,dash pattern=on 2pt off 1pt] (axis cs:90,0) to [out=100,in=80] (axis cs:80,0);
        \draw[->, thick, dashed,dash pattern=on 2pt off 1pt] (axis cs:80,0) to [out=100,in=80] (axis cs:70,0);
        \draw[->, thick, dashed,dash pattern=on 2pt off 1pt] (axis cs:85,0) to [out=-100,in=-80] (axis cs:75,0);
        \draw[->, thick, dashed,dash pattern=on 2pt off 1pt] (axis cs:75,0) to [out=-100,in=-80] (axis cs:65,0);
        \draw[->, thick, dashed,dash pattern=on 2pt off 1pt] (axis cs:35,-0.01) to [out=-60,in=-120] (axis cs:85,0);
        \draw[->, white, thick, dashed] (axis cs:40,0.01) to [out=60,in=120] (axis cs:90,0.01);
      \end{axis}
    \end{tikzpicture}
    \caption{}
    \label{fig:sub3}
  \end{minipage}
\end{figure}
'\smallskip

\noindent The cornerstone idea of our analysis is that the price threshold for agent $i \in [n]$ is exactly $\low{b}_i:= \lceil c_i + \max_{b'_i\in[M]}U_i(b'_i,\mathbf{c}_{-i}) / s_i\rceil$. This is because even if agent $i \in [n]$ sells her whole energy $s_i \in [0,1]$ at price strictly lower than $\low{b}_i$ then her payoff smaller than $\max_{b'_i\in[M]}U_i(b'_i,\mathbf{c}_{-i})$. However agent $i \in [n]$ can always guarantee utility $\max_{b'_i\in[M]}U_i(b'_i,\mathbf{c}_{-i})$ by bidding $\high{b}_i$\footnote{If each agent $j \neq i$ bids greater or equal her marginal cost $b_j \geq c_j$ then agent $i \in [n]$ gets at least $\max_{b'_i\in[M]}U_i(b'_i,\mathbf{c}_{-i})$ by bidding $\high{b}_i$.}. 

        



Once agent $i \in [n]$ has moved to $\high{b}_i$ then the rest agent $j \neq i$ can now sell their whole supply $s_j$ at a price almost $\high{b}_i$ (see Fig.~\ref{fig:sub3}). This will lead to a new decreasing phase since agent $i \in [n]$ now would be willing to decrease her bid to sell her whole supply $s_i \in [0,1]$. This will lead to a new sequence of price-decreases (see Fig.~\ref{fig:sub1}) until agent $i \in [n]$ moves to $\high{b}_i$ (see Fig.~\ref{fig:sub2}) and the whole process will cyclically repeat. 
\begin{remark}
The first agent $i \in [n]$ that will no longer be willing to further decrease the price (Fig.~\ref{fig:sub1}) but rather switch to $\high{b}_i$ is the agent with $i^\star:= \mathrm{argmax}_{i \in [n]} \low{b}_i$. Intuitively, a best-response sequence of PB will admit this cyclic trajectory on the interval $[\low{b}_{i^\star},\high{b}_{i^\star}]$ that is smaller than $\max_{i}\high{b}_i$ which is the lower bound of PC.   
\end{remark}

Building on the above intuition on the best-response dynamics in PB, we establish that any mixed NE of PB approximately lies on the interval $[\low{b}_{i^\star},\high{b}_{i^\star}]$. The first step towards this direction is establishing that any agent $i \in [n]$ can always get utility $\max_{b'_i\in[M]} U(b'_i,\mathbf{c}_{-i})$ regardless of the other agents' bids. The latter is formally established in Lemma~\ref{lem:least-utility-PB}, the proof of which is deferred to Appendix~\ref{proof:lem:ub-low-cost-agent}.

\begin{restatable}{lemma}{LleasttuilityPB}
\label{lem:least-utility-PB}
In Pay-as-Bid, consider a bidding profile $\vecb$, we have $U_i(\mathbf{b})\geq U_i(b_i,
\vecb'_{-i})$ when $b'_j\leq b_j$ for any $j\neq i$. As a result, in any mixed NE $\sigma$, we have that $\Ex_{\mathbf{b}\sim\sigma}[U_i(\mathbf{b})]\geq \max_{b'_i\in[M]}U(b'_i,\mathbf{c}_{-i}).$
\end{restatable}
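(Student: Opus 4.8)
\textbf{Proof proposal for Lemma~\ref{lem:least-utility-PB}.}

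The plan is to establish the two claims in sequence: first the pointwise monotonicity statement $U_i(\vecb) \geq U_i(b_i, \vecb'_{-i})$ whenever $b'_j \leq b_j$ for every $j \neq i$, and then derive the equilibrium consequence as an easy corollary. For the monotonicity claim, I would fix agent $i$ and her bid $b_i$, and compare the two bid profiles $\vecb = (b_i, \vecb_{-i})$ and $(b_i, \vecb'_{-i})$, where the opponents only lower their bids. The key observation is that in Pay-as-Bid the price paid to $i$ is always exactly $b_i$, so $U_i(b_i, \vecb_{-i}) = (b_i - c_i)\cdot \widehat{x}_i(b_i, \vecb_{-i})$; the entire argument therefore reduces to showing that the allocation $\widehat{x}_i$ to agent $i$ can only \emph{decrease} when the other agents lower their bids, i.e.\ $\widehat{x}_i(b_i, \vecb_{-i}) \geq \widehat{x}_i(b_i, \vecb'_{-i})$ (combined with the sign of $b_i - c_i$: if $b_i \geq c_i$ both utilities are nonnegative and monotonicity of the allocation suffices; if $b_i < c_i$ then bidding $b_i$ is weakly dominated and one checks the inequality directly, or one restricts attention to the relevant case as the paper implicitly does).

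The heart of the matter is the monotonicity of the cost-minimizing allocation in the opponents' bids. I would argue as follows. Lowering the bid of some agent $j \neq i$ can only move $j$ \emph{earlier} in the order $\prec$, hence can only weakly advance the position of the pivotal agent $\tau$ relative to $i$. Concretely, consider agent $i$'s rank in the ascending order: the set $\{k : k \prec_\vecb i\}$ of agents strictly preceding $i$ can only grow when opponents lower their bids (ties are broken by index, so an opponent dropping from above $b_i$ to below, or to equal with smaller index, joins this set; no agent leaves it). Therefore the prefix supply $\sum_{k \prec_\vecb i} s_k$ is nondecreasing. If this prefix already covers the demand $1$ without $i$, then $\widehat{x}_i = 0$ in the new profile, so the inequality is trivial. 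Otherwise $i$ is still among the winners or the pivot; writing $\widehat{x}_i = \min\{s_i,\, 1 - \sum_{k \prec_\vecb i} s_k\}$ when $i$ is a winner or the pivot and $0$ otherwise, and noting $1 - \sum_{k\prec_\vecb i} s_k$ is nonincreasing, we get $\widehat{x}_i(b_i,\vecb_{-i}) \geq \widehat{x}_i(b_i,\vecb'_{-i})$ in every case. I expect this case analysis on whether $i$ is a loser, a winner, or the pivot to be the main obstacle: one must be careful that the order relation $\prec$ is a total order with index-based tie-breaking, so ``$j$ lowers its bid'' must be handled both when the new bid is strictly below $b_i$ and when it equals $b_i$ (and even when it stays above), and one must confirm no winning agent relative to $i$ is ever \emph{removed} from the prefix by such a change.

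For the equilibrium consequence, I would instantiate the monotonicity claim with the specific comparison profile in which every opponent bids truthfully, $\vecb'_{-i} = \vecc_{-i}$: since bids are nonnegative and lie in $[M]$, and in particular $c_j \leq b_j$ need \emph{not} hold pointwise for an arbitrary profile $\vecb$, I instead use the weaker form of the lemma more carefully --- in fact the paper's hypothesis ``$b'_j \le b_j$'' is exactly what lets us take $\vecb'_{-i} = \vecc_{-i}$ \emph{only when} $c_j \le b_j$ for all $j$. The cleaner route, and the one I would actually write, is: agent $i$ can always \emph{guarantee} utility $\max_{b'_i \in [M]} U_i(b'_i, \vecc_{-i})$ by the following deviation argument --- let $b^*_i$ attain that maximum; if $i$ deviates to $b^*_i$ against \emph{any} opponent profile $\vecb_{-i}$, then since every real opponent bid $b_j$ is at least $\ldots$ hmm, this still needs $b_j \ge c_j$. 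I would therefore use the least-utility structure directly: by the pointwise claim applied with $b_i = b^*_i$ and the observation that the winners' prefix under $\vecb_{-i}$ with arbitrary opponent bids contains at most the agents it contains under $\vecc_{-i}$ \emph{is false in general}, so instead I invoke that an agent's best response utility is at least her utility under the all-truthful opponent profile because at equilibrium each agent plays a best response, and a best response does at least as well as the deviation to $b^*_i$; bounding $\Ex_{\vecb_{-i} \sim \sigma_{-i}}[U_i(b^*_i, \vecb_{-i})] \geq U_i(b^*_i, \vecc_{-i})$ then requires exactly the monotonicity already proved together with the fact that at equilibrium opponents never bid below cost (a separate, easy lemma). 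Taking the maximum over $b^*_i \in [M]$ yields $\Ex_{\vecb \sim \sigma}[U_i(\vecb)] \geq \max_{b'_i \in [M]} U_i(b'_i, \vecc_{-i})$, completing the proof.
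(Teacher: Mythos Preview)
Your proposal is correct and follows essentially the same route as the paper: first, monotonicity of the cost-minimizing allocation $\widehat{x}_i$ in the opponents' bids (the paper asserts this in one line, you give the prefix-sum case analysis), and second, the chain $\Ex_{\vecb\sim\sigma}[U_i(\vecb)] \geq \Ex_{\vecb_{-i}\sim\sigma_{-i}}[U_i(b^\star_i,\vecb_{-i})] \geq U_i(b^\star_i,\vecc_{-i})$ using the equilibrium condition followed by the pointwise monotonicity with $\vecb'_{-i}=\vecc_{-i}$.

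Your mid-paragraph hesitation about whether $c_j \leq b_j$ holds is resolved exactly as you eventually say: at any mixed NE no agent places positive mass on bids strictly below cost (such a bid yields nonpositive expected utility, while bidding $c_j$ yields at least zero; and if it yields zero because the allocation is always zero, the monotonicity step for agent $i$ still goes through). The paper uses this implicitly without stating it; you are right to flag it as a separate easy fact. One small clarification: for the first part, the paper (like you) tacitly assumes $b_i \geq c_i$ so that the sign of $(b_i - c_i)$ does not flip the inequality; this is harmless since the only application is with $b_i = b^\star_i \geq c_i$.
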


Lemma~\ref{lem:least-utility-PB} established that any bid lower than $\low{b}_i$ will not appear in an mixed NE since it gives a lower utility than $\max_{b'_i\in[M]} U(b'_i,\mathbf{c}_{-i})$ even if agent $i$ sells all of her supply $s_i \in [0,1]$. We can then further establish that an agent $i \in [n]$ will not bid lower than $\max_{i}\low{b}_{i^\star}-1$ since agent $i^\star := \argmax_{i}\low{b}_{i}$ always bids at least $\low{b}_{i^\star}$. The latter is formally established in Theorem~\ref{thm:LB-PB}, providing a lower bound on the support of any mixed NE of PC (see Appendix~\ref{proof:thm:LB-PB} for the proof).
\begin{restatable}{theorem}{TLBPB}
    \label{thm:LB-PB}
    In any mixed NE $\sigma$, every agent bids at least $\max_{i\preceq_{\mathbf{c}}\tau(\mathbf{c})} \low{b}_i -1$ with probability of $1$. Thus, the unit price given by $\sigma$ is at least $\max_{i\preceq_{\mathbf{c}}\tau(\mathbf{c})} \low{b}_i-1$.
\end{restatable}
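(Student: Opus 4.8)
The plan is to first show that no agent ever places positive probability on a bid strictly below $\low{b}_i$, and then to bootstrap this into the claimed uniform lower bound using the agent $i^\star$ that attains $\max_{i\preceq_{\mathbf c}\tau(\mathbf c)}\low{b}_i$. For the first part I would argue by contradiction: suppose in a mixed NE $\sigma$ agent $i$ bids some $b_i < \low{b}_i$ with positive probability. Because $p_i^{\sss\mathrm{(PB)}}(\vecb)=b_i$ and the allocation never exceeds $s_i$, any realization with bid $b_i$ yields agent $i$ at most $(b_i-c_i)\cdot s_i < \max_{b'_i\in[M]}U_i(b'_i,\mathbf c_{-i})$, where the strict inequality is exactly the definition of $\low{b}_i$ (it is the least price at which selling $s_i$ units clears the benchmark utility). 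Averaging over the conditional distribution of the realization of $\vecb$ given $b_i$, the conditional expected utility of this bid is strictly below the benchmark. But Lemma~\ref{lem:least-utility-PB} guarantees that agent $i$'s overall expected utility in any mixed NE is at least $\max_{b'_i\in[M]}U_i(b'_i,\mathbf c_{-i})$, and in equilibrium every bid in the support must achieve (weakly more than) this expected value; hence $b_i$ cannot be in the support — contradiction. This shows every agent $i\preceq_{\mathbf c}\tau(\mathbf c)$ bids at least $\low{b}_i$ almost surely, and in particular $i^\star$ bids at least $\low{b}_{i^\star}=\max_{i\preceq_{\mathbf c}\tau(\mathbf c)}\low{b}_i$ with probability~$1$.

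For the second part I would transfer this constraint to every other agent via a best-response/undercutting argument. Fix any agent $k$ and suppose, toward a contradiction, that $k$ bids some value $v \le \max_{i\preceq_{\mathbf c}\tau(\mathbf c)}\low{b}_i - 2$ with positive probability. Consider the deviation where $k$ instead bids $v+1$ (still $< \low{b}_{i^\star}$). Since $i^\star$ always bids at least $\low{b}_{i^\star} \ge v+2 > v+1$, on every realization where $k$ originally bid $v$, the pivotal agent and the clearing structure are unaffected by moving $k$'s bid from $v$ to $v+1$: agent $k$ stays in precisely the same position in the order $\prec_{\vecb}$ relative to everyone whose bid could matter (all the agents that could push $k$ out lie weakly above $\low{b}_{i^\star}$), so $k$ sells the same amount $\widehat x_k$ but now at the strictly higher unit price $v+1$. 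Hence this deviation strictly increases $k$'s utility on a positive-probability event and weakly elsewhere, contradicting the equilibrium property. (One must handle the borderline realizations carefully — where $k$ is itself the pivotal agent — but there selling the same amount at a higher price is still a strict gain, and where $k$ sells zero the deviation is neutral.) Therefore every agent bids at least $\max_{i\preceq_{\mathbf c}\tau(\mathbf c)}\low{b}_i - 1$ with probability~$1$, and since the unit price is a convex combination of accepted bids (individual rationality gives $p_j \ge b_j$, and for PB equality), the unit price is at least this value.

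\textbf{Main obstacle.} The delicate step is the undercutting argument in the second part: I need to verify that raising $k$'s bid by one never causes $k$ to be displaced from the allocation, which relies on the fact that all agents capable of displacing $k$ bid at or above $\low{b}_{i^\star}$ — but this is only known for agents $i\preceq_{\mathbf c}\tau(\mathbf c)$, so I have to check that agents with very high costs (outside this prefix) cannot interfere with the relevant realizations, and that ties in $\prec_{\vecb}$ (resolved by index) do not flip against $k$ when its bid increases by exactly one. Getting the ordering bookkeeping exactly right — especially the case analysis on whether $k$ is below, equal to, or above the pivotal agent both before and after the deviation — is where the real work lies; the rest is a direct application of Lemma~\ref{lem:least-utility-PB} and the definition of $\low{b}_i$.
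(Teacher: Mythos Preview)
Your overall two-step plan is the paper's, and the first step is correct: Lemma~\ref{lem:least-utility-PB} together with the definition of $\low{b}_i$ forces every bid in the support of $\sigma_{i^\star}$ to be at least $\low{b}_{i^\star}$.

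The gap is in the second step. Your claim that ``all the agents that could push $k$ out lie weakly above $\low{b}_{i^\star}$'' is not what Part~1 gives you: Part~1 only says that each agent $j$ bids at least its \emph{own} $\low{b}_j$, which may be far below $\low{b}_{i^\star}$ for $j\neq i^\star$. So other low-cost agents can perfectly well sit at $v+1$, and the increment $v\to v+1$ could in principle cost $k$ priority and allocation. The missing ingredient --- and the paper's actual key step --- is the supply inequality
\[
\sum_{j:\,c_j<\low{b}_{i^\star}} s_j \;<\; 1,
\]
which follows because $i^\star$ sells a positive amount when bidding $\high{b}_{i^\star}\geq\low{b}_{i^\star}$ against $\mathbf{c}_{-i^\star}$. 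Once you have this, every agent bidding at most $\low{b}_{i^\star}-1$ necessarily has cost below $\low{b}_{i^\star}$, so the total supply at or below that level is under~$1$ and every such agent sells its full $s_j$; raising $k$'s bid anywhere in $[c_k,\low{b}_{i^\star}-1]$ therefore keeps $x_k=s_k$ and strictly raises the payment. Your ``main obstacle'' paragraph misdiagnoses the difficulty as tie-breaking bookkeeping and interference from high-cost agents outside the prefix; the actual fix is this one-line supply bound, after which no case analysis on the order $\prec_\vecb$ is needed at all.
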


Theorem~\ref{thm:LB-PB} establishing that at any mixed NE of Pay-as-Bid all agents bid at least $\max_{i\preceq_{\mathbf{c}}\tau(\mathbf{c})} \low{b}_i-1$. As a result, in case agent $i \in [n]$ bids $\low{b}_i$ then it is guaranteed to sell all of her energy $s_i \in [0,1]$ at price $\low{b}_i$ resulting in utility $U_i(\high{b}_i,\mathbf{c}_{-i})$. The next crucial step of our analysis is establishing that once an agent bids higher than $\high{b}_i$, her utility is strictly smaller than $U_i(\high{b}_i,\mathbf{c}_{-i})$. By leveraging the latter, we are able to establish that any bid $b_i > \high{b}_i$ must necessarily admit $0$ probability in a mixed NE. The latter is formally established in Theorem~\ref{thm:UB-PB}, the proof of which is deferred in Appendix~\ref{proof:thm:UB-PB}.

\begin{restatable}{theorem}{TUBPB}
    \label{thm:UB-PB}
The support of any mixed NE $\sigma$ of Pay-as-Bid is at most $\max_{i\preceq_\mathbf{c}\tau(\mathbf{c})}\high{b}_i$.
\end{restatable}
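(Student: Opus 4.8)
\noindent Write $\high{b}^\star:=\max_{i\preceq_{\mathbf c}\tau(\mathbf c)}\high{b}_i$ and $\low{b}^\star:=\max_{i\preceq_{\mathbf c}\tau(\mathbf c)}\low{b}_i$, and let $\mathcal U_i:=\max_{b'_i\in[M]}U_i(b'_i,\mathbf c_{-i})$ be the payoff agent $i$ can always guarantee; recall that Lemma~\ref{lem:least-utility-PB} gives $\Ex_{\vecb\sim\sigma}[U_i(\vecb)]\ge\mathcal U_i$ in every mixed NE, and Theorem~\ref{thm:LB-PB} gives that every agent bids at least $\low{b}^\star-1$ almost surely. The plan is a contradiction argument: fix a mixed NE $\sigma$ of Pay-as-Bid, let $\beta$ be the largest bid that occurs with positive probability under any $\sigma_j$, suppose $\beta>\high{b}^\star$, and pick an agent $i$ with $\sigma_i(\beta)>0$; I then aim to show that bidding $\beta$ gives agent $i$ expected payoff strictly below $\mathcal U_i$ against $\sigma_{-i}$, which — since $\beta$ lies in her support — contradicts the equilibrium condition via Lemma~\ref{lem:least-utility-PB}.

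\noindent Two structural facts drive the argument. First, the reason for taking the \emph{global} maximum $\beta$: whenever agent $i$ bids $\beta$ and sells a positive amount, every agent ranked before her bids at most $\beta$, while the pivotal agent (either $i$ herself or someone ranked after her) also bids at most $\beta$ and at least $b_i=\beta$, so the clearing price equals exactly $\beta$; hence her Pay-as-Bid payoff at that profile is $(\beta-c_i)\,\widehat x_i(\beta,\vecb_{-i})$, it is determined solely by the set $A$ of agents ranked before $i$, and it coincides with her Pay-as-Clear payoff there. Second, since $\widehat x_i(\beta,\vecb_{-i})$ — hence this payoff — depends on $\vecb_{-i}$ only through $A$, I can, using Theorem~\ref{thm:LB-PB}, replace for each $\vecb_{-i}$ in the support of $\sigma_{-i}$ the profile $\vecb_{-i}$ by a ``cost-or-cost-plus-one'' profile $\mathbf c_{-i}+\mathbf d_{-i}$ with $\mathbf d_{-i}\in\{0,1\}^{n-1}$ that induces the same set $A$, by setting $d_j=0$ for the agents ahead of $i$ and $d_j=1$ for the agents behind; this is legitimate because $\beta>\high{b}^\star$ exceeds $c_j+1$ for every agent that can sit ahead of $i$.

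\noindent With these in hand the conclusion goes as follows. For each $\vecb_{-i}$ in the support, agent $i$'s Pay-as-Bid payoff from bidding $\beta$ equals $(\beta-c_i)\,\widehat x_i(\beta,\mathbf c_{-i}+\mathbf d_{-i})$ for a suitable $\mathbf d_{-i}$; this quantity is weakly increasing in $\mathbf d_{-i}$ (raising an opponent's bid only shrinks the set of agents ahead of $i$), so it is bounded above by $(\beta-c_i)\,\widehat x_i(\beta,\mathbf c_{-i}+\mathbf 1_{-i})$, the value of bidding $\beta$ against the all-cost-plus-one profile. But $\mathbf c_{-i}+\mathbf 1_{-i}$ is one of the profiles appearing in the definition of $\high{b}_i$, and $\beta>\high{b}^\star\ge\high{b}_i$ strictly exceeds every Pay-as-Clear best response of agent $i$ against it; comparing $\beta$ with that best response, and using that bidding around $\low{b}_i$ already lets agent $i$ secure at least $\mathcal U_i$, one obtains that bidding $\beta$ against $\mathbf c_{-i}+\mathbf 1_{-i}$ yields strictly less than $\mathcal U_i$. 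Hence $U_i(\beta,\vecb_{-i})<\mathcal U_i$ for every $\vecb_{-i}$ in the support, and averaging over the finite support preserves the strict inequality, giving the contradiction; the statement about the unit price then follows since the Pay-as-Bid per-unit price is always a convex combination of the submitted bids.

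\noindent I expect the main obstacle to be the reduction step together with the endgame inequality. The reduction must track ties at $\beta$ and the index-based tie-break so that $A$ is reproduced exactly, and must handle the high-cost agents $j\succ_{\mathbf c}\tau(\mathbf c)$ whose cost may exceed $\beta$ — so a shift $d_j\in\{0,1\}$ cannot move them past $i$ — which is dispatched by noting that the $\tau(\mathbf c)$ cheapest agents already cover the whole demand, so such agents sit behind $i$ and do not affect $\widehat x_i$. The inequality ``bidding $\beta>\high{b}_i$ against $\mathbf c_{-i}+\mathbf 1_{-i}$ earns strictly less than $\mathcal U_i$'' is where $\high{b}_i$ and $\low{b}_i$ have to be played off against each other carefully and where \emph{strictness} (rather than ``$\le$'') is needed; both rely on the same mild non-degeneracy invoked elsewhere in the paper, which also explains why the literal ``support'' claim is cleanest once degenerate agents are set aside (those with $\mathcal U_i=0$ that never sell, whose occasional high bids are immaterial to every allocation).
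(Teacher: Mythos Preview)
Your overall plan — reduce the opponents' profile to something of the form $\mathbf c_{-i}+\mathbf d_{-i}$ and invoke the definition of $\high{b}_i$ — is the same as the paper's, but the argument has a real gap at the endgame inequality. After upper-bounding $U_i(\beta,\vecb_{-i})$ by $U_i(\beta,\mathbf c_{-i}+\mathbf 1_{-i})$ (valid by monotonicity), you assert that $U_i(\beta,\mathbf c_{-i}+\mathbf 1_{-i})<\mathcal U_i$. But $\beta>\high{b}_i$ only yields $U_i(\beta,\mathbf c_{-i}+\mathbf 1_{-i})<\max_{b'}U_i(b',\mathbf c_{-i}+\mathbf 1_{-i})$, and that maximum can strictly exceed $\mathcal U_i=\max_{b'}U_i(b',\mathbf c_{-i})$; indeed, a uniform $+1$ shift of the opponents can raise agent $i$'s optimal PB payoff by as much as $s_i$ (this is exactly inequality~\eqref{inq:tighter_interval_2} in the paper). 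Concretely, $\widehat x_i(\beta,\mathbf c_{-i}+\mathbf 1_{-i})=\widehat x_i(\beta-1,\mathbf c_{-i})$, so $U_i(\beta,\mathbf c_{-i}+\mathbf 1_{-i})=U_i(\beta-1,\mathbf c_{-i})+\widehat x_i(\beta-1,\mathbf c_{-i})$, and when $\beta-1$ is close to a best response against $\mathbf c_{-i}$ this can land at or above $\mathcal U_i$. The sentence ``using that bidding around $\low{b}_i$ already lets agent $i$ secure at least $\mathcal U_i$'' does not repair this: that is a statement about the actual equilibrium $\sigma_{-i}$ (which is how Lemma~\ref{lem:least-utility-PB} gives the lower bound you already used), not about the artificial profile $\mathbf c_{-i}+\mathbf 1_{-i}$ you have switched to. So the contradiction does not close.

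The paper avoids this by a \emph{two-sided} sandwich with a single carefully chosen $\mathbf d^\star_{-i}$ (namely $d^\star_j=1$ exactly when $c_j=t-1$), together with the choice of $i$ as the lowest-priority low-cost agent bidding the maximum $t$. It shows both $\Ex_{\vecb_{-i}}[U_i(t,\vecb_{-i})]\le U_i(t,\mathbf c_{-i}+\mathbf d^\star_{-i})$ and, via the equilibrium condition, $\Ex_{\vecb_{-i}}[U_i(t,\vecb_{-i})]\ge\max_{b'_i}U_i(b'_i,\mathbf c_{-i}+\mathbf d^\star_{-i})$; the second inequality uses that for every $b'_i\neq t$ this particular $\mathbf d^\star_{-i}$ \emph{minimizes} $i$'s allocation, which is precisely what your all-ones profile $\mathbf 1_{-i}$ destroys. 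From the sandwich, $t$ is itself a best response to $\mathbf c_{-i}+\mathbf d^\star_{-i}$, hence $t\le\high{b}_i$ — no strict comparison with $\mathcal U_i$ is ever needed. Two smaller points worth tightening as well: your per-realization reduction sets $d_j=1$ for agents ``behind'' $i$, but such an agent has $b_j=\beta$ and possibly $c_j+1\ll\beta$, so $d_j=1$ does not keep her behind (fixable by taking $i$ of lowest priority among those bidding $\beta$, so no one is behind); and taking $\beta$ as the \emph{global} maximum can make $i$ a high-cost agent with $\mathcal U_i=0$ that never sells, for which bidding $\beta$ yields exactly $0$ and no strict inequality is available — the paper handles high-cost agents by a separate one-line step once the bound for low-cost agents is in place.
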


Theorem~\ref{t:2} follows directly by  Theorem~\ref{thm:LB-PB} and Theorem~\ref{thm:UB-PB}. With a more technically complicated analysis presented in Appendix~\ref{proof:thm:UB-PB-better}, we can establish Theorem~\ref{thm:UB-PB-better} that provides an even tighter bound. 

\section{Experimental Evaluation}

\label{s:exp}
\begin{figure}[htbp]
  \centering
  \begin{minipage}[t]{0.48\textwidth}
    \centering
    \includegraphics[height=4cm]{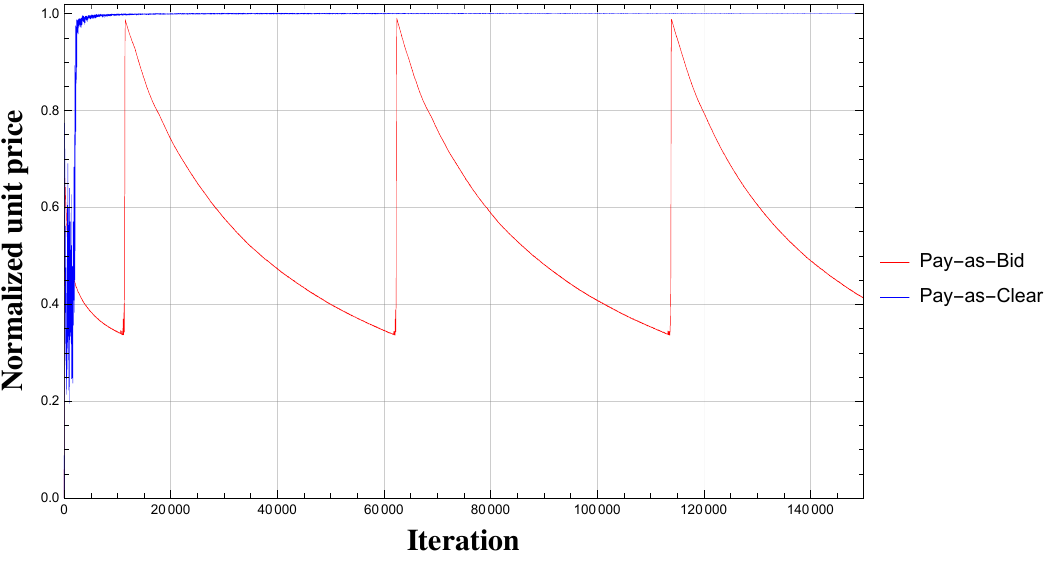}
    \caption{The normalized unit price changing by iteration in an energy market with $M=800$ and $n = 4$ producers with supplies $s_1=s_2=s_3=s_4=0.3$ and marginal costs $c_1=c_2=c_3=c_4=0$. For every agent $i \in [n]$, we have $\high{b}_i=800$ and $\low{b}_i=267$.}
    \label{fig:s4agent}
  \end{minipage}
  \hfill
  \begin{minipage}[t]{0.48\textwidth}
    \centering
    \includegraphics[height=4cm]{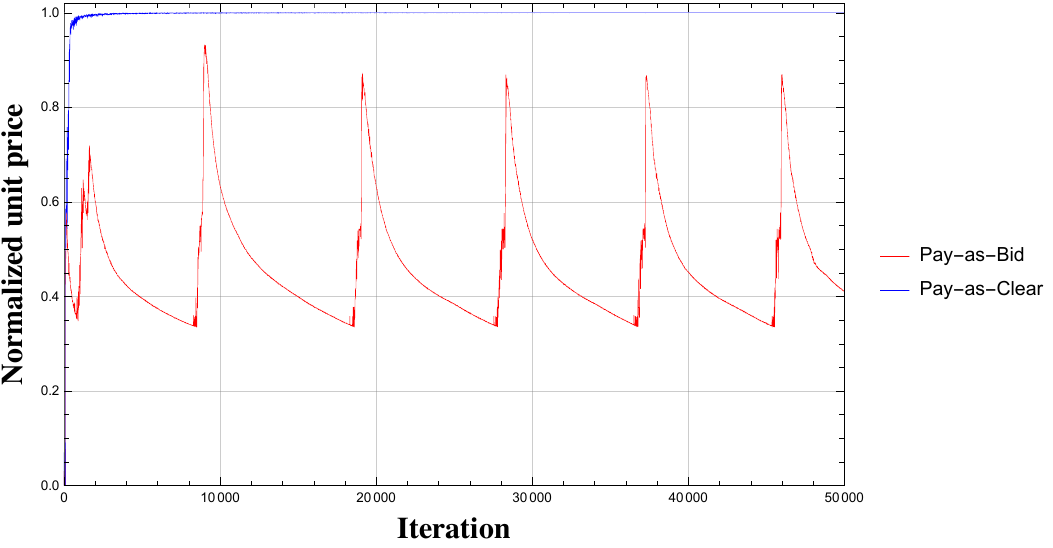}
    \caption{The normalized unit price changing by iteration in an energy market with $M=900$ and $n=3$ producers with supplies $s_1=0.5, s_2=0.75, s_3=0.25$ and marginal costs $c_1=c_2=0, c_3=300$.}
    \label{fig:PC-may-be-better}
  \end{minipage}
\end{figure}

\noindent We simulate the dynamic evolution of unit prices in an energy market where all producers utilize a no-regret learning algorithm for bidding (see also Appendix~\ref{a:exp} for additional experimental evaluations).

We employ the Hedge algorithm, initialized with a uniform distribution over the strategy space~\cite{H17}. In each iteration, the Hedge algorithm updates its beliefs based on the bidding profile sampled from the joint distribution of the agents' mixed strategies.

Our simulations are conducted on several representative market instances, including the illustrative example presented previously. Figure \ref{fig:s4agent} displays the evolution of the normalized unit price for both PC and PB in a symmetric market with four agents. In the PC mechanism, bids rapidly converge to an equilibrium where the unit price aligns with the upper bound established by Theorem \ref{thm:UB-PC}. In contrast, the unit price in the PB mechanism exhibits a distinct periodic behavior. Initially, the unit price quickly approaches $\max_{i\preceq_{\mathbf{c}}\tau(\mathbf{c})}\high{b}_i$, consistent with the upper bound provided by Theorem \ref{thm:UB-PB}. Subsequently, the price gradually declines towards $\max_{i\preceq_{\mathbf{c}}\tau(\mathbf{c})}\low{b}_i-1$, which corresponds to the lower bound specified by Theorem \ref{thm:LB-PB}. Following this decrease, the unit price sharply jumps back to approximately $\max_{i\preceq_{\mathbf{c}}\tau(\mathbf{c})}\high{b}_i$, and this cyclical pattern repeats. Figure \ref{fig:PC-may-be-better} further illustrates the evolution of the unit price in the market described by Example \ref{example:best-PC-can-be-better} with $\delta=300$. We observe a similar phenomenon and a lower unit price in PB, despite the existence of a pure NE in PC that outperforms any mixed NE in PB in this instance.\footnote{The complete construction and justification are provided in Appendix~\ref{app:bestPC}.} Reaching this superior pure NE in PC is challenging, as the convergence of the Hedge algorithm is highly sensitive to initialization and randomness, due to the simplistic structure of PC. Crucially, in these market instances, our experiments consistently demonstrate that the time-average unit price under PB is significantly lower than that under PC, particularly when agents deploy no-regret learning algorithms. More plots are left in Appendix~\ref{a:exp} for verification.

\section{Conclusion} The central result of this work is that the worst-case Nash equilibrium under the Pay-as-Bid mechanism always yields lower energy prices than the worst-case NE under the Pay-as-Clear mechanism. While this does not preclude the existence of specific mixed NEs under PC with lower prices than certain mixed NEs under PB, it nonetheless provides the strongest possible worst-case guarantee. Specifically, we show that PC, like PB and the VCG mechanism, cannot be strongly dominated by any alternative mechanism.

Our experimental evaluations support our theoretical findings, indicating that online learning dynamics consistently lead to lower average unit prices under PB compared to PC. An intriguing direction for future research is to derive formal price guarantees for both PB and PC under the assumption that all producers employ no-regret online learning algorithms to determine their bids.

\subsection*{Acknowledgements}
The work of Zhile Jiang and Stratis Skoulakis was funded by the Villum Young Investigator Award no.~72091.

\bibliographystyle{splncs04}
\bibliography{reference}

\clearpage

\appendix

\section{Proof of Theorem~\ref{t:1}}\label{proof:t:1}

\Tnodominating*

\begin{proof}

Consider the agents $A$ and $B$, both of which admit marginal cost $0$ and supply $1$. Up next, we show that any mechanism $\mathcal{M}$ must admit at least one mixed NE with a positive unit price.  
\smallskip

\noindent For each $k \in \mathbb{N}$, assume that the agents $A$ and $B$ admit marginal cost $\epsilon_k = 2^{-k}$ (instead of $0$) and consider a mixed NE $\sigma^k = (\sigma^k_A,\sigma^k_B)$ for this different two-player game. Let the sequence  $\mathcal{S}:=\sigma^1,\sigma^2,\ldots,\sigma^k,\ldots$ for each $k \in \mathrm{N}$ and notice that there are the following mutually exclusive cases.   
\begin{enumerate}
\item There exists an infinite subsequence $\mathcal{S}' \subseteq 
\mathcal{S}$ such that for any $\sigma_k \in \mathcal{S}'$,  $\mathrm{support}(\sigma^k_A) \neq \{0\}$ and $\mathrm{support}(\sigma^k_B) \neq \{0\}$

\smallskip
\smallskip

\item There exists an infinite subsequence $\mathcal{S}' \subseteq 
\mathcal{S}$ such that for any $\sigma_k \in \mathcal{S}'$,  $\mathrm{support}(\sigma^k_A) = \{0\}$
\end{enumerate}

\noindent Our proof consists of showing in any case the
sequence $\mathcal{S}'$ converges to a point $\lim_{k \rightarrow \infty} \sigma^k = \sigma^\star$ that is a mixed NE for the original game (marginal costs of the agent are $0$) while the payment of the mechanism at $\sigma^\star$ is positive. We will establish the latter for two cases separately.

\begin{itemize}
\item \ul{There exists an infinite subsequence $\mathcal{S}' \subseteq 
\mathcal{S}$ such that for any $\sigma_k \in \mathcal{S}'$,  $\mathrm{support}(\sigma^k_A) \neq \{0\}$ and $\mathrm{support}(\sigma^k_B) \neq \{0\}$}.
\smallskip

Let $\sigma^\star = (\sigma^\star_A, \sigma^\star_B)$ be the limiting point of $\mathcal{S}'$. If there exist $i \geq 1$ and $j \geq 1$ such that $\sigma^\star_A(i) > 0$ and $\sigma^\star_B(j) > 0$ then the expected payment of $\sigma^\star$ is at least \[\sigma^\star_A(i) \cdot \sigma^\star_B(j) \cdot \left(x_A(i,j)\cdot p_A(i,j) + x_B(i,j)\cdot p_B(i,j)\right)  > 0\]
where the first inequality follows by individual rationality. Since $\lim_{k \rightarrow \infty} \epsilon_k = 0$ we know that $\sigma^\star$ is mixed NE for the original game (marginal costs $0$) and admits positive payment.
\smallskip
\smallskip

\noindent As a result, we just need to consider the case where $\sigma^\star_A =e_0 = (1,0,\ldots,0)$ (the case $\sigma^\star_B =e_0$ follows symmetrically). To complete the proof we will use the Lemma~\ref{l:main}. 

\begin{lemma}\label{l:main}
Let the sequence $\mathcal{S}' = \sigma^1,\sigma^2, \ldots$ then for any mixed NE $\sigma^k = (\sigma^k_A,\sigma^k_B)$ we can construct a strategy pair $\hat{\sigma}^k = (\hat{\sigma}^k_A,\hat{\sigma}^k_B)$ such that $i)$ $\hat{\sigma}^k$ is an $(1-M\cdot\sigma^k_A(0))$-approximate mixed NE $ii)$ the payment of $\hat{\sigma}^k$ is greater than $\mu_{M} > 0$ where is a mechanism-dependent constant. 
\end{lemma}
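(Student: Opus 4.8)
The plan is to prove Lemma~\ref{l:main} by a localization/renormalization argument that concentrates producer $B$'s mass on the ``good'' bids $C_k$ defined in the proof sketch, while keeping producer $A$ pinned at bid $0$. Concretely, given the mixed NE $\sigma^k = (\sigma^k_A,\sigma^k_B)$ of the perturbed game with marginal costs $\epsilon_k$, I set $\hat{\sigma}^k_A := e_0$ and let $\hat{\sigma}^k_B$ be $\sigma^k_B$ conditioned on the event $\{b_B \in C_k\}$, exactly as in the displayed definition in the sketch. The set $C_k = \{b_B : \sigma^k_B(b_B) > 0 \text{ and } x_A(0,b_B)\cdot p_A(0,b_B) > 0\}$ is nonempty because $\sigma^k$ is a NE and $\sigma^k_A(0) > 0$: agent $A$'s expected utility is at least its equilibrium value, which in turn must be nonnegative (agent $A$ can always guarantee $0$ by bidding $M$ and selling nothing, or by individual rationality never pays below cost), so in fact $A$'s expected payment when bidding $0$ is strictly positive since $c_A = \epsilon_k > 0$ forces a strictly positive revenue requirement — this is Equation~\ref{eq:1}.

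The payment bound (property $ii$) is the easy half: under $\hat{\sigma}^k$, producer $A$ bids $0$ and producer $B$ draws $b_B$ only from $C_k$, so the expected payment to $A$ is $\sum_{b_B \in C_k} \hat{\sigma}^k_B(b_B)\, x_A(0,b_B)\, p_A(0,b_B) \geq \min_{b_B \in C_k} x_A(0,b_B)\, p_A(0,b_B) =: \mu_{\mathcal{M}}$, which is a strictly positive constant depending only on the mechanism $\mathcal{M}$ (it is the minimum of a finite set of positive numbers, and the possible values $x_A(0,b_B)\, p_A(0,b_B)$ range over a finite set determined by $\mathcal{M}$ on the two-agent instance, so one can take $\mu_{\mathcal{M}}$ to be the smallest positive value in that finite set, uniformly over $k$). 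The main obstacle is property $(i)$: showing $\hat{\sigma}^k$ is an $(1-M\cdot\sigma^k_A(0))$-approximate NE. For producer $A$: at $\sigma^k$, bidding $0$ is a best response up to the equilibrium condition, and since $\hat{\sigma}^k_B$ only reweights mass that $\sigma^k_B$ already placed on bids beneficial to $A$, agent $A$'s utility from $0$ can only go up, while its utility from any deviation $b'_A$ is controlled by how much $\hat{\sigma}^k_B$ and $\sigma^k_B$ differ in total variation — and $\|\hat{\sigma}^k_B - \sigma^k_B\|_{\mathrm{TV}} \leq \sum_{\ell \notin C_k}\sigma^k_B(\ell) =: \eta_k$. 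For producer $B$: here the subtlety is that we have changed $B$'s own strategy, so we must check $B$ has no profitable deviation against $\hat{\sigma}^k_A = e_0$; but against the pure bid $0$ from $A$, every bid $b_B \in C_k$ must already be a near-best-response in the original NE (its conditional-on-$A$-bidding-$0$ payoff differs from $B$'s equilibrium payoff by at most the probability $A$ deviates from $0$, which is $1 - \sigma^k_A(0)$, times the payoff range $M$), and bids outside $C_k$ are removed, so $B$'s approximate-optimality degrades by at most $M(1-\sigma^k_A(0))$.

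Assembling the pieces: one shows that the approximation error for each agent is bounded by a term of the form (payoff range) $\times$ (probability that $A$ does not bid $0$) $= M\cdot(1 - \sigma^k_A(0))$, which matches the claimed $1 - M\cdot\sigma^k_A(0)$ bound up to the harmless identification (the intended reading is ``$M(1-\sigma^k_A(0))$-approximate'', i.e. the error vanishes as $\sigma^k_A(0)\to 1$). Then, along the subsequence $\mathcal{S}'$ where $\sigma^\star_A = e_0$, we have $\sigma^k_A(0) \to 1$, hence $\hat{\sigma}^k$ is an $o(1)$-approximate NE of the perturbed game; passing to a convergent subsequence of the $\hat{\sigma}^k$ (compactness of the product of simplices) and using $\epsilon_k \to 0$, the limit $\hat{\sigma}$ is an exact mixed NE of the original zero-cost game, and by lower semicontinuity of the (piecewise-constant, actually finitely-valued) payment functional along with the uniform lower bound $\mu_{\mathcal{M}}$, the payment at $\hat{\sigma}$ is at least $\mu_{\mathcal{M}} > 0$. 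The delicate point to get right in the write-up is the continuity/limit argument for payments, since $x_A\cdot p_A$ need not be continuous in the bids — but because bids lie in the finite set $[M]$ and we are only taking limits in the mixing probabilities, the payment is genuinely continuous (multilinear) in $\sigma$, so no semicontinuity subtlety actually arises; this should be stated carefully rather than waved through.
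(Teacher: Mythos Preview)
Your argument for agent~$B$ and for the payment lower bound are correct and match the paper. The gap is in your treatment of agent~$A$. You bound $A$'s deviation utilities via the total-variation distance $\eta_k := \sum_{\ell\notin C_k}\sigma^k_B(\ell)$, but $\eta_k$ is \emph{not} controlled by $1-\sigma^k_A(0)$; it can be arbitrarily close to~$1$ (for instance, $\sigma^k_B$ could place $99\%$ of its mass outside $C_k$). So your chain
\[
U_A(0,\hat{\sigma}^k_B)\ \geq\ U_A(0,\sigma^k_B)\ \geq\ U_A(i,\sigma^k_B)\ \geq\ U_A(i,\hat{\sigma}^k_B)-M\eta_k
\]
only yields an $M\eta_k$-approximate best response for $A$, which does not vanish along the subsequence and therefore does not give the limiting exact NE you need.

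The paper closes this gap not by a TV estimate but by a \emph{sign} argument using individual rationality, and in fact shows that $A$'s exploitability against $\hat{\sigma}^k_B$ is \emph{exactly zero}. For $j\notin C_k$ with $\sigma^k_B(j)>0$ one has $x_A(0,j)p_A(0,j)=0$, hence $x_A(0,j)(p_A(0,j)-\epsilon_k)\leq 0$; on the other hand, for any deviation $i\geq 1$ and any $j$, individual rationality gives $p_A(i,j)\geq i\geq 1>\epsilon_k$, hence $x_A(i,j)(p_A(i,j)-\epsilon_k)\geq 0$. Consequently, dropping the $j\notin C_k$ terms only \emph{increases} the left side and \emph{decreases} the right side of the original best-response inequality at $\sigma^k$, so after renormalization bid $0$ remains an exact best response for $A$ against $\hat{\sigma}^k_B$. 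You should replace the TV step for $A$ by this sign argument. Also note that the paper separately handles the sub-case in which $A$ sells zero energy with probability~$1$ under $\sigma^k$ (there one instead fixes $B$ on the payment-maximizing bid in its support), which your write-up does not address.
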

Before presenting the proof of Lemma~\ref{l:main}, we complete the proof of Theorem~\ref{t:1}. Lemma~\ref{l:main} establishes that there exists a sequence $\hat{\sigma}^1,\hat{\sigma}^2,\ldots,\hat{\sigma}^k,\ldots$ where each $\hat{\sigma}^k = (\hat{\sigma}^k_B, \hat{\sigma}^k_B)$ a $(1-M\cdot\sigma_A^k(0))$-approximate mixed NE and admits payment at least $\mu_G > 0$. Since we are in compact space there exists a convergent subsequence, $\lim_{k \rightarrow \infty} \sigma_k = \sigma^\star$. Since $\lim_{k \rightarrow \infty} \epsilon_k = 0$ and $\lim_{k \rightarrow \infty} \sigma_A^k(0) = 1$, we are ensured that $\sigma^\star$ is a mixed NE while admits payment at least $\mu_G > 0$.

\smallskip
\smallskip

\paragraph{Proof of Lemma~\ref{l:main}.}
\noindent Let the mixed NE $\sigma^k = (\sigma^k_A,\sigma^k_B)$. We first consider the case where agent $A$ sells a positive amount of energy. Since $A$ has marginal cost $\epsilon_k > 0$, agent $A$ needs to receive a positive payment (otherwise its utility is negative while $A$ can make it non-negative by bidding $1$).
\smallskip
\smallskip

Since the overall payment of $A$ must be positive and $0 \in \mathrm{support}(\sigma_A^k)$, the expected payment of $A$ when bidding $0$ must be positive,
\[ \sum_{j \in [M]} \sigma_B^k(j) \cdot x_A(0,j)\cdot p_A(0,j) >0. \]
As a result, the set $C_k := \{j \in [M]~:~\sigma^k_B(j) > 0 \text{ and } x_A(0,j)\cdot p_A(0,j) > 0\}$ is not empty. Now consider the probability distribution $\hat{\sigma}^k_B $ defined as,
\[\hat{\sigma}^k_B(j)  := \begin{cases}
      \frac{\sigma^k_B(j)}{1 -  \sum_{\ell \notin C_k}\sigma^k_B(\ell)}, &\text{if }j \in C_k, \\
      0, &\text{otherwise.}\\
\end{cases}
\]
We are going to show that $\hat{\sigma}_k := (e_0,\hat{\sigma}_B^k)$ 
where $e_0 = (1,0,\ldots,0)$ is an $\epsilon_k$-approximate mixed NE while it admits a positive payment.
Since $0 \in \mathrm{support}(\sigma_A^k)$ we are ensured that 
\[ \sum_{j \in [M]} \sigma_B^k(j) \cdot x_A(0,j)\cdot (p_A(0,j) - \epsilon_k) \geq \sum_{j \in [M]} \sigma_B^k(j) \cdot x_A(i,j)\cdot (p_A(i,j) - \epsilon_k)~~~~\text{for all bids } i \in [M]\]
By the definition of $j \in C_k$ we get that 
\[\sum_{j \notin C_k} \sigma_B^k(j) \cdot x_A(0,j)\cdot (p_A(0,j) - \epsilon_k) \leq 0\]
At the same time, for all bids $i \geq 1$
\[\sum_{j \notin C_k} \sigma_B^k(j) \cdot x_A(i,j)\cdot (p_A(i,j) - \epsilon_k) \geq 0\]
due to the fact that $p_A(i,j)  \geq 1$. Combining all the latter we get that for all bids $i \geq 1$,
\begin{eqnarray*}
\sum_{j \in C_k} \sigma_B^k(j) \cdot x_A(0,j)\cdot (p_A(0,j) - \epsilon_k) &\geq& \sum_{j \in M} \sigma_B^k(j) \cdot x_A(0,j)\cdot (p_A(0,j) - \epsilon_k)\\
&\geq& \sum_{j \in M} \sigma_B^k(j) \cdot x_A(i,j)\cdot (p_A(i,j) - \epsilon_k)\\
&\geq& \sum_{j \in C_k} \sigma_B^k(j) \cdot x_A(i,j)\cdot (p_A(i,j) - \epsilon_k)
\end{eqnarray*}
By dividing the latter inequality with $1/(1 - \sum_{\ell \notin C_k}\sigma_B(\ell))$ we get that  
\begin{eqnarray*}
\sum_{j \in C_k} \hat{\sigma}_B^k(j) \cdot x_A(0,j)\cdot (p_A(0,j) - \epsilon_k) \geq\sum_{j \in C_k} \hat{\sigma}_B^k(j) \cdot x_A(i,j)\cdot (p_A(i,j) - \epsilon_k)
\end{eqnarray*}
meaning that agent $A$ has no exploitability against strategy $\hat{\sigma}_B$. At the same time, we are ensured that the payment of agent $A$ is at least $\mu_A := \min_{i,j \in [M]} \{x_{A}(i,j)\cdot p_A(i,j)~:~ x_{A}(i,j)\cdot p_A(i,j) > 0\}$.
\smallskip
\smallskip

\noindent We complete the proof by showing that the exploitability agent $B$ with respect to $e_0 = (1,0,\ldots,0)$ is at most $M\cdot(1-\sigma_A^k(0))$. This is because
\[ \left|\sum_{i \in [M]} \sigma_A(i) \cdot x_A(i,j)\cdot p_A(i,j)  -  \sum_{i \in [M]} e_0(i) \cdot x_A(i,j)\cdot p_A(i,j)\right| \leq M \cdot (1-\sigma_A^k(0))\]
\smallskip
\smallskip

We complete the proof of Lemma~\ref{l:main} in case agent $A$ does not sell any of its energy (the amount of energy that agent $B$ sells is  $1$). This means that for any bid $i \in [M]$, $x_A(i,j) = 0$ for all $j \in \mathrm{support}(\sigma_B^k)$ since otherwise agent $A$ could deviate to a strategy and sell some of its energy making its revenue positive. Now let $j_\mathrm{max} := \mathrm{argmax}_{j \in \mathrm{support(\sigma_B)}} \sum_{ i \in [M]} \sigma_A(i) \cdot p_B(i,j)$. Then the pair of strategies $\hat{\sigma} =  ( \sigma_A, e_{j_\mathrm{max}})$ is a mixed NE
since $x_A(i,j_{\mathrm{max}}) = 0$ for all $i \in [M]$. At the same time the payment to agent $B$ is at least $1$ since $\mathrm{max}_{j \in \mathrm{support(\sigma_B)}} \sum_{ i \in [M]} \sigma_A(i) \cdot p_B(i,j) \geq 1$.
\qed

\smallskip
\smallskip
\smallskip

\item \underline{There exists an infinite subsequence $\mathcal{S}' \subseteq 
\mathcal{S}$ such that for any $\sigma_k \in \mathcal{S}'$,  $\mathrm{support}(\sigma^k_A) = \{0\}$}.
\smallskip
\smallskip

In this case, we are ensured that at limiting point $\sigma^\star$, $\lim_{k \rightarrow \infty} \sigma^k = \sigma^\star$ and the proof follows with the exact same steps as above.

\end{itemize}
\qed
\end{proof}


\section{Proof of Theorem~\ref{thm:UB-PC}}
\label{proof:thm:UB-PC}

\TUBPC*

\begin{proof}
Let agent $i$ be the agent that has $\high{b}_i=\max_{j\preceq_\mathbf{c}\tau(\mathbf{c})}\high{b}_j$ and let $\mathbf{b}^\star_{-i}=\mathbf{c}_{-i}+\mathbf{d}_{-i}$ where $\mathbf{d}_{-i}\in\set{0,1}^{n-1}$ such that $\high{b}_i=\max\set{b_i\in[M] \text{ s.t. }b_i\in \mathcal{BR}(\mathbf{b}^\star_{-i})}$. We will establish that $\vecb^\star=(\high{b}_i,\vecb^\star_{-i})$ is a pure NE and $p^{unit}(\vecb^\star)=\high{b}_i$.
\smallskip

\noindent By the definition of $\vecb^\star$ agent $i \in [n]$ has no incentive to deviate. Now consider any other agent $j\neq i$. 

\begin{itemize}
\item Let agent $j\preceq_\mathbf{c}\mathbf{c}$, for any bid $b_j\leq \high{b}_i$, the agent $j$ will sell less or equal amount of energy at price $\high{b}_i$. Thus, $U_j(b_j,b^\star_{-j}) \leq U_j(c_j,b^\star_{-j})$. 
As a result, we need to consider only the case $b_j > \high{b}_i$. By the definition, we know that 
\begin{align*}
    \max\set{b_j\in[M]\text{ s.t. }b_j\in\mathcal{BR}(c_i,\vecb^\star_{-i-j})}\leq \high{b}_j \leq \high{b}_i,
\end{align*} while in the bidding profile $\vecb^\star$ agent $j \in [n]$ sells $s_j$ amount of energy. This means that \[\max_{b_j\in[M]}U_j(b_j,(c_i,\mathbf{b^\star}_{-i-j}))\leq (\high{b}_i-c_j)\cdot s_j = U_j(\vecb^\star). \] Therefore, we have  
\[U_j(b_j, \vecb^\star_{-j}) = U_j(b_j, (c_i,\mathbf{b^\star}_{-i-j})) \leq \max_{b_j\in[M]}U_j(c_j,(b_i,\mathbf{b^\star}_{-i-j}))  \leq U_j(\vecb^\star).\]
The equality is due to $b_j>\high{b}_i=b^\star_i$. Thus, agent $j$ has no incentive to deviate.
\smallskip
\smallskip
\smallskip

\item  In the case $j \succ_\mathbf{c}\tau(\mathbf{c})$, if $c_j<\high{b}_i$, the same argument as above follows. If $c_j \geq \high{b}_i$, agent $j$ sells $0$ amount of energy for any bid $b_j \geq \high{b}_i$. This is because $\sum_{j\preceq_\mathbf{c}\tau(\mathbf{c})} s_j \geq 1$. For any bid $b_j < \high{b}_i$, the agent with $c_j \geq \high{b}_i$ cannot have non-positive utility. In any case, an agent $j$ does not have an incentive to deviate.

\noindent Now, we consider the unit price $p(\vecb^\star)$. By the maximality of $\high{b}_i$, $p(\vecb^\star)$ cannot be greater than $\high{b}_i$. At the same time, $p(\vecb^\star)$ is at least $\high{b}_i$ since agent $i$ yields a positive utility in the bidding profile $\mathbf{c}+1$ where every agent bids the real cost plus one, which implies $\high{b}_i> c_i$. Thus, $p(\vecb^\star)=\high{b}$.
\end{itemize}
\qed
\end{proof}

\section{Proof of Lemma \ref{lem:least-utility-PC}}
\label{proof:lem:least-utility-PC}
\LleasttuilityPC*

\begin{proof}

     The lemma follows since
     \begin{align}
     \label{inq:lem-pc-least-utility}
         U_i(\mathbf{b})= x_i(\mathbf{b})\cdot(q(\vecb)-c_i)\geq x_i(b_i,\mathbf{c}_{-i})\cdot(q(b_i,\mathbf{c}_{-i})-c_i)=U_i(b_i,\mathbf{c}_{-i}).
     \end{align}
     The equalities hold by definition. The inequality holds since $x_i(\mathbf{v})$ and $q(\mathbf{v})$ are monotone with respect to $v_j$ where $j\neq i$ and $b_j\geq c_j$ holds for any $j\neq i$.  
     
     \smallskip
     \noindent If we let $b^\star_i= \max \mathcal{BR}(\mathbf{c}_{-i})$, and consider any mixed NE $\sigma$, we have
     \begin{align*}
         \Ex_{\vecb\sim\sigma}[U_i(\vecb)]&\geq \max_{b'_i\in[M]}\Ex_{\vecb_{-i}\sim\sigma_{-i}}[U_i(b'_i,\vecb_{-i})]\geq \Ex_{\vecb_{-i}\sim\sigma_{-i}}[U_i(b^\star_i,\vecb_{-i})]\\
         &=\sum_{\vecb_{-i}\in \text{support}(\sigma_{-i})}\Pr[\vecb_{-i}]\cdot U_i(b^\star_i,\vecb_{-i})\geq \sum_{\vecb_{-i}\in \text{support}(\sigma_{-i})}\Pr[\vecb_{-i}]\cdot U_i(b^\star_i,\mathbf{c}_{-i})\\
         &=U_i(b^\star_i,\mathbf{c}_{-i})=\max_{b'_i\in[M]} U_i(b'_i,\mathbf{c}_{-i}).
     \end{align*}
     The first inequality is due to the equilibrium condition. The second inequality follows from the maximality. The first equality is by definition. The third inequality follows from Inequality \eqref{inq:lem-pc-least-utility}. The second equality follows by the normalization of probabilities. The last equality holds due to our choice of $b^\star_i$. \qed
\end{proof}
\section{Proof of Theorem \ref{thm:LB-PC}}
\label{proof:thm:LB-PC}

\TLBPC*
\begin{proof}
Assume there is a mixed NE $\sigma$ with $\Ex_{\vecb\sim\sigma}[p^{\sss \text{unit}}(\vecb)]<\max_{i\preceq_\mathbf{c}\tau(\mathbf{c})}\low{b}_i$. It suggest the expected clearing price $\Ex_{\vecb\sim\sigma}[q(\vecb)]<\max_{i\preceq_\mathbf{c}\tau(\mathbf{c})}\low{b}_i$ since 
\begin{align*}
    p^{\sss \text{unit}}(\vecb)=\sum_{i \in [n]}x_i(\vecb) \cdot p_i(\vecb)=\sum_{i \in [n]}x_i(\vecb) \cdot q(\vecb)=q(\vecb),
\end{align*}
where all equalities are by definition.

\noindent Fix an agent $j$ whose $\low{b}_j=\max_{i\preceq_\mathbf{c}\tau(\mathbf{c})}\low{b}_i$, we have 
\begin{align*}
    \Ex_{\vecb\sim\sigma}[U_j(\vecb)]
    &=\Ex_{\vecb\sim\sigma}[x_j(\vecb)\cdot(p_j(\vecb)-c_j)]=\Ex_{\vecb\sim\sigma}[x_j(\vecb)\cdot(q(\vecb)-c_j)]\\
    &\leq \Ex_{\vecb\sim\sigma}[s_j\cdot(q(\vecb)-c_j)]=s_j\cdot(\Ex_{\vecb\sim\sigma}[q(\vecb)]-c_j)\\
    &<s_j\cdot(\max_{i\preceq_\mathbf{c}\tau(\mathbf{c})}\low{b}_i-c_j)=s_j\cdot(\low{b}_j-c_i)=\max_{b'_i\in[M]}U_j(b'_j,\mathbf{c}_{-j}).
\end{align*}
The first and second equality follow by definition. The first inequality holds since $s_j$ is the maximum allocation agent $j$ might receive. The third equality holds by the linearity of expectation. The second inequality follows since we have shown $\Ex_{\vecb\sim\sigma}[q(\vecb)]<\max_{i\preceq_\mathbf{c}\tau(\mathbf{c})}\low{b}_i$. The fourth equality holds by our choice of agent $j$. The last equality follows by the definition of $\low{b}_j$.

\noindent However, Lemma \ref{lem:least-utility-PC} guarantees that 
\begin{align*}
    \Ex_{\vecb\sim\sigma}[U_j(\vecb)]\geq \max_{b'_j\in[M]}U_j(b'_j,\mathbf{c}_{-j}),
\end{align*}
which leads to a contradiction.
\qed
\end{proof}

\section{Proof of Lemma \ref{lem:least-utility-PB}}
\LleasttuilityPB*
\label{proof:lem:least-utility-PB}
\begin{proof}
The proof is almost identical to the proof of Lemma \ref{lem:least-utility-PC}. We have
\begin{align}
\label{inq:lem-pb-least-utility}
    U_i(\mathbf{b})=x_i(\mathbf{b})\cdot(b_i-c_i)\geq x_i(b_i,\mathbf{b}'_{-i})\cdot(b_i-c_i)=U_i(b_i,\mathbf{b}'_{-i}).
\end{align}
The equalities hold by definition. The inequality holds since $x_i(\mathbf{v})$ are monotone with respect to $v_j$ where $j\neq i$ and $b_j\geq b'_j$ holds for any $j\neq i$. 

\noindent If we let $b^\star_i= \max \mathcal{BR}(\mathbf{c}_{-i})$, and consider any mixed NE $\sigma$, we have
     \begin{align*}
         \Ex_{\vecb\sim\sigma}[U_i(\vecb)]&\geq \max_{b'_i\in[M]}\Ex_{\vecb_{-i}\sim\sigma_{-i}}[U_i(b'_i,\vecb_{-i})]\geq \Ex_{\vecb_{-i}\sim\sigma_{-i}}[U_i(b^\star_i,\vecb_{-i})]\\
         &=\sum_{\vecb_{-i}\in \text{support}(\sigma_{-i})}\Pr[\vecb_{-i}]\cdot U_i(b^\star_i,\vecb_{-i})\geq \sum_{\vecb_{-i}\in \text{support}(\sigma_{-i})}\Pr[\vecb_{-i}]\cdot U_i(b^\star_i,\mathbf{c}_{-i})\\
         &=U_i(b^\star_i,\mathbf{c}_{-i})=\max_{b_i\in[M]} U_i(b_i,\mathbf{c}_{-i}).
     \end{align*}
The first inequality is due to the equilibrium condition. The second inequality follows from the maximality. The first equality is by definition. The third inequality follows from Inequality \eqref{inq:lem-pb-least-utility}. The second equality follows by the normalization of probabilities. The last equality holds due to our choice of $b^\star_i$. 
\qed
\end{proof}

\section{Proof of Theorem \ref{thm:LB-PB}}
\label{proof:thm:LB-PB}
\TLBPB*
\begin{proof}
Fix the agent $i:=\argmax_{j\preceq_{\mathbf{c}}\tau(\mathbf{c})}\low{b}_j$. Due to Lemma~\ref{lem:least-utility-PB} and the equilibrium condition, any bid in agent $i$'s support should yield a utility of at least $\max_{b'_i\in[M]}U_i(b'_i,\mathbf{c}_{-i})$. Recall $\low{b}_i= c_i + \max_{b'_i\in[M]}U_i(b'_i,\mathbf{c}_{-i}) / s_i$, agent $i$ has no incentive to bid below $ \low{b}_i$ with non-zero probability in $\sigma$. Consider any bid $b_i < \low{b}_i$, even in case agent $i \in [n]$ sells all of its energy $s_i \in [0,1]$, its utility is strictly less than $\max_{b_i\in[M]}U_i(b_i,\mathbf{c}_{-i})$. As a result, we are ensured that the with probability $1$, agent $i \in [n]$ bids greater or equal to $\low{b}_i:= \max_{j\preceq_{\mathbf{c}}\tau(\mathbf{c})}\low{b}_j$.
\smallskip
\smallskip

\noindent Notice that since agent $i \in [n]$ sells a positive amount of energy once bidding $\high{b}_i$ and since $\low{b}_i \leq \high{b}_i$, we are ensured that $\sum_{ c_j < \low{b}_i}s_j < 1$. Therefore, any agent $j\neq i$ with cost $c_j < \low{b}_i$, is ensured to sell his whole supply $s_j$ by bidding $\low{b}_i-1$. In this case, bidding lower cannot increase the allocation. Thus, any agent $j \neq i$ will have a strictly smaller payoff once bidding less than $\low{b}_i-1$. \qed
\end{proof}

\section{Proof of Theorem~\ref{thm:UB-PB}}\label{proof:thm:UB-PB}

\TUBPB*

\noindent The proof of Theorem~\ref{thm:UB-PB} follows easily by Lemma~\ref{lem:ub-low-cost-agent} which is presented up next. The proof of Lemma~\ref{lem:ub-low-cost-agent} is presented in Appendix~\ref{proof:lem:ub-low-cost-agent}. 
\begin{lemma}
    \label{lem:ub-low-cost-agent}
    Let $t:=\max_{i\preceq_\mathbf{c}\tau(\mathbf{c})}\max_{\sigma_i(b_i)>0} b_i$ be the maximum possible bid of an agent $i\preceq_\mathbf{c}\tau(\mathbf{c})$. Let agent $i\preceq_\mathbf{c}\tau(\mathbf{c})$ be the agent with the lowest priority according to the lexicographical order among agents who bid $t$ with positive probability, we have $t\leq \high{b}_i$.
\end{lemma}
To this end we complete the section with the proof of Theorem~\ref{thm:UB-PB}.
\begin{proof}[Proof of Theorem~\ref{thm:UB-PB}]
    By Lemma \ref{lem:ub-low-cost-agent}, the highest possible bid from all the agent $i\preceq_\mathbf{c}\tau(i)$ is at most $\max_{i\preceq_\mathbf{c}\tau(\mathbf{c})}\high{b}_i$. Since $\sum_{i\preceq_\mathbf{c} \tau(\mathbf{c})}s_i\geq 1$, any agent $j\succ_\mathbf{c}\tau(i)$ cannot sell positive amount by bidding $b_j\geq\max_{i\preceq_\mathbf{c}\tau(\mathbf{c})}\high{b}_i$. Thus, no energy is bought at price higher than $\max_{i\preceq_\mathbf{c}\tau(\mathbf{c})}\high{b}_i$.\qed
\end{proof}

\subsection{Proof of Lemma \ref{lem:ub-low-cost-agent}}
\label{proof:lem:ub-low-cost-agent}
 Due to Lemma \ref{lem:least-utility-PB}, agent $i$'s utility is at least $\max_{b_i\in[M]}U_i(b_i,\mathbf{c}_{-i})$. Thus, $\Ex_{\vecb_{-i}\sim\sigma_{-i}}[U_i(t,\vecb_i)]$ is at least $\max_{b_i\in[M]}U_i(b_i,\mathbf{c}_{-i})$ by equilibrium condition. We first assume that $\max_{b_i\in[M]}U_i(b_i,\mathbf{c}_{-i})>0$ and will solve the case $\max_{b_i\in[M]}U_i(b_i,\mathbf{c}_{-i})=0$ later. Under our assumption, we notice a specific behavior by agents $j\succ_\mathbf{c}\tau(\mathbf{c})$ with cost $c_j<t-1$.

\begin{claim2}
    \label{clm:low-cost-agent-always-bids-low}
    If an agent $j\succ_\mathbf{c}\tau(\mathbf{c})$ has cost $c_j<t-1$, it will bid at most $t-1$ with probability of $1$.
\end{claim2}

\begin{proof}
    Since $\Ex_{\vecb_{-i}\sim\sigma_{-i}}[U_i(t,\vecb_i)]\geq\max_{b_i\in[M]}U_i(b_i,\mathbf{c}_{-i}) >0$, we notice there exists at least one bidding profile $\vecb\sim\sigma$ where $b_i=t$ such that $\tau(\mathbf{b})\succeq_\vecb i$. Together with the fact $\sum_{k\preceq_\mathbf{c}\tau(\mathbf{c})}s_k\geq 1$ and our choice of $i$, $\tau(\mathbf{b})= i$ in such $\vecb$. Thus, any agent $j\succ_\mathbf{c}\tau(\mathbf{c})$ with $c_j<t-1$ will always bid at most $t-1$. Otherwise, it sells an amount of $0$ with probability of $1$ due to the fact $\sum_{k\preceq_\mathbf{c}\tau(\mathbf{c})}s_k\geq 1$ and our lexicographical tie-breaking.
    \qed
\end{proof}

\noindent Next, we establish the value of $t$ in two complementary cases.
\begin{itemize}
    \item {\bf Case 1:} There are no agent $j\succ_\mathbf{c}\tau(\mathbf{c})$ has cost $c_j=t-1$.
    
    \smallskip
    \noindent In this case, we have
    \begin{align}
        \label{inq:case1-ub}
        \Ex_{\vecb_{-i}\sim\sigma_{-i}}[U_i(t,\vecb_{-i})]=U_i(t,\mathbf{c}_{-i})\leq\max_{b_i\in[M]}U_i(b_i,\mathbf{c}_{-i}).
    \end{align}
    The equality holds due to our choice of $i$ and Claim \ref{clm:low-cost-agent-always-bids-low}, since $c_j<t$ implies $c_j<t-1$ in this case. And the inequality is due to the maximality. Furthermore, we have 
    \begin{align}
        \label{inq:case1-lb}
        \Ex_{\vecb_{-i}\sim\sigma_{-i}}[U_i(t,\vecb_{-i})]=\Ex_{\vecb\sim\sigma}[U_i(\vecb)]\geq \max_{b_i\in[M]}U_i(b_i,\mathbf{c}_{-i}).
    \end{align}
    The equality follows from equilibrium condition. And the inequality is by Lemma \ref{lem:least-utility-PB} and linearity of expectation. 

    Putting Inequality \eqref{inq:case1-ub} and $\eqref{inq:case1-lb}$ together, it is clear that $t$ is a best response to $\mathbf{c}_{-i}$. Thus, \begin{align*}
        t\leq \max\set{b_i\in[M]\text{ such that } b_i\in\mathcal{BR}(\mathbf{c}_{-i})}\leq\high{b}_i.
    \end{align*}
    \item {\bf Case 2:} If there is at least one agent $j\succ_\mathbf{c}\tau(\mathbf{c})$ has cost $c_j=t-1$.

    \smallskip
    \noindent In this case, an agent $j\succ_\mathbf{c}\tau(\mathbf{c})$ with $c_j=t-1$ might bid anything greater than or equal to $c_k$. However, when agent $i$ bids $t$, the conditional expected utility depends only on whether every agent $j$ bids $c_j=t-1$ or greater than $c_j$, in other words, at least $t$. Thus, agent $i$'s expected utility conditioned on bidding $t$ is 
    \begin{align*}
        \Ex_{\vecb_{-i}\sim\sigma_{-i}}[U_i(t,\vecb_{-i})]=\sum_{\mathbf{d}_{-i}\in\set{0,1}^{n-1}} \Pr_{\vecb_{-i}\sim\sigma_{-i}}\left[\bigwedge_{j\neq i}\min\{b_j,c_j+1\}=c_j+d_j\right]\cdot U_i(t,\mathbf{c}_{-i}+\mathbf{d}_{-i})
    \end{align*}
    Note that $\Pr_{\vecb_{-i}\sim\sigma_{-i}}\left[\bigwedge_{j\neq i}\min\{b_j,c_j+1\}=c_j+d_j\right]$ denotes the probability that every agent $j$ bids $b_j=c_j$ if $d_j=0$, otherwise, bids $b_j\geq c_j+1$. We define $\mathbf{d}^\star_{-i}\in\set{0,1}^{n-1}$ where $d^\star_{j}=1$ if and only if $c_j+1=t$. We have
    \begin{align}
        \label{inq:case2-PB-ub}
        \Ex_{\vecb_{-i}\sim\sigma_{-i}}[U_i(t,\vecb_{-i})]\leq U(t,\mathbf{c}_{-i}+\mathbf{d}^\star_{-i}).
    \end{align}
    The inequality follows since $U_i(t,\mathbf{c}_{-i}+\mathbf{d}_{-i})$ is maximized when $\mathbf{d}_{-i}=\mathbf{d}^\star_{-i}$. To understand, $\mathbf{d}^\star_{-i}$ says every agent $j$ with $c_j=t-1$ will bid $t$ so that the allocation $x_i(t,\mathbf{c}_{-i}+\mathbf{d}_{-i})$ is maximized when $\mathbf{d}_{-i}=\mathbf{d}^\star_{-i}$. 
    Moreover, for any $b_i\neq t$, we have
    \begin{align}
        \label{inq:case2-PB-lb}
        \Ex_{\vecb_{-i}\sim\sigma_{-i}}[U_i(t,\vecb_{-i})]\geq \max_{b_i\neq t}\Ex_{\vecb_{-i}\sim\sigma_{-i}}[U_i(b_i,\vecb_{-i})]\geq \max_{b_i\neq t} U_i(b_i,\mathbf{c}_{-i}+\mathbf{d}^\star_{-i}). 
    \end{align}
    The first inequality is by equilibrium condition.
    The reason why the second inequality follows is: 1. Agent $i$'s utility while fixing a specific bid is minimized when the allocation is minimized; 2. We have $b_i\neq t$, so the agent $j$ with $c_j=t-1$ does not affect the utility if it bids $c_j+1$.
    
    Putting Inequality \eqref{inq:case2-PB-ub} and \eqref{inq:case2-PB-lb} together, we get 
    \begin{align*}
        U(t,\mathbf{c}_{-i}+\mathbf{d}^\star_{-i})=\max_{b_i\in[M]}U_i(b_i,\mathbf{c}_{-i}+\mathbf{d}^\star_{-i}).
    \end{align*}
    Thus, by definition,
    \begin{align*}
        t&\leq\max\set{b_i\in[M] \text{ such that } b_i\in\mathcal{BR}(\mathbf{c}_{-i}+\mathbf{d}^\star_{-i})}\\
        &\leq \max_{\mathbf{d}\in\set{0,1}^n}\max\set{b_i\in[M] \text{ such that } b_i\in\mathcal{BR}(\mathbf{c}_{-i}+\mathbf{d}_{-i})}=\high{b}_i.
    \end{align*}
\end{itemize}

\smallskip
\smallskip

\noindent Finally, we consider what if $\max_{b_i\in[M]}U_i(b_i,\mathbf{c}_{-i})=0$. In this case, we have $\high{b}_i=c_i+1$. Assume $t> c_i+1$, agent $j$ with $c_j=c_i$ will bid in $[c_i+1, t-1]$ with probability of $1$ since $\sum_{k\preceq_\mathbf{c} \tau(\mathbf{c})}s_k\geq 1$. And when agent $i$ bids $t$, it receives a conditional expected utility of $0$ since $\sum_{k\neq i,c_k\leq c_i}s_k\geq 1$. It contradicts that $\sigma$ is a mixed NE since agent $i$ can move all the probability mass of bidding $t$ to bid $c_i+1$ to improve the utility. 

\noindent In all cases, the theorem follows.
\qed

\section{Proof of Theorem \ref{thm:UB-PB-better}}
\label{proof:thm:UB-PB-better}
\TUBPBbetter*

\begin{proof} 
Let $i^\star:=\argmax_{i\preceq_\mathbf{c}\tau(c)}\low{b}_i$. Assume that if $s_{i^\star}<1$, $\low{b}_{i^\star}+1< \high{b}_{i^\star}$, and $\low{b}_j\leq\low{b}_{i^\star}-2$ for every agent $j\neq i^\star$. We will show that the unit price given by the worst-case mixed NE of PB is at most $p:= (1-s_{i^\star})\cdot\left(\low{b}_{i^\star}+1+\gamma\cdot\sum_{\ell=\low{b}_{i^\star}+2}^{\high{b}_{i^\star}}{1}/{(\ell-c_{i^\star}-1)}\right)+s_{i^\star}\cdot\high{b}_{i^\star}$ where $\gamma=\max_{b'_{i^\star}\in[M]}U_{i^\star}(b'_{i^\star},\mathbf{c}_{-i^\star})/s_{i^\star}+1$.  Moreover, $p < \high{b}_{i^\star}\leq \max_{i\preceq_{\mathbf{c}}\tau(\mathbf{c})}\high{b}_i$.

\noindent We remark that $s_{i^\star} < 1$ is a natural assumption since, most likely, no single producer $i \in [n]$ will be able to cover the whole energy demand. The assumption $\low{b}_{i^\star}+1<\high{b}_{i^\star}$, is also mild since $\low{b}_{i^\star}+1\geq \high{b}_{i^\star}$ only happens when $i \in [n]$ sells exactly $s_{i^\star}$ energy as a pivotal agent. This only happens if $\sum_{j: c_j \leq \high{b}_{i^\star} }  s_j$ is equal to the total energy demand of $1$. This case will occur with $0$ probability assuming a small random perturbation on the  supplies $s_j$ of the agents. Similarly, assuming in case of sufficiently small discretization and a random permutation on the  supplies $s_j$, the assumption $\low{b}_j\leq\low{b}_{i^\star}-2$ is satisfied with probability $1$.

\noindent We start by tightening the support of the mixed NE.
\begin{lemma}
\label{lem:tighten-supp}
Let $i^\star:=\argmax_{i\preceq_\mathbf{c}\tau(c)}\low{b}_i$. In case that every agent $j\neq i^\star$ has $\low{b}_j\leq\low{b}_{i^\star}-2$. Then the support of any mixed NE is a subset of $[\low{b}_{i^\star}-1,\high{b}_{i^\star}]$. 
\end{lemma}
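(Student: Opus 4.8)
The plan is to prove the two endpoints of the interval separately; the lower one is immediate and the upper one refines the reasoning behind Lemma~\ref{lem:ub-low-cost-agent}. For the lower endpoint, note that $i^\star=\argmax_{i\preceq_\mathbf{c}\tau(\mathbf{c})}\low{b}_i$ gives $\max_{i\preceq_\mathbf{c}\tau(\mathbf{c})}\low{b}_i=\low{b}_{i^\star}$, so Theorem~\ref{thm:LB-PB} already says every agent bids at least $\low{b}_{i^\star}-1$ with probability $1$. I also want the slightly stronger fact that $i^\star$ itself never bids below $\low{b}_{i^\star}$: this follows from Lemma~\ref{lem:least-utility-PB} (its equilibrium payoff is at least $\gamma_{i^\star}:=\max_{b'}U_{i^\star}(b',\mathbf{c}_{-i^\star})$) together with the observation that any bid below $\low{b}_{i^\star}=\lceil c_{i^\star}+\gamma_{i^\star}/s_{i^\star}\rceil$ yields strictly less than $\gamma_{i^\star}$ even when all of $s_{i^\star}$ is sold.

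The core of the proof is a structural consequence of the gap hypothesis. Since $\gamma_j\ge0$ forces $\low{b}_j\ge c_j$, the assumption $\low{b}_j\le\low{b}_{i^\star}-2$ for all $j\ne i^\star$ gives $c_j\le\low{b}_{i^\star}-2$ for all $j\ne i^\star$. Let $b^\circ$ denote the largest best response of $i^\star$ to $\mathbf{c}_{-i^\star}$; since $\gamma_{i^\star}\le(b^\circ-c_{i^\star})s_{i^\star}$ we get $\low{b}_{i^\star}\le b^\circ$, and because $i^\star\preceq_\mathbf{c}\tau(\mathbf{c})$ agent $i^\star$ sells a positive amount in the profile $(b^\circ,\mathbf{c}_{-i^\star})$, which forces $\sum_{j\ne i^\star:\,c_j<b^\circ}s_j<1$. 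Since every $j\ne i^\star$ has $c_j\le\low{b}_{i^\star}-2<b^\circ$, this already gives $\sum_{j\ne i^\star}s_j<1$. Hence in \emph{every} profile $\vecb$ the agents other than $i^\star$ cannot cover the unit demand, so $\tau(\vecb)=i^\star$ and $\widehat{x}_{i^\star}(\vecb)\ge1-\sum_{j\ne i^\star}s_j>0$: agent $i^\star$ is always pivotal (in particular $\tau(\mathbf{c})=i^\star$), and each $j\ne i^\star$ sells its full $s_j$ exactly when $j\prec_\vecb i^\star$ and nothing otherwise.

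For the upper endpoint I argue by contradiction: suppose some agent bids above $\high{b}_{i^\star}$ with positive probability, let $t>\high{b}_{i^\star}$ be the largest bid in the support of the equilibrium $\sigma$, and let $k$ be the largest-index agent among those bidding $t$ with positive probability. First I show $k=i^\star$. If $k\ne i^\star$, then whenever $k$ bids $t$ it sells nothing --- $\tau=i^\star$ always, and $k\prec_\vecb i^\star$ would require $i^\star$ to also bid $t$ with $k<i^\star$, making $i^\star$ a larger-index $t$-bidder than $k$ and contradicting the choice of $k$ --- so $\Ex_{\vecb_{-k}\sim\sigma_{-k}}[U_k(t,\vecb_{-k})]=0$. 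But if $k$ bids $\low{b}_{i^\star}-1<\low{b}_{i^\star}\le b_{i^\star}$ it precedes $i^\star$ in every profile, hence sells its full $s_k$ and earns $(\low{b}_{i^\star}-1-c_k)s_k\ge s_k>0$ (using $c_k\le\low{b}_{i^\star}-2$); so $k$'s equilibrium payoff is strictly positive, contradicting that the payoff-$0$ bid $t$ lies in its support. Hence $k=i^\star$, and since then no agent of larger index than $i^\star$ bids $t$, $i^\star$ is the lowest-priority agent bidding $t$ among agents $\preceq_\mathbf{c}\tau(\mathbf{c})$ and $t=\max_{i\preceq_\mathbf{c}\tau(\mathbf{c})}\max_{\sigma_i(b_i)>0}b_i$. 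Lemma~\ref{lem:ub-low-cost-agent} then gives $t\le\high{b}_{i^\star}$, contradicting $t>\high{b}_{i^\star}$. So no agent bids above $\high{b}_{i^\star}$, which together with the lower endpoint proves the claim.

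The main obstacle I anticipate is the structural step --- showing via the gap hypothesis that $i^\star$ is pivotal in every bid profile --- and, throughout the upper-endpoint argument, handling the lexicographic tie-breaking in the definition of the pivotal agent carefully, both when showing a non-$i^\star$ top bidder sells nothing and when showing every $j\ne i^\star$ can guarantee a strictly positive payoff by bidding $\low{b}_{i^\star}-1$.
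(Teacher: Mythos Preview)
Your structural step contains an error that breaks the upper-endpoint argument. From $\sum_{j\ne i^\star}s_j<1$ you can only conclude that the agents \emph{preceding} $i^\star$ in $\prec_\vecb$ never cover demand, i.e., $i^\star\preceq_\vecb\tau(\vecb)$; it does \emph{not} follow that $\tau(\vecb)=i^\star$. If some $j\ne i^\star$ bids above $b_{i^\star}$, then $j\succ_\vecb i^\star$ and the pivotal agent may well be $j$. Concretely, take three agents with $s_{i^\star}=0.5$ and $s_2=s_3=0.4$ (so $\sum_{j\ne i^\star}s_j=0.8<1$): if $i^\star$ bids $10$, agent~$3$ bids $15$, and agent~$2$ bids $t=20$, the order is $i^\star,3,2$ with cumulative supplies $0.5,0.9,1.3$, hence $\tau=2$. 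Your claim that a non-$i^\star$ top bidder $k$ ``sells nothing'' therefore fails --- here $k=2$ sells $0.1$ at price $t$ and earns strictly positive utility --- so the contradiction you derive does not go through. (Your parenthetical $\tau(\mathbf{c})=i^\star$ is likewise false in general.)

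The paper avoids this by never asserting that $k$ sells zero. Instead it bounds the utility any $j\ne i^\star$ can possibly obtain when it is the highest bidder: one shows
\[
\max_{\mathbf{d}_{-j}\in\{0,1\}^{n-1}}\max_{b_j}U_j(b_j,\mathbf{c}_{-j}+\mathbf{d}_{-j})\ \le\ \max_{b_j}U_j(b_j,\mathbf{c}_{-j})+s_j\ \le\ s_j\bigl(\low{b}_j+1-c_j\bigr),
\]
and since the gap hypothesis gives $\low{b}_j+1\le\low{b}_{i^\star}-1$, this is at most $s_j(\low{b}_{i^\star}-1-c_j)$, which is exactly what $j$ earns by bidding $\low{b}_{i^\star}-1$ (using that $i^\star$ always bids at least $\low{b}_{i^\star}$ and, via $\sum_{j\ne i^\star}s_j<1$, that $j$ then sells its full $s_j$). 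Hence no $j\ne i^\star$ is the top bidder in equilibrium, and Lemma~\ref{lem:ub-low-cost-agent} with $i^\star$ as the lowest-priority top bidder yields $t\le\high{b}_{i^\star}$. Your lower-endpoint argument and the derivation of $\sum_{j\ne i^\star}s_j<1$ are fine; only the inference ``$\tau(\vecb)=i^\star$'' needs to be replaced by this utility comparison.
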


\begin{proof}
The lower bound follows directly from Theorem \ref{thm:LB-PB}, so we only need to prove the upper bound. We first establish that in case agent $j\in [n]$ sells its supply $s_j \in [0,1]$ at price $\low{b}_j+1$, then its utility is higher than the \textit{best possible utility} once bidding $\high{b}_j$. By the definition of $\low{b}_j$ we have that,
    \begin{align}
        \label{inq:tighter_interval_1}
        \low{b}_j=\left\lceil c_j+\frac{\max_{b_j\in[M]}U_j(b_j,\mathbf{c}_{-j})}{s_j}\right\rceil\geq c_j+\frac{\max_{b_j\in[M]}U_j(b_j,\mathbf{c}_{-j})}{s_j}.
    \end{align}
    Then we consider the maximum utility achieved by becoming the pivotal agent, we have that
    \begin{align}
        \nonumber
        \max_{\mathbf{d}_{-j}\in\set{0,1}^{n-1}}\max_{b_j\in[M]}U_j(b_j,\mathbf{c}_{-j}+\mathbf{d}_{-j})&\leq \max_{b_j\in[M]}U_j(b_j,\mathbf{c}_{-j}+1)\\
        \nonumber
        &\leq\max_{b_j\in[M]}U_j(b_j-1,\mathbf{c}_{-j})+(b_j-(b_j-1))\cdot s_j\\ 
          \label{inq:tighter_interval_2}          &=\max_{b_j\in[M]}U_j(b_j,\mathbf{c}_{-j})+s_j.
    \end{align}
    The first inequality follows since the price under the same allocation is maximized if every other agent bids one step to the left. The second inequality follows since the allocation is the same if every agent shifts its bid one step to the left. The equality is simply by algebra.
    
    \smallskip
    \noindent Plug Inequality \eqref{inq:tighter_interval_2} into \eqref{inq:tighter_interval_1}, we get
    \begin{align*}
        \low{b}_j+1\geq c_j+\frac{\max_{\mathbf{d}_{-j}\in\set{0,1}^{n-1}}\max_{b_j\in[M]}U_j(b_j,\mathbf{c}_{-j}+\mathbf{d}_{-j})}{s_j}.
    \end{align*}

\noindent Notice that by Theorem~\ref{thm:LB-PB}, the support of any mixed NE is greater than $\low{b}_{i^\star} - 1$ meaning that if agent $j$ bids $\low{b}_{i^\star} - 2 \geq \low{b}_j + 1$ then agent $j \in [n]$ sells its whole energy $s_j$ at price at least $\low{b}_j +1$, meaning that the utility of bidding $\low{b}_{i^\star} - 2$ is at least $U_j(\high{b}_j,\mathbf{c}_{-j}+\mathbf{d}_{-j})$.

\noindent Thus, the agent $j$ with $\low{b}_j\leq\low{b}_j-2$ will not have a positive probability to bidding any value to make it become the pivotal agent surely since the utility is always better if it can sell all the supply with price $\low{b}_j+1$. And by a similar argument as in the Theorem \ref{thm:LB-PB}, agent $i^\star$ bids at least $\low{b}_{i^\star}$ with probability of $1$. Thus, agent $j$ is ensured to sell everything by bidding $\low{b}_{i^\star}-1$. Recall we assume that $\low{b}_j\leq \low{b}_{i^\star}-2$. So any agent $j\neq i$ prefers to bid $\low{b}_{i^\star}-1$ than becoming the pivotal agent, in other words, it cannot be the one who bids the highest possible bid. The lemma follows since we can repeat the proof of Theorem \ref{thm:UB-PB} with the knowledge that agent $i$ bids the highest.
    \qed
\end{proof}
Now, consider a mixed NE $\sigma$. Assume that $t=\max_{\sigma_{i^\star}(b_{i^\star})>0}b_{i^\star}$. By Lemma \ref{lem:tighten-supp}, we have $\sigma_{i^\star}(b_{i^\star})>0$ only if $\low{b}_{i^\star}-1\leq b_{i^\star} \leq \high{b}_{i^\star}$. Assume without loss of generality that $\sigma_{i^\star}(\high{b}_{i^\star})>0$; otherwise, the unit price is always lower than the bound we present. Due to the equilibrium, agent $i^\star$'s expected utility conditioned on bidding $\high{b}_{i^\star}$ is at least the expected utility conditioned on bidding any other $b_{i^\star}\in [M]$, i.e.,
    \begin{align} 
        \label{inq:unit-price-pb-1}
        \Ex_{\vecb_{-i^\star}\sim\sigma_{-i^\star}}[U_{i^\star}(\high{b}_{i^\star},\vecb_{-i^\star})]&\geq \Ex_{\vecb_{-i^\star}\sim\sigma_{-i^\star}}[U_{i^\star}(b_{i^\star},\vecb_{-i^\star})]\geq s_{i^\star}\cdot (b_{i^\star}-c_{i^\star})\cdot\Pr_{\vecb_{-i\star}\sim \sigma_{-i}}[x_{i^\star}(b_{i^\star},\vecb_{-i^\star})=s_{i^\star}]
    \end{align}
    The second inequality follows since we ignore any utility given by the case $x_{i^\star}(b_{i^\star},\mathbf{b}_{-i^\star})\neq s_{i^\star}$, in other words, agent $i^\star$ sells all its supply while bidding $b_{i^\star}$. Furthermore, 
    \begin{align}
        \label{inq:unit-price-pb-2}
        \Ex_{\vecb_{-i}\sim\sigma_{-i}}[U_{i^\star}(\high{b}_{i^\star},\vecb_{-i^\star})]\leq \max_{\mathbf{d}_{-{i^\star}}\in\set{0,1}^{n-1}}\max_{b_{i^\star}\in[M]}U_{i^\star}(b_i,\mathbf{c}_{-i^\star}+\mathbf{d}_{-i^\star})\leq \max_{b_{i^\star}\in[M]}U_{i^\star}(b_{i^\star},\mathbf{c}_{-i^\star})+s_{i^\star}.
    \end{align}
    The first inequality follows since $\high{b}_{i^\star}$ is either $c_j$ or $c_j+1$ for some agent $j$. The second inequality follows by Inequality \eqref{inq:tighter_interval_2}.
    Puttng Inequality \eqref{inq:unit-price-pb-1} and \eqref{inq:unit-price-pb-2} together, we have
    \begin{align*}
        \Pr_{\vecb_{-i}\sim\sigma_{-i}}[x_{i^\star}(b_{i^\star},\vecb_{-i^\star})=s_{i^\star}]\leq \frac{\max_{b'_{i^\star}\in[M]}U_{i^\star}(b'_{i^\star},\mathbf{c}_{-i^\star})+s_{i^\star}}{s_{i^\star}\cdot(b_{i^\star}-c_{i^\star})}=\frac{\gamma}{b_{i^\star}-c_{i^\star}}.
    \end{align*}
\smallskip
\smallskip

\noindent Notice that in case $x_{i^\star}(b_{i^\star},\mathbf{b}_{-i^\star})<s_{i^\star}$ then at least $1-s_{i^\star}$ energy is bought at price at most $b_{i^\star}$. Denote the random variable $X_{\leq \alpha}$ as the amount of energy that is bought at a price at most $\alpha$. The randomness comes from which bidding profile is drawn from the distribution. We have
    \begin{align}
        \label{inq:unit-price-pb-prob}
        \Pr_{\vecb\sim\sigma}[X_{\leq\alpha}\geq 1- s_{i^\star}]\geq 1-\Pr_{\vecb_{-i}\sim\sigma_{-i}}[x_{i^\star}(b_{i^\star}=\alpha,\vecb_{-i^\star})=s_{i^\star}]\geq 1-\frac{\gamma}{\alpha-c_{i^\star}}.
    \end{align}

    \noindent We can then upperbound the unit price by minimizing the energy sold at a low price. Specifically, when $X_{\leq \alpha}\geq 1-s_{i^\star}$ but $X_{\leq \alpha-1}< 1-s_{i^\star}$, we assume exactly $1-s_{i^\star}$ energy is sold below the price $\alpha$ and the remaining $s_{i^\star}$ energy is sold at the price of $\high{b}_{i^\star}$. In this way, we will only increase the unit price. As a result,
    \begin{align}
        \label{inq:unit-price-relaxation-1}
        \Ex_{\vecb\sim\sigma}[p^{unit}(\vecb)]\leq \sum_{\alpha=\low{b}_{i^\star}-1}^{\high{b}_{i^\star}}f(\alpha)\cdot (s_{i^\star}\cdot\high{b}_{i^\star}+(1-s_{i^\star})\cdot \alpha),
    \end{align}
    where we denote \[f(\alpha)=\Pr_{\vecb\sim\sigma}[X_{\leq \alpha}\geq 1-s_{i^\star}\text{ and }X_{\leq \alpha-1}< 1-s_{i^\star}]=\Pr_{\vecb\sim\sigma}[X_{\leq \alpha}\geq 1-s_{i^\star}]-\Pr_{\vecb\sim\sigma}[X_{\leq \alpha-1}\geq 1-s_{i^\star}].\]

    \noindent Notice that $\Pr_{\vecb\sim\sigma}[X_{\leq \alpha}\geq 1-s_{i^\star}]=0$ for any $\alpha < \low{b}_{i^\star}-1$ since every agent bids at least $\low{b}_{i^\star}$. And $\Pr_{\vecb\sim\sigma}[X_{\leq \alpha}\geq 1- s_{i^\star}]=1$ for any $\alpha \geq\high{b}_{i^\star}$ since every agent bids at most $\high{b}_{i^\star}$. Thus, we can upperbound the RHS of Inequality \eqref{inq:unit-price-relaxation-1} by as setting,

\smallskip

    \begin{align}
    \label{inq:unit-price-tight-prob-est}
        f(\alpha)=\begin{cases}
            0, &\alpha=\low{b}_{i^\star}-1 \text{ or }\low{b}_{i^\star},\\
            1-\frac{\gamma}{\alpha-c_{i^\star}},&\alpha=\low{b}_{i^\star}+1,\\
            \gamma\cdot(\frac{1}{\alpha-c_{i^\star}-1}-\frac{1}{\alpha-c_{i^\star}}),&\low{b}_{i^\star}+2\leq \alpha\leq\high{b}_{i^\star}-1\\
            \gamma\cdot\frac{1}{\alpha-c_{i^\star}-1}, &\alpha=\high{b}_{i^\star}\\
        \end{cases}.
    \end{align}
    
\noindent This is because the assignment presenting in Equation~\eqref{inq:unit-price-tight-prob-est} is the assignment maximizing Equation~\eqref{inq:unit-price-relaxation-1} while at the same time satifying Equation~\eqref{inq:unit-price-pb-prob} and the non-negativity of probabilities. 
\noindent Plugging Equality \eqref{inq:unit-price-tight-prob-est} into Inequality \eqref{inq:unit-price-relaxation-1}, we have

    \begin{align*}
        \Ex_{\vecb\sim\sigma}[p^{unit}(\vecb)]&\leq (1-s_{i^\star})\cdot\left((1-\frac{\gamma}{\low{b}_{i^\star}-c_{i^\star}+1})\cdot(\low{b}_{i^\star}+1)+\gamma\cdot\sum_{\alpha=\low{b}_{i^\star}+2}^{\high{b}_{i^\star}-1}\left(\frac{\alpha}{\alpha-c_{i^\star}-1}-\frac{\alpha}{\alpha-c_{i^\star}}\right)\right)\\
        &\quad+\left(s_{i^\star}+(1-s_{i^\star})\cdot\gamma\cdot\frac{1}{\high{b}_{i^\star}-c_{i^\star}-1}\right)\cdot \high{b}_{i^\star}\\
        &=(1-s_{i^\star})\cdot\left(\low{b}_{i^\star}+1-\gamma\cdot\frac{\high{b}_{i^\star}-1}{\high{b}_{i^\star}-c_{i^\star}-1}+\gamma\cdot\sum_{\alpha=\low{b}_{i^\star}+2}^{\high{b}_{i^\star}-1}\frac{1}{\alpha-c_{i^\star}-1}\right)\\
        &\quad+\left(s_{i^\star}+(1-s_{i^\star})\cdot\gamma\cdot\frac{1}{\high{b}_{i^\star}-c_{i^\star}-1}\right)\cdot \high{b}_{i^\star}\\
        &=(1-s_{i^\star})\cdot\left(\low{b}_{i^\star}+1+\gamma\cdot\sum_{\alpha=\low{b}_{i^\star}+2}^{\high{b}_{i^\star}}\frac{1}{\alpha-c_{i^\star}-1}\right)+s_{i^\star}\cdot\high{b}_{i^\star}.
    \end{align*}
    The equalities simply follow from algebra.
     When $\alpha\geq \low{b}_{i^\star}+2$, we have $s_{i^\star}\cdot(\alpha-c_{i^\star}-1)\geq s_{i^\star}\cdot(\low{b}_{i^\star}+1-c_{i^\star})\geq\max_{b_{i^\star}\in[M]}U_{i^\star}(b_{i^\star},\vecb_{-i^\star})+s_{i^\star}$ by definition of $\low{b}_{i^\star}$. By dividing with $s_{i^\star}$ we get that $\alpha - c_{i^\star} - 1\geq \gamma + 1$ and thus  $\gamma/(\alpha-c_{i^\star}-1)\leq 1$. We further notice that the inequality is strict when $\alpha> \low{b}_{i^\star}+2$. Thus,
    \begin{align*}
        \Ex_{\vecb\sim\sigma}[p^{unit}(\vecb)]&< (1-s_{i^\star})\cdot(\low{b}_{i^\star}+1+(\high{b}_{i^\star}-(\low{b}_{i^\star}+2)+1)\cdot 1) +s_{i^\star}\cdot\high{b}_{i^\star}=\high{b}_{i^\star}.
    \end{align*}
    \qed
\end{proof}
\section{Proof of Theorem~\ref{t:VCG}}\label{app:VCG}

\TVCG*
\begin{proof}
\noindent Example \ref{example:VCG-can-be-bad} demonstrates an energy market where the unit price given by VCG can be arbitrarily bad compared to the unit price given by PC or PB. 
\begin{example}
    \label{example:VCG-can-be-bad}
    Let $\delta>2$ be a constant integer. Consider an energy market with $n=2k-1$ producers and $M=\delta\cdot k-2$, where each producer's supply and cost can be described as follows
    \begin{align*}
        s_i&=\begin{cases}
            \frac{1}{k}, & i\leq k,\\
            \frac{1}{2k}, & i=k+1,\\
            \frac{1}{k(i-k)(i-k+1)}, &k+2\leq i\leq 2k-1,
        \end{cases}\\
        c_i&=\begin{cases}
            0, & i\leq k,\\
            \delta, & i=k+1,\\
            \delta \cdot(i-k)-2, &k+2\leq i\leq 2k-1.
        \end{cases}
    \end{align*}
\end{example}
Recall that VCG is a truthful mechanism, and the externality determines the payment each agent receives their presence imposes on the total cost of the market, i.e., what the market would have to cost if that particular agent had not participated. Thus, the payment to each agent $i \in [k]$ in the equilibrium where everyone bids truthfully is 
    \begin{align*}
        p_i(\mathbf{c})&=\sum_{j=k+1}^{2k-1}c_j\cdot s_j+M\cdot\left(1-\left(\sum_{j=1}^{2k-1}s_j-s_i\right)\right)\\
        &=\delta \cdot \frac{1}{2k}+\sum_{j=2}^{k-1}\frac{\delta\cdot j-2}{k\cdot j\cdot (j+1)}+\frac{\delta\cdot k-2}{k^2}\\
        &=\frac{\delta}{k}\cdot \mathcal{H}_k-\frac{1}{k},
    \end{align*}
where $\mathcal{H}_k$ is the $k$-th harmonic number and the equalities hold by simple algebra. So, the worst-case unit price given by NE in VCG is at least $\Omega(\log n)$.

\noindent Next, we consider the worst-case unit price given by PB and PC. Due to Theorem \ref{t:PB}, the unit price in worst-case mixed NE in PB is bounded by $\max_{i\preceq_\mathbf{c}\tau(\mathbf{c})} \high{b}_i$. Since we have $ (\delta\cdot j-1)\cdot \frac{1}{k}\cdot(1-\sum_{\ell=2}^{j}\frac{1}{\ell\cdot(\ell-1)})=(\delta-\frac{\delta}{j})\cdot\frac{1}{k}<\delta\cdot\frac{1}{k}$ for any $j\geq 2$, the best response of any agent $i\preceq_\mathbf{c}\tau(\mathbf{c})$ when everybody else bid approximately truthful is always bidding $\delta$ or $\delta+1$. Thus, the unit price given by worst-case NE is at most $\max_{i\preceq_\mathbf{c}\tau(\mathbf{c})}\high{b}_i=\delta+1=O(1)$. The unit price in the worst-case mixed NE in PC is a bit tricky since we have not provided any upper bound for it yet. But we claim it is also at most $\max_{i\preceq_\mathbf{c}\tau(\mathbf{c})} \high{b}_i$ by a similar argument we used to prove the upper bound in PB, since the utility for the pivotal agent is the same in PB and PC. 
\qed
\end{proof}


\section{Best-case PB vs Best-Case PC}

\label{app:bestPC}

We have proved in Theorem \ref{t:1} that PC cannot be strictly dominated by any mechanism with individual rationality, including PB. However, our proof does not exclude the possibility that the best unit price by PB is always lower than the best unit price by PC, which is another kind of dominance. We now present Example \ref{example:best-PC-can-be-better} to show it is not the case. 

\label{app:b}
\begin{example}
    \label{example:best-PC-can-be-better}
    Let $\delta$ be a positive constant integer. Consider an energy market with $M=3\delta$ and $n = 3$ producers with  supplies $s_1=1/2,s_2=3/4,s_3=1/4$ and marginal costs $c_1=0,c_2=0,c_3=\delta$. 
\end{example}

\noindent  We first notice that the bidding profile $\vecb=(\delta,0,\delta)$ is a mixed NE in PC. And it provides the lowest price in PC due to Theorem \ref{thm:LB-PC}. Next, we will show that the price given by the best-case NE in PB is strictly greater than $\delta$. In this instance, we have $1\preceq_\mathbf{c}\tau(\mathbf{c})$ and $2\preceq_\mathbf{c}\tau(\mathbf{c})$. And $\low{b}_2=\delta$, while $\low{b}_1=\delta/2$. Due to a similar argument in theorem \ref{thm:LB-PB}, all agents bid at least $\max\set{\low{b}_1,\low{b}_2}$ in any mixed NE of PB. We only need to show that any bidding profile $\vecb$ with $b_1=\delta$ and $b_2=\delta$ is not an NE. It is clear that $U_2(\vecb)=\delta/2<3\delta/4\leq \max_{b'_2\in[M]}U_2(b'_2=3\delta,\mathbf{c}_{-2})$. Due to Lemma \ref{lem:least-utility-PB}, agent $2$ can increase the utility by bidding $\high{b}_2=3\delta$ regardless of the strategy of agent $3$. Therefore, $\vecb$ cannot be an NE.

\section{Further Experimental Evaluations}
\label{a:exp}
\noindent In this section, we present additional experimental evaluations of the Hedge algorithm in various typical instances.
\begin{figure}[htbp]
  \centering
  \begin{minipage}[t]{0.48\textwidth}
    \centering
    \includegraphics[height=4cm]{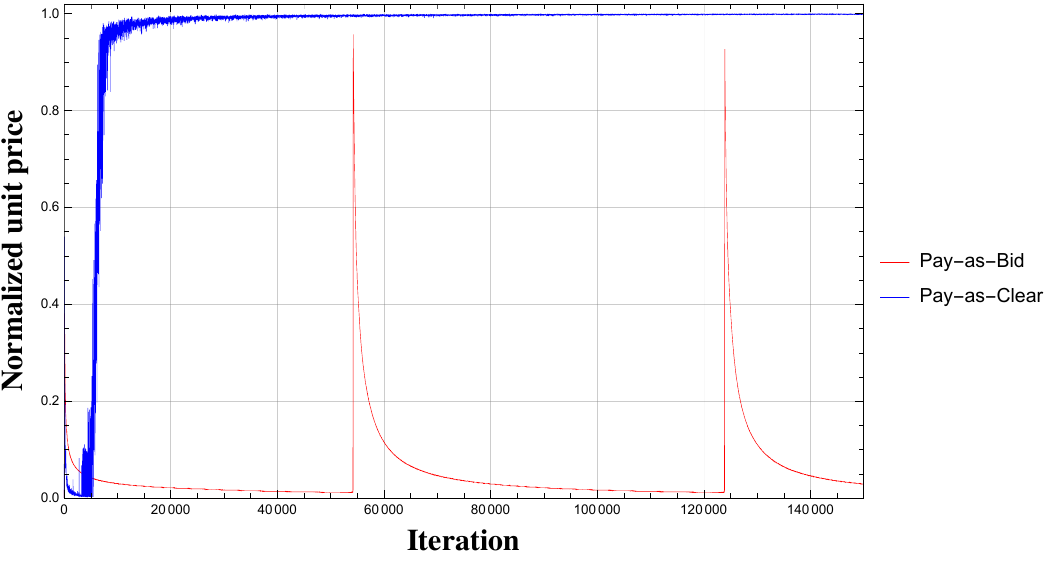}
    \caption{The normalized unit price changing by iteration in an energy market with $M=800$ and $n = 2$ producers with supplies $s_1=s_2=0.99$ and marginal costs $c_1=c_2=0$. For every agent, we have $\high{b}_i=800$ and $\low{b}_i=9$. 
    }
    \label{fig:s2agent}
  \end{minipage}
  \hfill
  \begin{minipage}[t]{0.48\textwidth}
    \centering
    \includegraphics[height=4cm]{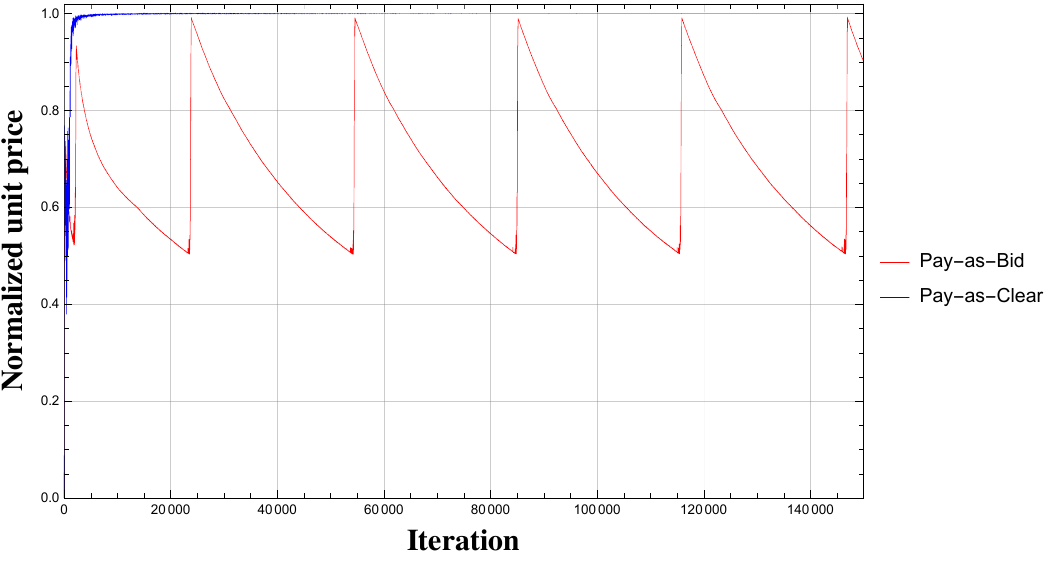}
    \caption{The normalized unit price changing by iteration in an energy market with $M=800$ and $n = 3$ producers with supplies $s_1=s_2=s_3=0.4$ and marginal costs $c_1=c_2=c_3=0$. For every agent, we have $\high{b}_i=800$ and $\low{b}_i=400$.}
    \label{fig:s3agent}
  \end{minipage}
\end{figure}
\begin{figure}[h]
  \centering
  \begin{minipage}[t]{0.48\textwidth}
    \centering
    \includegraphics[height=4cm]{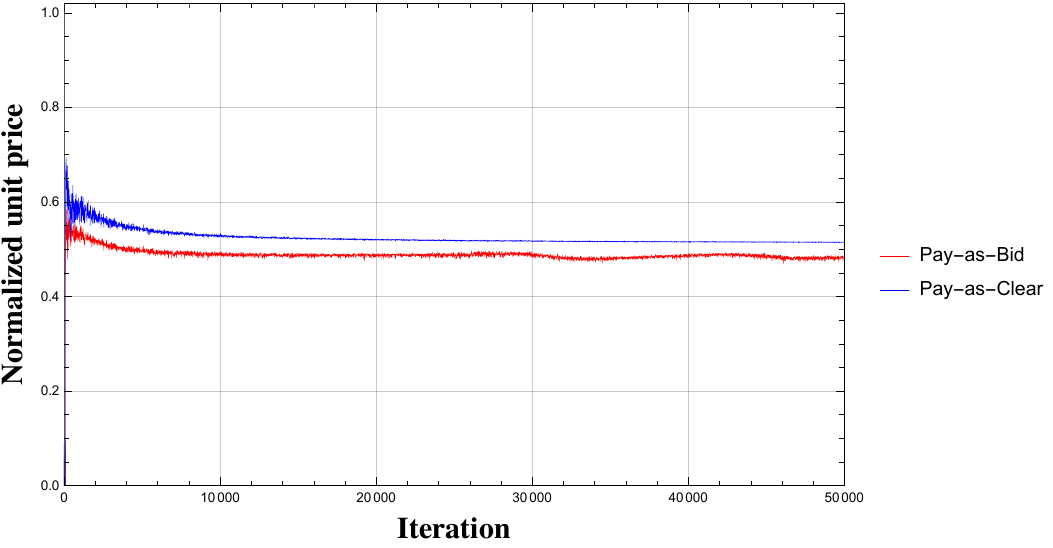}
    \caption{The normalized unit price changing by iteration in the energy market with $M=1000$ and $n = 5$ producers with randomly sampled supplies $s_1=0.72,s_2=0.15,s_3=0.47,s_4=0.96, s_5=26$ and marginal costs $c_1=390,c_2=280,c_3=30,c_4=510,c_5=680$.
    The cyclical pattern in PB is not so clear since $\max_{i\preceq_\mathbf{c}\tau(\mathbf{c})} \high{b}_i=511$ and $\max_{i\preceq_\mathbf{c}\tau(\mathbf{c})} \low{b}_i=453$ is relatively close.
    }
    \label{fig:s5agent-random}
  \end{minipage}
  \hfill
  \begin{minipage}[t]{0.48\textwidth}
    \centering
    \includegraphics[height=4cm]{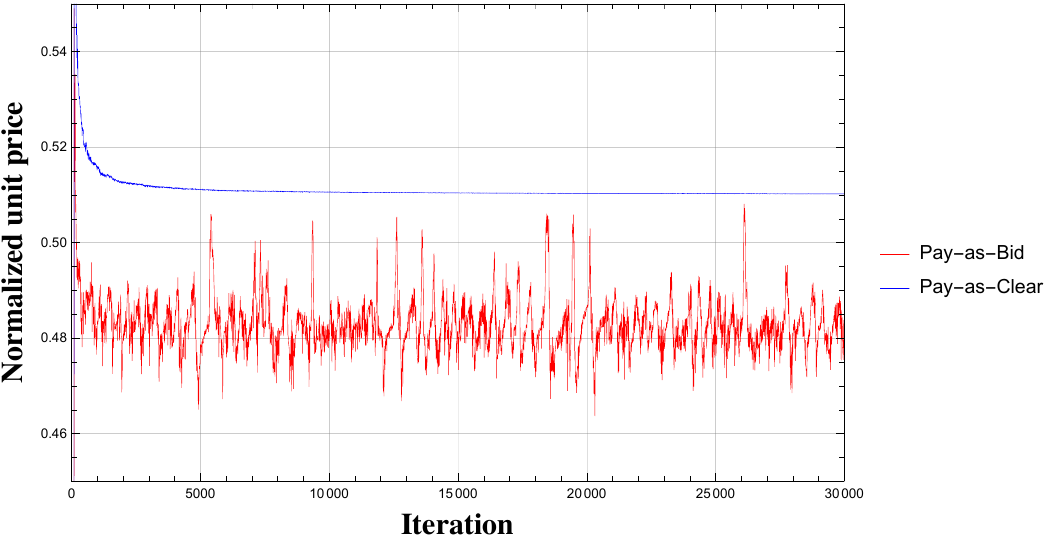}
    \caption{The normalized unit price changing by iteration in the energy market in Figure \ref{fig:s5agent-random} with scaled strategy space $M=10000$ and $n = 5$ producers with randomly sampled supplies $s_1=0.72,s_2=0.15,s_3=0.47,s_4=0.96, s_5=26$ and marginal costs $c_1=3900,c_2=2800,c_3=300,c_4=5100,c_5=6800$. We have $\max_{i\preceq_\mathbf{c}\tau(\mathbf{c})} \high{b}_i=5101$ and $\max_{i\preceq_\mathbf{c}\tau(\mathbf{c})} \low{b}_i=4521$. The figure zooms into the interval between $4500$ and $5500$.}
    \label{fig:s5agent-random-zoomed-in}
  \end{minipage}
\end{figure}

\begin{figure}[htbp]
\begin{minipage}[t]{0.48\textwidth}
    \centering
    \includegraphics[height=4cm]{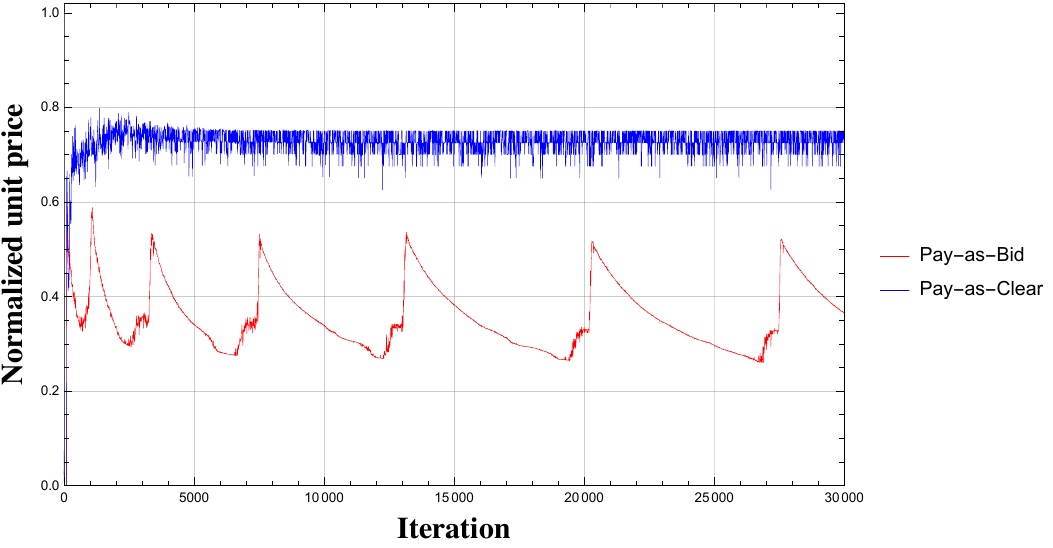}
    \caption{The normalized unit price changing by iteration in the energy market with $M=800$ and $n = 4$ producers with supplies $s_1=0.75,s_2=0.75,s_3=0.1,s_4=0.05$ and marginal costs $c_1=0,c_2=100,c_3=400,c_4=600$. We have that $\max_{i\preceq_{\mathbf{c}}\tau(\mathbf{c})}\high{b}_i=601$ and $\max_{i\preceq_{\mathbf{c}}\tau(\mathbf{c})}\low{b}_i=134$. The price in PC is volatile since the market converges to a mixed NE. In this instance, agent $2$ has two best-response leads to the same utility, and both of them are higher than $\max_{i\preceq_{\mathbf{c}}\tau(\mathbf{c})}\low{b}_i$.}
    \label{fig:a4agent}
  \end{minipage}
  \hfill
  \begin{minipage}[t]{0.48\textwidth}
    \centering
    \includegraphics[height=4cm]{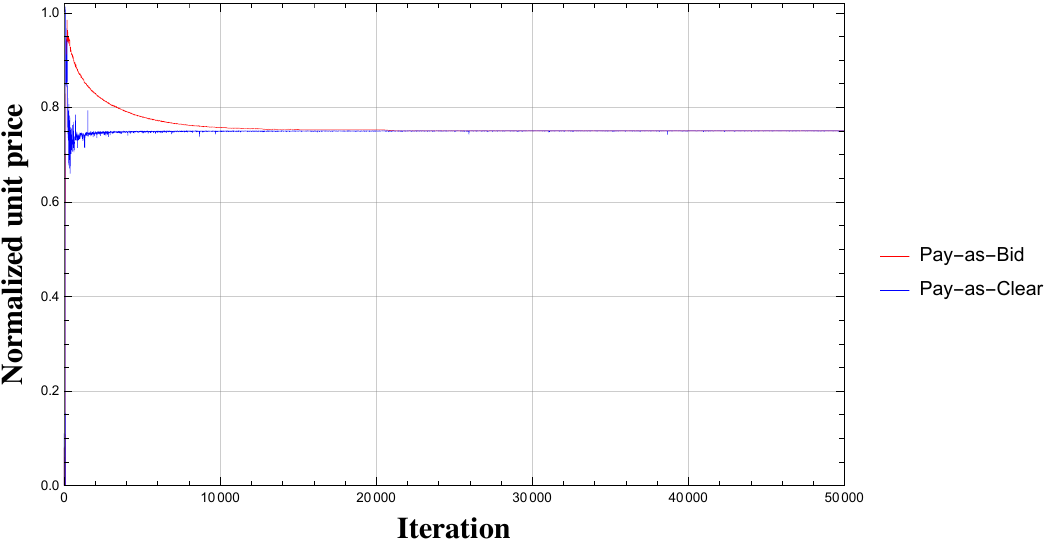}
    \caption{The normalized unit price changing by iteration in the energy market with $M=1000$ and $n = 5$ producers with supplies $s_1=s_2=s_3=s_4=0.25,s_5=0.11$ and marginal costs $c_1=0=c_2=c_3=c_4=0,c_5=600$. We have that $\max_{i\preceq_{\mathbf{c}}\tau(\mathbf{c})}\high{b}_i=601$ and $\max_{i\preceq_{\mathbf{c}}\tau(\mathbf{c})}\low{b}_i=600$. This instance is a degenerate example where $\max_{i\preceq_{\mathbf{c}}\tau(\mathbf{c})}\low{b}_i+1=\max_{i\preceq_{\mathbf{c}}\tau(\mathbf{c})}\high{b}_i$. In such degenerate instances, the unit price provided by worst NE in PB and PC is the same.}
    \label{fig:degerate-instance}
  \end{minipage}
\end{figure}

\end{document}